\newtheorem{theorem}{Theorem}
\newtheorem{lemma}{Lemma}
\newtheorem{proposition}{Proposition}
\newtheorem{assumption}{Assumption}
\newtheorem{remark}{Remark}
\def\E{\mathbb E}
\def\P{\mathbb P}
\def\R{\mathbb R}
\newcommand{\trans}{^{\mbox{\tiny {\sf T}}}}
\newcommand\independent{\protect\mathpalette{\protect\independenT}{\perp}}
\def\independenT#1#2{\mathrel{\rlap{$#1#2$}\mkern2mu{#1#2}}}
\def\E{\mathbb E}
\def\P{\mathbb P}
\def\R{\mathbb R}
\def\epsilonbf{\mathbf \epsilon}
\newcommand{\Acal}{{\mathcal A}}
\newcommand{\Hcal}{{\mathcal H}}
\newcommand{\Xcal}{{\mathcal X}}
\def\independenT#1#2{\mathrel{\rlap{$#1#2$}\mkern2mu{#1#2}}}
\begin{document}

\begin{center}
{\Large{\bf Kernel Knockoffs Selection for}} \\
\Large{{\bf Nonparametric Additive Models}} \\
\medskip
\medskip
\medskip

%{\large{\sc X and Y}} \\
%\medskip
%{\normalsize {\it University of Z}}
%\medskip
%\end{center}

{\large{\sc Xiaowu Dai, Xiang Lyu, and Lexin Li}} \\
\medskip
{\normalsize {\it University of California, Berkeley}}\\
\medskip
\medskip
{\normalsize To appear in  {\it Journal of the American Statistical Association: Theory and Method}}
\end{center}

\begin{footnotetext}[1]
{\textit{Address for correspondence:} Lexin Li, Department of Biostatistics and Epidemiology, University of California, Berkeley, 2121 Berkeley Way, Berkeley, CA 94720, USA. E-mail: lexinli@berkeley.edu. The first two authors contribute equally to the paper.}
\end{footnotetext}

\begin{abstract}
Thanks to its fine balance between model flexibility and interpretability, the nonparametric additive model has been widely used, and variable selection for this type of model has been frequently studied. However, none of the existing solutions can control the false discovery rate (FDR) unless the sample size tends to infinity. The knockoff framework is a recent proposal that can address this issue, but few knockoff solutions are directly applicable to nonparametric models. In this article, we propose a novel kernel knockoffs selection procedure for the nonparametric additive model. We integrate three key components: the knockoffs, the subsampling for stability, and the random feature mapping for nonparametric function approximation. We show that the proposed method is guaranteed to control the FDR for any sample size, and achieves a power that approaches one as the sample size tends to infinity. We demonstrate the efficacy of our method through intensive simulations and comparisons with the alternative solutions. Our proposal thus makes useful contributions to the methodology of nonparametric variable selection, FDR-based inference, as well as knockoffs.
\end{abstract}
\bigskip

\noindent
{\bf Key Words}: False discovery rate; Knockoffs; Nonparametric additive models; Reproducing kernel Hilbert space; Subsampling; Variable selection.

\newpage
\baselineskip=21pt

%%%%%%%%%%%%%%%%%%%%%%%%%%%%%%%%%%%%%%%%%%%%%%%%%%%
\section{Introduction}
\label{sec:introduction}

In the past decades, the nonparametric additive model has been widely used in statistics and machine learning, thanks to its fine balance between model flexibility and model interpretability \citep{stone1985additive, hastie1990generalized, wood2017generalized}. For a univariate response variable $Y \in \R$ and $p$ predictor variables $\bX=(X_1,\ldots,X_p)\trans \in \cX^p \subseteq \R^p$, the model postulates that, 
\begin{equation} \label{eq: additivemodel}
Y = \mu+\sum_{j=1}^pf_j(X_j)+\epsilon,
\end{equation}
where $\mu$ is the intercept, $f_j : \cX \mapsto \R$, with $\E_{X_j}[f_j(X_j)]=0$, $j=1,\ldots,p$, are the component functions that are modeled nonparametrically, and $\epsilon\sim\cN(0,\sigma^2)$ is the random error, with unknown $\sigma$. Furthermore, we assume throughout this article that the component functions $f_j$'s reside in a reproducing kernel Hilbert space \citep[RKHS,][]{aronszajn1950theory, wahba1990spline}. 

Variable selection for the nonparametric additive model dates back to \citet{LZ06}, and has seen substantial developments ever since \citep[][among others]{meier2009high, RLLW09, HHW10, koltchinskii2010sparsity,wood2017generalized}. In particular, \citet{LZ06} proposed a component selection and smoothing operator (COSSO) penalty that extends the Lasso penalty to the nonparametric additive model, and penalized the sum of the reproducing kernel Hilbert space norms of the component functions. Meanwhile, \citet{meier2009high} and \citet{RLLW09} both employed basis expansion, and penalized the sparsity and smoothness seminorms. \citet{HHW10} employed the group Lasso penalty to obtain an initial estimator and to reduce the dimension of the problem, then employed the adaptive group Lasso to select nonzero components. This family of methods guarantee the asymptotic optimality of the function estimation and the selection consistency as the sample size tends to infinity. However, none has achieved the control of false discovery rate (FDR) unless the sample size tends to infinity. There has been another family of solutions that target simultaneous testing of multiple hypotheses and concentrate on controlling some forms of false discovery \citep[see, e.g.,][among others]{BH95, Efron2001, Storey2007, SunCai2007, SunCai2009}; see also \citet{CaiSun2017} for a review. Nevertheless, none of the existing solutions in this family directly addresses the problem of variable selection for the nonparametric additive model while controlling the false discovery at the same time. 

More recently, \citet{BC15} proposed a powerful framework called knockoffs that effectively controls the FDR for variable selection in the linear model under the finite-sample setting, in the sense that the sample size does not have to go to infinity. The key idea is to construct a set of so-called ``knockoff variables" that are not associated with the response conditioning on the original variables, while the structure of the knockoff variables mimics that of the original ones. It then computes an importance score for each variable, and selects those that have considerably higher scores than their knockoff counterparts. There have then been numerous generalizations of this work; see \citet{BCS20} and many references therein. Related to our target of high-dimensional nonparametric additive model, \citet{BC19} considered the high-dimensional fixed design, and focused on the linear model only. \citet{DB16} developed an ``expansion first" strategy that performs feature expansion first then constructs the knockoffs based on the expanded features, and proposed to employ a group Lasso penalty for subsequent variable selection. They actually still studied the linear regression setting; however, their proposal can, in principle, be extended to the nonparametric additive model. \cite{candes2018panning} proposed a model-X knockoffs extension that works for random designs of predictors and allows the conditional distribution of the response given the predictors to be arbitrary and unknown, though they mostly focused on the step of how to generate the knockoff variables. \cite{FDLL20} further built on the model-X knockoffs  framework, and developed a knockoffs-based variable selection procedure that is applicable to the nonparametric additive model. It employs data splitting, uses half of the data to estimate the predictor precision matrix and screen the predictors, and uses the other half to perform knockoffs based on some empirical norm of the estimated component functions. These pioneering works have opened the door for knockoffs-based selection for the nonparametric additive model. However, some may be difficult to extend beyond the linear model, and others suffer a limited power or expensive computation. In addition, there is generally a lack of theoretical power analysis for the existing knockoffs-based methods, except for \citet{FDLL20} and \citet{weinstein2020power}, who made important first steps, but only studied the power behavior for the linear model.

In this article, we propose a novel kernel knockoffs selection procedure for the nonparametric additive model \eqref{eq: additivemodel}. We build on and integrate three key components: the knockoffs, the subsampling for stability, and the random feature mapping for nonparametric function approximation in RKHS. Specifically,  we employ the random feature mapping \citep{RR07} to approximate the component function $f_j$, and construct a projection operator between the RKHS and the original predictor space. Such a projection allows us to define an analog of the effect size of the individual predictor in the setting of nonparametric additive model. We then construct the importance score based on the projected component function $f_j$, instead of the original predictor $X_j$ or its knockoff. Moreover, we note that the random features may introduce additional stochastic errors, which can disturb the order of the variables entering the model and lead to both false positives and false negatives. We thus further employ the subsampling strategy to improve the selection stability \citep{MB10}. That is, we subsample the data and apply the random feature mapping multiple times, and compute the importance score as the difference of selection frequencies over subsampling replications between each predictor and its knockoff counterpart. We show that the proposed method is guaranteed to control the FDR below the nominal level under any sample size, and achieves a power that approaches one as the sample size tends to infinity. 

Our proposal makes useful contributions to the methodology and theory of nonparametric variable selection, FDR-based inference, as well as knockoffs. 

First, whereas the methods such as \citet{LZ06, RLLW09} have obtained the variable selection consistency asymptotically, there has been no existing method that controls the FDR in the setting of nonparametric additive model for the finite-sample setting. A low FDR in such a setting assures that most of the discoveries are indeed true under any given sample size. By contrast, the asymptotic control is valid only when the sample size goes to infinity, which can be problematic for the applications with limited sample sizes. In those cases, the asymptotic control may provide little guidance on quantifying the threshold of variable selection \citep{BC15}, and it is possible that the selected variables may still include many unrelated ones \citep{su2017false}. On the other hand, the classical reproducing kernel methods usually involve non-separable variables and their knockoffs, which renders the FDR control infeasible. To address this challenge, we employ the random feature mapping to ensure the exchangeability of the null variables and their knockoffs, and in turn achieve the finite-sample FDR control. The random feature mapping nevertheless introduces an extra layer of randomness. We further resort to subsampling, construct an importance score by averaging over multiple subsampling replications, and show these techniques can handle the extra randomness; see Section A.2 of the Appendix for more technical details. In short, the finite-sample FDR control for nonparametric variable selection has been a long-standing and open question, and our proposal is among the first solutions for this type of question.

Second, we employ the subsampling strategy, but it is different from the existing subsampling based methods for FDR control. Specifically, \citet{B08} proposed a selection method based on bootstrap replications, which may suffer from a limited power when the sample size does not go to infinity.  \citet{MB10} proposed a stability-based procedure to select the variables with the selection frequencies exceeding a threshold level over the entire solution path, which guarantees the control of the expected number of false positives, but may fail to control the FDR. \citet{LHPW13} proposed to use a mixture model for the distribution of selection frequencies, whereas \citet{ahmed2011false, he2016component} suggested to estimate this distribution via permutations. However, such a distribution estimation requires either strict parametric assumptions, or  expensive computations. By contrast, our method does not require estimation of the distribution of selection frequencies, but uses subsampling to tackle the extra randomness introduced by random feature mapping and to achieve the FDR control. 

Last but not least, our method expands the scope of the currently fast growing area of knockoffs. Compared to \citet{BC19} who focused on the high-dimensional linear model only, we generalize the knockoffs to the high-dimensional nonparametric additive model. Such an extension is far from incremental, as it has to deal with nonlinear dependency between the response and predictors, as well as the non-separability of variables of the usual reproducing kernel methods. Compared to \citet{DB16} who did ``expansion first", we adopt a ``knockoffs first" strategy, which leads to an easier construction of the knockoff variables, ensures a good statistical power, and is computationally more economical. Compared to \cite{candes2018panning} who developed the model-X framework, but can not handle the nonparametric additive model directly, and did not provide any formal theoretical justification for the model-X knockoffs beyond the generalized linear model setting, we propose a new and effective importance score based on random feature mapping and resampling, and we explicitly study the power behavior in the nonparametric setting. Finally, compared to \cite{FDLL20} who extended the model-X framework, and proposed a novel importance score based on the basis functions and some empirical norm of the estimated component functions, but only studied the power for the linear model, we again develop a new importance score that is built on the selection probability of the variables and their knockoffs. As a result, we show that our solution is more powerful and also more robust to the data distribution. Moreover, we establish the power guarantee for the nonparametric additive model, which is more challenging than the linear model case as it involves nonlinear associations. Toward that end, we employ some functional data analysis techniques and concentration inequalities for functional  empirical processes to study the spectral properties; see Section A.4 of the Appendix for more technical details. We also briefly comment that, our theoretical tools are applicable to the FDR control for more general nonparametric models, e.g., the functional analysis of variance type models that involve higher-order interactions \citep{wahba1995smoothing, LZ06}. Moreover, in Section \ref{sec:numerical}, we further compare with some of these key alternative knockoff solutions numerically, and demonstrate the advantages of our proposed method empirically as well.

The rest of the article is organized as follows. Section \ref{sec:background} formulates the problem. Section \ref{sec:KKO} develops the kernel knockoffs procedure. Section \ref{sec:theory} establishes the theoretical guarantees on the FDR and power. Section \ref{sec:numerical} presents the simulations, and also an analysis of brain imaging data. The Supplementary Appendix collects all proofs and some additional numerical results.

%%%%%%%%%%%%%%%%%%%%%%%%%%%%%%%%%%%%%%%%%%%%%%%%%%%
\section{Problem Setup}
\label{sec:background}

%%%%%%%%%%%%%%%%%%%%%%%%%%%%%%%%%%%%%%%%%%%%%%%%%%%
\subsection{Kernel learning}

Throughout this article, we consider regression functions that reside in an infinite-dimensional reproducing kernel Hilbert space. We begin with a Mercer kernel $K:\mathcal X\times \mathcal X\to\mathbb R$, 
\begin{equation*}
K(X,X')=\sum_{\nu=1}^\infty\widetilde{\lambda}_\nu\widetilde{\psi}_\nu(X)\widetilde{\psi}_\nu(X'),
\end{equation*}
where $\{\widetilde{\psi}_\nu\}_{\nu=1}^\infty$ are eigenfunctions, $\{\widetilde{\lambda}_\nu\}_{\nu=1}^\infty$ are eigenvalues of the integral operator defined by the kernel function, and $\widetilde{\lambda}_\nu\widetilde{\psi}_\nu(X) = \int_\cX K(X,X')\widetilde{\psi}_\nu(X')dX'$ \citep{mercer1909functions}. 
The domain $\mathcal X\subseteq \mathbb R$ can be either a compact or an unbounded space. We consider the RKHS $\cH_1$ generated by this kernel, which is defined as the closure of linear combinations of the basis functions $\{\widetilde\psi_\nu\}_{\nu=1}^\infty$ as follows, where  $\overline{\{\cdot\}}$ denotes the closure of a function space, 
\begin{equation*}
\cH_1 = \overline{\left\{f:\cX\to\R | f(X) = \widetilde{\bPsi}(X)\trans \widetilde{\bc},\text{ and }\|f\|_K<\infty\text{ with }\|f\|_K^2=\sum_{\nu=1}^\infty\frac{\widetilde{c}_\nu^2}{\widetilde{\lambda}_\nu}\right\}}.
\end{equation*}
Here $\widetilde{\bPsi}(X)$ is an infinite-dimensional vector with the $\nu$th element equal to $\sqrt{\widetilde{\lambda}_\nu}\widetilde{\psi}_\nu(X)$, and $\widetilde{\bc}$ is an infinite-dimensional coefficient vector with the $\nu$th element $\widetilde{c}_\nu$, $\nu = 1, 2, \ldots$.

Next, define the kernel $K_p:\cX^p\times\cX^p\to\R$, 
\begin{equation*}
K_p\Big( (X_1,\ldots,X_p)\trans,(X'_1,\ldots,X_p')\trans \Big) = K(X_1,X_1') + \ldots + K(X_p,X_p'). 
\end{equation*}
The RKHS $\cH_p$ generated by $K_p$ is of the form \citep{aronszajn1950theory}, 
\begin{align*}
\cH_p = \cH_1 \oplus \ldots \oplus \cH_1 = \Big\{f : \mathcal X^p \to \mathbb R \; | \; f(\bX) = f(X_1,\ldots,X_p) = f_1(X_1) + \ldots + f_p(X_p),\\
f_j\in\mathcal H_1, \text{ and } \mathbb E[f_j(X_j)]=0, \; j=1,\ldots,p\Big\}.
\end{align*}

Suppose the observed training data $\{ (\bx_i, y_i) \}_{i=1}^{n}$ consist of $n$ i.i.d.\ copies of $(\bX, Y)$ following the nonparametric additive model \eqref{eq: additivemodel}, with $\bx_i \in \R^p, y_i \in \R$. The representer theorem \citep{wahba1990spline} shows that the solution to the kernel learning problem when restricting $f\in\mathcal H_p$, 
\begin{equation*}
\underset{f\in\mathcal H_p}{\min}\left[n^{-1} \sum_{i=1}^n\mathcal L(f(\bx_i),y_i)+\lambda\|f\|_{K_p}^2\right], 
\end{equation*}
for some loss function $\mathcal L$, the kernel $K_p$, and the penalty parameter $\lambda$, is of the form, 
\begin{equation*} \label{eqn:krrsolution} 
\widetilde{f}(\bX) = \sum_{i=1}^n\alpha_iK_p(\bX,\bx_i),
\end{equation*}
where $\balpha = (\alpha_1, \ldots, \alpha_n)\trans \in \R^n$ are the corresponding coefficients. This in effect turns an infinity-dimensional optimization problem to an optimization problem over $n$ parameters. This minimizer can be further written as, for any $\bX\in\cX^p$, 
\begin{equation} \label{eqn:representer}
\widetilde{f}(\bX) = \widetilde{\bPsi}_p(\bX)\trans \widetilde{\bc}_p,
\end{equation} 
where $\widetilde{\bPsi}_p(\bX) = \left[ \widetilde{\bPsi}(X_1)\trans,\ldots,\widetilde{\bPsi}(X_p)\trans \right]\trans$ assembles $\widetilde{\bPsi}(X_j)$'s and is an infinite-dimensional vector, and $\widetilde{\bc}_p = \Big[ \widetilde{\bPsi}_p(\bx_1),\ldots,\widetilde{\bPsi}_p(\bx_{n}) \Big] \balpha$ is the infinite-dimensional coefficient vector.

%%%%%%%%%%%%%%%%%%%%%%%%%%%%%%%%%%%%%%%%%%%%%%%%%%%
\subsection{Variable selection for nonparametric additive models}
\label{sec:vs-namodel}

Next, we formally frame variable selection in the context of nonparametric additive models. We say a variable $X_j$ is null if and only if $Y$ is independent of $X_j$ conditional on all other variables $\bX_{-j} = \{X_1,\ldots,X_p\}\backslash\{X_j\}$, i.e., $Y \independent X_j | \bX_{-j}$, and say $X_j$ is non-null otherwise \citep{LiCN2005}. Let $\cS \subseteq\{1,\ldots,p\}$ denote the indices of all the non-null variables, and $\cS^\perp \subseteq\{1,\ldots,p\}$ the indices of all the null variables, or equivalently, the complement set of $\cS$. Let $|\cdot|$ denote the cardinality, and $\widehat{\mathcal S}$ the indices of variables selected by some selection procedure. Our goal is to discover as many non-null variables as possible while controlling the FDR, which is defined as,
\begin{equation*}
\text{FDR} = \mathbb E\left[\frac{|\widehat{\mathcal S}\cap\mathcal S^\perp|}{|\widehat{\mathcal S}|\vee 1}\right].
\end{equation*} 

We next establish the identifiability of the problem under the following condition. 

\begin{assumption}[Irrepresentable Condition in RKHS]
\label{eqn:perfect-relationship}
For any $j\in\{1,\ldots,p\}$, and any functions $g_k\in\mathcal H_1, k \neq j$, $f_j(X_j) \neq \sum_{k=1; k \neq j}^p g_k(X_k)$. 
\end{assumption}

\noindent
This condition simply says that the component function $f_j(X_j)$ in model \eqref{eq: additivemodel} can not be strictly written as a linear combination of some functions of other variables $X_k, k \neq j$. This is a fairly mild condition, and its parametric counterpart that $X_j \neq \sum_{k=1;k\neq j}^p\beta_kX_k$ for any $\beta_k\in\mathbb R$ has been commonly imposed in the linear model scenario \citep{candes2018panning}.

Under this condition, we establish the equivalence between variable selection and selection of the component functions $f_j$ in model \eqref{eq: additivemodel}. In other words, testing the hypothesis that $X_j$ is null is the same as testing whether $f_j = 0$.

\begin{proposition}
\label{prop:nullequivalence}
Suppose the nonparametric additive model \eqref{eq: additivemodel} and Assumption \ref{eqn:perfect-relationship} hold. Then $j \in \mathcal S^\perp$ if and only if $f_j = 0$, for  $j=1,\ldots,p$.
\end{proposition}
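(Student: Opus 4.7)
The plan is to handle the two directions separately, with the forward direction being essentially immediate from the model and the reverse direction requiring the irrepresentable condition.

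For the forward direction, suppose $f_j = 0$. Then by model \eqref{eq: additivemodel}, $Y = \mu + \sum_{k\neq j} f_k(X_k) + \epsilon$, so the conditional law of $Y$ given $\bX$ coincides with its conditional law given only $\bX_{-j}$, because $\epsilon \sim \cN(0,\sigma^2)$ is independent of $\bX$. Hence $Y \independent X_j \mid \bX_{-j}$ and $j\in\cS^\perp$.

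For the reverse direction, suppose $j\in\cS^\perp$, i.e., $Y \independent X_j \mid \bX_{-j}$. I would first compare the two conditional means. Using $\E[\epsilon\mid\bX]=0$, the model gives $\E[Y\mid\bX] = \mu + \sum_{k=1}^p f_k(X_k)$, while conditional independence yields $\E[Y\mid\bX] = \E[Y\mid\bX_{-j}] = \mu + \sum_{k\neq j} f_k(X_k) + \E[f_j(X_j)\mid\bX_{-j}]$ almost surely. Subtracting gives the key identity
\begin{equation*}
f_j(X_j) \;=\; \E[f_j(X_j)\mid\bX_{-j}] \quad \text{a.s.}
\end{equation*}
Applying the law of total variance to this identity forces $\mathrm{Var}(f_j(X_j)\mid\bX_{-j})=0$ a.s., so $f_j(X_j)$ coincides almost surely with a deterministic measurable function $h(\bX_{-j})$ of the remaining coordinates.

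The main obstacle, and where Assumption \ref{eqn:perfect-relationship} enters, is converting the measurable equality $f_j(X_j) = h(\bX_{-j})$ into the stated contradiction with an additive $\cH_1$ representation. My plan is to project $h(\bX_{-j})$ onto the additive RKHS $\cH_1\oplus\cdots\oplus\cH_1$ over $\bX_{-j}$, producing components $g_k\in\cH_1$ for $k\neq j$, and to argue that, because $f_j(X_j)\in\cH_1$ is itself already an element of the additive space and equals $h(\bX_{-j})$ a.s., this equality descends to $f_j(X_j) = \sum_{k\neq j} g_k(X_k)$ a.s., which directly violates Assumption \ref{eqn:perfect-relationship} unless $f_j \equiv 0$. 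Combined with the centering constraint $\E[f_j(X_j)]=0$, this forces $f_j=0$, completing the equivalence. The technical care needed in the projection step (ensuring $h$ lies in, or can be represented within, the additive RKHS rather than a broader $L^2$ space) is the delicate piece I would expect to require the most attention.
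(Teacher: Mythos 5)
Your forward direction is fine and essentially the same as the paper's: with $f_j=0$ the conditional law of $Y$ given $\bX$ already depends only on $\bX_{-j}$, hence conditional independence holds. The paper phrases this via factorization of the conditional density $p_{Y,X_j\mid\bX_{-j}}$, but the content is identical.

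In the reverse direction you take a different and in fact more elementary device — matching conditional means rather than full conditional densities. Because the error is Gaussian with a fixed variance $\sigma^2$, the conditional law of $Y$ given $\bX$ is determined by its mean, so the two are equivalent here, and you correctly arrive at the intermediate identity $f_j(X_j)=\E\bigl[f_j(X_j)\mid\bX_{-j}\bigr]$ a.s., i.e.\ $f_j(X_j)=h(\bX_{-j})$ a.s.\ for some measurable $h$. (The appeal to the law of total variance is harmless but superfluous: the identity already exhibits $f_j(X_j)$ as $\sigma(\bX_{-j})$-measurable.) The paper reaches the same intermediate point through the ``interaction term''
$\exp\bigl(\sigma^{-2}f_j(x_j)\bigl(y-\sum_{k\neq j}f_k(x_k)\bigr)\bigr)$
in the Gaussian likelihood and the requirement that $p_{Y\mid\bX}$ must not depend on $x_j$.

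The genuine gap is in your final projection step, and you have correctly flagged it as the delicate piece. Projecting $h(\bX_{-j})$ onto the additive space $\bigoplus_{k\neq j}\cH_1$ gives some $\sum_{k\neq j}g_k(X_k)$, and by linearity this equals the projection of $f_j(X_j)$ onto that space; it does \emph{not} give $f_j(X_j)=\sum_{k\neq j}g_k(X_k)$ unless $f_j(X_j)$ already lies in $\bigoplus_{k\neq j}\cH_1$ as an element of $L_2$, which is precisely what you would need to prove. Your remark that ``$f_j(X_j)\in\cH_1$ is itself already an element of the additive space'' conflates the copy of $\cH_1$ acting on $X_j$ with the additive space over $\bX_{-j}$; these are different subspaces of $L_2$, and a general measurable $h$ need not lie in the latter. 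So the claimed descent to an additive-$\cH_1$ representation, and hence the contradiction with Assumption~\ref{eqn:perfect-relationship}, is not obtained. The paper sidesteps this by invoking Assumption~\ref{eqn:perfect-relationship} directly at the level of the interaction term to force $f_j=0$ rather than attempting an explicit projection-based representation; to repair your version you would need to argue that when $f_j(X_j)$ is a.s.\ $\sigma(\bX_{-j})$-measurable and $f_j\in\cH_1$ with $\E f_j(X_j)=0$, the only way to avoid an additive-$\cH_1$ representation in the remaining variables (which Assumption~\ref{eqn:perfect-relationship} forbids) is $f_j\equiv 0$, and that argument is not supplied.
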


\noindent
Proposition \ref{prop:nullequivalence} makes the variable selection in a nonparametric additive model comparable to that in a linear model, and is to serve as the foundation for the new kernel knockoffs procedure in Section \ref{sec:KKO}, and the finite-sample FDR control in Section \ref{sec:theory}, both of which are built upon the selection of the component functions $f_j$'s. We next develop the selection procedure for model \eqref{eq: additivemodel} that is capable of controlling the FDR below any given nominal level $q \in (0,1)$ under any sample size, while achieving a good power at the same time.

%%%%%%%%%%%%%%%%%%%%%%%%%%%%%%%%%%%%%%%%%%%%%%%%%%%
\section{Kernel Knockoffs Procedure}
\label{sec:KKO}

%%%%%%%%%%%%%%%%%%%%%%%%%%%%%%%%%%%%%%%%%%%%%%%%%%%
\subsection{Algorithm}
\label{sec:algorithm}

Our kernel knockoffs selection procedure consists of six main steps. Step 1 is to generate the knockoff variables. Step 2 is to subsample without replacement half of the sample observations. Step 3 is to construct the random features for both the original and knockoff variables. Step 4 is to solve the coefficient vector through a group Lasso penalized regression based on the subsamples, which in effect leads to the selection of a set of important variables. In addition, Steps 2 to 4 are carried out repeatedly over a number of subsampling replications. Step 5 is to compute the importance score for each original variable, which is defined as the empirical selection frequency based on multiple subsampling replications. Finally, Step 6 is to apply a knockoff filter to the importance scores to produce the final set of selected variables under the given FDR level, as well as the final estimate of the component functions. We summarize our procedure in Algorithm \ref{alg: kernel_KO} first, then discuss each step in detail.

%%%%%%%%%%%%%%%%%%%%%%%%%%%%%%%%%%%%%%%%%%%%%%%%%%%
\subsection{Knockoff variable construction}
\label{sec:knockoffskernel}

A random vector $\widetilde{\bX}\in\R^p$ is said to be a knockoff copy of $\bX\in\R^p$ \citep{candes2018panning} if
\begin{equation} \label{eqn:exchange}
(\bX,\widetilde{\bX}) \overset{d}{=}(\bX,\widetilde{\bX})_{\text{swap}(\mathcal A)}, \; \text{ for any } \; \mathcal A\subseteq\{1,\ldots,p\}, \quad \textrm{ and } \quad Y\independent \widetilde{\bX} \ | \ \bX,
\end{equation}
where the symbol $\overset{d}{=}$ denotes the equality in distribution, and $\text{swap}(j)$ is the operator swapping $X_j$ with $\widetilde{X}_j$. For instance, if $p=3$ and $\mathcal A=\{1,3\}$, then $(X_1,X_2,X_3,\widetilde{X}_1,\widetilde{X}_2,\widetilde{X}_3)_{\text{swap}(\mathcal A)}$ becomes $(\widetilde{X}_1,X_2,\widetilde{X}_3,X_1,\widetilde{X}_2,X_3)$. In the variable selection literature, there are alternative methods that add pseudo-variables to help control the false positives in selection, e.g., by generating independent features, or permuting entries of the existing features \citep{miller2002subset, wu2007controlling}. Different from those methods, the knockoff framework has a unique property of exchangeability as given by \eqref{eqn:exchange}.

\begin{algorithm}[t!]
\caption{Kernel knockoffs selection procedure for nonparametric additive models} 
\begin{algorithmic}[1]
\STATE \textbf{Input}: Training data $\{(x_i,y_i)\}_{i=1}^n$, the number of random features $r$, the number of subsampling replications $L$, and the nominal FDR level $q\in[0,1]$.
\STATE \textbf{Step 1}: Construct the knockoff variables $\{\widetilde{\bx}_i\}_{i=1}^n$ to augment the original variables $\{\bx_i\}_{i=1}^n$ using the second-order knockoffs or the deep knockoffs machine.
\FOR{$\ell=1$ to $L$} 
\STATE \textbf{Step 2}: Draw without replacement to obtain a subsample $I_\ell \subset\{1,\ldots,n\}$ of size $\lfloor n/2\rfloor$.
\STATE \textbf{Step 3}: Sample $2p$ of i.i.d. $r$-dimensional random features $\{w_\nu,b_\nu\}_{\nu=1}^r$ by (\ref{eqn:fouriermc}), and construct the augmented random feature vector $\bPsi_{2p}(\bX)$ by (\ref{eqn:psi2p}).  
\STATE \textbf{Step 4}: Solve the coefficient vector $\widehat{\bc}_{2p}(I_\ell)$ by \eqref{eq: kernel_reg}, and record the selected variables.
\ENDFOR
\STATE \textbf{Step 5}: Compute the importance score by \eqref{eqn:importance}, i.e., the empirical selection frequency, $\{\widehat{\Pi}_{j}\}_{j\in [2p]}$ based on the $L$ estimates of $\{\widehat{\bc}_{2p}(I_\ell)\}_{\ell \in [L]}$. 
\STATE \textbf{Step 6}: Apply the knockoff filter by \eqref{eqn:filterT} at the nominal FDR level $q$.
\STATE \textbf{Output}: the set of selected variables $\widehat{\mathcal S}$, and the function estimate $\widehat{f}^{\text{RF}}(\bX)$.
\end{algorithmic} 
\label{alg: kernel_KO}
\end{algorithm}

There have been numerous ways proposed to construct the knockoff variables. We adopt two particular constructions, depending on the data. 

The first is the second-order knockoffs construction \citep{candes2018panning}, which generates the knockoffs by matching only the first two moments of the two distributions. In this case, $\widetilde{\bX}$ is a second-order knockoff copy of $\bX$ if
\begin{equation*}
\E[\bX] = \E[\widetilde{\bX}], \quad \text{ and } \quad \text{cov}[(\bX,\widetilde{\bX})] =\begin{bmatrix}
\bSigma       & \bSigma-\text{diag}(\bs) \\
\bSigma -\text{diag}(\bs)      & \bSigma
\end{bmatrix},
\end{equation*}
where $\bSigma$ is the covariance matrix of $\bX$, and $\bs$ is a $p$-dimensional vector such that $\text{cov}[(\bX,\widetilde{\bX})]$ is positive semi-definite. To ensure a good statistical power, $\bs$ should be chosen as large as possible, so that the original and knockoff variables are differentiable \citep{candes2018panning}. This strategy is implemented in practice by approximating the distribution of $\bX$ as the multivariate normal, and is employed in numerous knockoffs-based applications \citep{BCS20}.

The second is the deep knockoffs machine \citep{romano2019deep}, which generates the knockoff variables using deep generative models. The key idea is to iteratively refine a knockoff sampling mechanism until a criterion measuring the validity of the produced knockoffs is optimized. This strategy is shown to be able to match higher-order moments, and also achieve a better approximation of exchangeability. 

In our construction of knockoff variables, we employ the second-order knockoffs when there is clear evidence that the predictor variables approximately follow a multivariate normal distribution, and employ the deep knockoffs machine otherwise. Given the training samples $\left\{ (\bx_i, y_i) \right\}_{i=1}^{n}$, we first augment with the knockoff samples $\left\{ \widetilde{\bx}_i \right\}_{i=1}^{n}$, and form the data $\left\{ (\bx_i, \widetilde{\bx}_i, y_i) \right\}_{i=1}^{n}$, where $\bx_i = (x_{i,1},\ldots,x_{i,p})\trans\in\cX^p$, and $\widetilde{\bx}_i = (\widetilde{x}_{i,1},\ldots,\widetilde{x}_{i,p})\trans\in\R^p$.

%%%%%%%%%%%%%%%%%%%%%%%%%%%%%%%%%%%%%%%%%%%%%%%%%%%
\subsection{Random feature mapping}
\label{sec:rfe}

We next construct the random features for both original and knockoff variables. The key idea is to employ the random feature mapping \citep{RR07, buazuavan2012fourier} to approximate the kernel function, which enables us to construct a projection operator between the RKHS and the original predictor space. Specifically, if the kernel functions that generate $\mathcal{H}_1$ are shift-invariant, i.e., $K(X,X') = K(X-X')$, and integrate to one, i.e., $\int_\cX K(X-X') d(X-X') = 1$, then the Bochner's theorem \citep{bochner1934theorem} states that such kernel functions satisfy the Fourier expansion: 
\begin{equation*}
\begin{aligned}
K(X-X') &=\int_\R p(w)\exp\left\{\sqrt{-1}w(X-X')\right\}dw,
\end{aligned}
\end{equation*}
where $p(w)$ is a probability density defined by
\begin{equation*} \label{eqn:probdensityf}
p(w) = \int_\cX K(X)e^{-2\pi\sqrt{-1} wX}dX.
\end{equation*}
We note that many kernel functions are shift-invariant and integrate to one. Examples include the Laplacian kernel, $K(X,X')=c_1e^{-|X-X'|/b_1}$, the Gaussian kernel, $K(X,X')=c_2e^{-b_2^2|X-X'|^2/2}$, and the Cauchy kernel, $K(X,X')=c_3(1+b_3^2|X-X'|^2)^{-1}$, where $c_1,c_2,c_3$ are the normalization constants, and $b_1, b_2, b_3$ are the scaling parameters. It is then shown that \citep{RR07,buazuavan2012fourier} the minimizer in \eqref{eqn:representer} can be approximated by,
\begin{equation} \label{eqn:RF}
\widehat{f}^{\text{RF}}(\bX) = \bPsi_p(\bX)\trans \bc_p,
\end{equation}
where $\bPsi_p(\bX) = \left[ \bPsi(X_1)\trans,\ldots,\bPsi(X_p)\trans \right]\trans \in \R^{pr}$, and $\bPsi(X_j) = \left[ \psi_1(X_j),\ldots,\psi_r(X_j) \right]\trans \in \R^r$ is a vector of $r$ Fourier bases with the frequencies drawn from the density $p(w) $, i.e., 
\begin{equation}
\label{eqn:fouriermc}
\begin{aligned}
& \omega_{j,\nu} \overset{\text{i.i.d.}}{\sim} p(\omega), \quad\quad  b_{j,\nu} \overset{\text{i.i.d.}}{\sim} \textrm{Uniform}[0,2\pi], \\
&\psi_\nu(X_j) = \sqrt{\frac{2}{r}}\cos(X_j \omega_{j,\nu} + b_{j,\nu}), \quad\quad j=1,\ldots,p, \; \nu=1,\ldots,r.
\end{aligned}
\end{equation}

The use of random feature mapping achieves potentially substantially dimension reduction. More specifically, the estimator in (\ref{eqn:RF}) only requires to learn the $pr$-dimensional coefficient $\bc_p$, compared to the estimator in (\ref{eqn:representer}) that involves an infinite-dimensional vector $\widetilde{\bc}_p$. \citet{rudi2017generalization} showed that the random feature mapping obtains an optimal bias-variance tradeoff if $r$ scales at a certain rate and $r/n\to0$ when $n$ grows. They further proved that the estimator in (\ref{eqn:RF}) can achieve the minimax optimal estimation error. Beyond the estimation optimality, we note that the random feature mapping also efficiently reduces the computational complexity. That is, the computation complexity of the estimator in (\ref{eqn:RF}) is only $O(nr^2)$, compared to the computation complexity of the kernel estimator in \eqref{eqn:representer} that is $O(n^3)$. The saving of the computation is substantial if $r/n\to 0$ as $n$ grows. 

In our setting of kernel knockoffs selection, we construct the random features for both original and knockoff variables and obtain the augmented random feature vector as, 
\begin{equation} \label{eqn:psi2p}
\bPsi_{2p}(\bX) = \left(\bPsi(X_1)\trans,\ldots, \bPsi(X_p)\trans,\bPsi(\widetilde{X}_1)\trans,\ldots,\bPsi(\widetilde{X}_p)\trans \right)\trans \in \R^{2pr}, 
\end{equation}
where $\bPsi(X_j) = \left[ \psi_1(X_j),\ldots,\psi_r(X_j) \right]\trans \in \R^r$, and $\bPsi(\widetilde{X}_j) = \left[ \psi_1(\widetilde{X}_j),\ldots,\psi_r(\widetilde{X}_j) \right]\trans \in \R^r$, $j=1,\ldots,p$, are two sets of $r$-dimensional random features that are independently sampled from \eqref{eqn:fouriermc}. Then the minimizer in \eqref{eqn:representer} can be approximated by,
\begin{equation} \label{eqn:rfrepresentation}
\widehat{f}(\bX) = \Psi_{2p}(\bX)\trans \bc_{2p}.
\end{equation}

Meanwhile, we note that the randomness of the features generated from \eqref{eqn:fouriermc} may alter the ranking of variable significances. As such, we couple the random feature mapping with knockoffs and subsampling to achieve the desired FDR control and power.

%%%%%%%%%%%%%%%%%%%%%%%%%%%%%%%%%%%%%%%%%%%%%%%%%%%
\subsection{Resampling, importance score, and knockoff filtering}
\label{sec:resample-score-kko}

We adopt the subsampling scheme similarly as that in \citet{MB10, dumbgen2013stochastic}. Specifically, we subsample a subset of the training samples without replacement with size $n_s$, and let $I$ denote the corresponding subsample indices out of $\{1,\ldots,n\}$. We set $n_s = \lfloor n/2 \rfloor$, where $\lfloor n/2 \rfloor$ is the largest integer no greater than $n/2$. We then estimate the coefficient vector $\bc_{2p} = (c_1\trans, \ldots, c_{2p}\trans)\trans\in \R^{2pr}$ in (\ref{eqn:rfrepresentation}), in which each $c_j\in\R^r$ for $j=1,\ldots,2p$, via a group Lasso penalized regression based on the subsample $I$ of the observations, 
\begin{equation} \label{eq: kernel_reg} 
\begin{aligned}
\underset{\underset{j=1,\ldots,2p}{c_{j} \in \mathbb R^{r}}}{\min}~\frac{1}{|I|}\sum_{i\in I} \left[y_i - \bar{y}(I) - \sum_{j=1}^{p}\bPsi(x_{i,j})\trans c_{j} - \sum_{j=p+1}^{2p}\bPsi(\widetilde{x}_{i,j-p})\trans c_{j} \right]^2 + \tau \sum_{j=1}^{2p} \|c_{j}\|_2,
\end{aligned}
\end{equation}
where $\bar{y}(I)=\sum_{i\in I}y_i/|I|$ is the empirical mean, and $\tau\geq 0$ is the penalty parameter. Let $\widehat{\bc}_{2p}(I) = \left( \widehat{c}_1\trans(I),\ldots,\widehat{c}_{2p}\trans(I) \right)\trans$ denote the minimizer of \eqref{eq: kernel_reg}. We remark that the group Lasso penalty in \eqref{eq: kernel_reg} encourages the entire vector $c_j\in\R^{r}$ to be shrunk to zero, for $j = 1, \ldots, 2p$. Consequently, estimating $\bc_{2p}$ via \eqref{eq: kernel_reg} in effect leads to the selection of important variables among all $2p$ candidate variables $(X_1, \ldots, X_p, \widetilde{X}_1, \ldots, \widetilde{X}_p)$. We also remark that, our use of the group Lasso penalty in (\ref{eq: kernel_reg}) is different from \citet{HHW10}. Specifically, \citet{HHW10} used the B-spline basis for nonparametric function approximation, and used the group Lasso twice, first for obtaining an initial estimator and reducing the dimension of the problem, then for selecting the nonzero components. By contrast, we use the random feature mapping for nonparametric function approximation, and apply group Lasso for selection and finite-sample FDR control. These differences have different theoretical implications; for instance, the B-spline basis is usually orthonormal, whereas the random feature mapping is generally not orthogonal.  Our group Lasso penalty is also different from the COSSO penalty used in \citet{LZ06}, which takes the form $\sum_{j=1}^{p}\|\bPsi(x_{i,j})\trans c_j\|_{K}+\sum_{j=p+1}^{2p}\|\bPsi(\widetilde{x}_{i,j-p})\trans c_j\|_{K}$. Since the random feature mapping generally cannot form an orthogonal basis, there is no closed-form representation of the RKHS norms $\|\bPsi(x_{i,j})\trans c_j\|_{K}$ and $\|\bPsi(\widetilde{x}_{i,j-p})\trans c_j\|_{K}$ in our setting. As a result, the COSSO penalty is difficult to implement, and instead we adopt the group Lasso penalty in \eqref{eq: kernel_reg} that also yields the desired theoretical properties. 

Given the penalized estimate $\widehat{\bc}_{2p}(I)$, we obtain an estimate of the selected variable indices $\widehat{\mathcal S}(I) \subseteq \{1,\ldots,2p\}$. That is, for each $j \in \{1,\ldots,p\}$, $j\in\widehat{\mathcal S}(I)$ if $\widehat{c}_j(I)\neq \mathbf{0}$ and the original variable $X_j$ is selected, and $(j+p)\in\widehat{\mathcal S}(I)$ if $\widehat{c}_{j+p}(I)\neq \mathbf{0}$ and the knockoff variable $\widetilde{X}_j$ is selected. Then the probability of being in the selected set $\widehat{\mathcal S}(I)$ is
\begin{equation} \label{eqn:calofpi}
\widehat{\Pi}_j = \P\{j\in\widehat{\mathcal S}(I)\}, \;\; \text{ for } \; j=1,\ldots,2p,
\end{equation}
where $\P$ is with respect to both subsampling $I$ and the random features. We note that $\widehat{\Pi}_j$ can be estimated accurately using the empirical selection frequencies \citep{MB10}. Specifically, we repeat the above subsampling and coefficient estimation procedure $L$ times, each time for a subsample $I_\ell, \ell = 1,\ldots,L$. We then obtain the selected variable indices $\widehat{\cS}(I_\ell)$ for $I_\ell$, and compute \eqref{eqn:calofpi} using the empirical selection frequency as the percentage of times the $j$th variable, $j=1,\ldots,2p$, is included in $\{ \widehat{\cS}(I_\ell) \}_{\ell=1}^{L}$. 

Next, we define the importance score for the original variable $X_j$, $j=1,\ldots,p$, as, 
\begin{equation} \label{eqn:importance}
\Delta_j = \widehat{\Pi}_j- \widehat{\Pi}_{j+p}. 
\end{equation}
\noindent
We comment that $\Delta_j$ in \eqref{eqn:importance} is calculated for only one run of the knockoffs procedure, i.e., we generate the knockoffs only once. This is different from the derandomized knockoffs method recently proposed by \citet{RWC20}, which aggregates the selection results across multiple runs of knockoffs to reduce the randomness of the knockoff generation. 

Finally, given the target nominal FDR level $q$, we apply a knockoff filter \citep{BC15} to the importance scores to produce the final set of selected variables, 
\begin{equation} \label{eqn:filterT}
T = \min \left\{t\in\{|\Delta_j|:|\Delta_j|>0\}: \frac{\# \{j: \Delta_j \le -t \}}{\# \{ j : \Delta_j \ge t\}} \le q\right\} \quad (\text{knockoffs}).
\end{equation}
Set $T=\infty$ if the above set is empty. Another commonly used but slightly more conservative knockoff filter \citep{BC15,candes2018panning} is, 
\begin{equation} \label{eqn:knockoff+thres}
T_+ = \min \left\{t\in\{|\Delta_j|:|\Delta_j|>0\}: \frac{\# \{j: \Delta_j \le -t \}+1}{\# \{ j : \Delta_j \ge t\}} \le q\right\}\quad (\text{knockoffs+}).
\end{equation}
In our simulations, we have experimented with both filers, which produce very similar results, so we only present the results based on $T$.

Given the threshold value $T$, the final set of selected variables is, 
\begin{equation} \label{eqn:selectedRF}
\widehat{\mathcal S} = \Big\{ j\in\{1,\ldots,p\}:\Delta_j \geq T \Big\}.
\end{equation}
We then reestimate $\bc_p$ in (\ref{eqn:RF}) using all the sample observations as, 
\begin{equation*}
\widehat{\bc}_p^{\text{RF}} = \left( (\widehat{c}_1^{\text{RF}})\trans, \ldots, (\widehat{c}_{p}^{\text{RF}})\trans \right)\trans= \underset{{c_{j} \in \mathbb R^{r},j\in\widehat{\cS}}}{\arg\min}~\frac{1}{n}\sum_{i=1}^n \left[ y_i- \frac{1}{n}\sum_{i=1}^ny_i-\sum_{j\in  \widehat{\cS}}\bPsi(x_{i,j})\trans c_j \right]^2.
\end{equation*}
We obtain the final knockoffs-based kernel regression estimator as, 
\begin{equation} \label{eqn:newRFestimator}
\widehat{f}^{\text{RF}}(\bX) =\bPsi_p(\bX)\trans \widehat{\bc}_p^{\text{RF}}.
\end{equation}

%%%%%%%%%%%%%%%%%%%%%%%%%%%%%%%%%%%%%%%%%%%%%%%%%%%
\subsection{Parameter tuning}
\label{sec:tuning}

We next discuss the parameter tuning. We further carry out a sensitivity analysis in Section B.2, and a parallelization experiment in Section B.5 of the Appendix.
  
For the number of random features $r$, we start with an initial set $\Xi$ of candidate values for $r$. For each working rank $r \in \Xi$, we calculate the selection frequencies $\big\{ \widehat{\Pi}_{j,r} \big\}_{j\in[2p],r\in\Xi}$, and the standard deviation $ \widehat{\sigma}_r={\rm sd} \left( \big\{ \widehat{\Pi}_{j,r} \big\}_{j\in[2p]} \right)$, with a relatively small number for the subsampling replications. We then choose the value of $r \in \Xi$ that maximizes the following criterion that balances the selection standard deviation and model complexity,  
\begin{equation*} \label{eq: tune_object}
\widehat{r} = \argmax_{r \in \Xi}  2 p \widehat{\sigma}_r - \ln (r).
\end{equation*}
We have observed through our numerical simulations that, when we start from a small value of $r$, the selection frequencies of both original variables and their knockoffs counterparts are close to  zero. As $r$ increases, it starts to separate the truly important variables from the null variables and knockoffs, where the selection frequencies of those truly  important variables grow positively, and correspondingly, the standard deviation $\widehat{\sigma}_r$ increases. Meanwhile, the log penalty term helps balance the model complexity. 
  
For the regularization parameter $\tau$ in (\ref{eq: kernel_reg}), we choose it by minimizing the BIC criterion, 
\begin{equation*} \label{eqn:rssbic}
\widehat{\tau} = \argmin_{\tau\geq 0} \log[{\rm RSS}(\tau)] + r \frac{\log n }{n } | \widehat{\cS}(\tau) |, 
\end{equation*}
where $\rm{RSS}(\tau)$ is the cross-validation residual sum of squares, and $| \widehat{\cS}(\tau) |$ is the cardinality.  

For the number of subsampling replications $L$, our numerical experiments have found that $L=100$ results in a competitive performance in FDR control and power. For the subsampling sample size $n_s$, we have found that, when $n_s$ is no smaller than $\lfloor n/2 \rfloor$, the method performs well. We also comment that, the computation of our method can be easily parallelized, since it requires no information sharing across different subsamples.

%%%%%%%%%%%%%%%%%%%%%%%%%%%%%%%%%%%%%%%%%%%%%%%%%%%
\section{Theoretical Guarantees}
\label{sec:theory}

%%%%%%%%%%%%%%%%%%%%%%%%%%%%%%%%%%%%%%%%%%%%%%%%%%%
\subsection{FDR Control}
\label{sec:fdr-control}

We show that our proposed procedure controls the FDR under any given nominal level and any given sample size. Due to the intrinsic difficulty of the nonparametric additive model, we employ the random feature mapping to approximate the component function $f_j$, and construct a projection operator between the RKHS and the original predictor space. By construction, the random features of the knockoff variables have a similar structure mimicking the random features of the original variables, even though the knockoffs are not associated with the response conditioning on the original ones. Since the random features may introduce additional stochastic errors, which can disturb the order of the variables entering the model, we further employ the subsampling strategy to improve the selection stability. Finally, we compute the importance score for each variable based on the projected component function $f_j$, and select the variables that have considerably higher scores than their knockoff counterparts. Intuitively, such a procedure enjoys the finite-sample FDR control, similarly as the existing knockoff solutions \citep{BC15, candes2018panning}.

We first show that the importance score $\Delta_j$ in (\ref{eqn:importance}) has a symmetric distribution for a null variable $X_j \in \cS^\perp$, and is equally likely to be positive or negative. The symmetric property of the null variables is crucial for the knockoffs procedure, which then chooses a data-dependent threshold while having the FDR under control \citep{BC15}.
\begin{theorem} 
\label{prop:coinflip}
Suppose Assumption \ref{eqn:perfect-relationship} holds. Let $(s_1, \ldots, s_p)$ be a set of independent random variables, such that $s_j = \pm1$ with probability $1/2$ if $j \in \mathcal S^\perp$, and $s_j=1$ if $j \in \mathcal S$. Then, 
\begin{equation*}
(\Delta_1,\ldots, \Delta_p) \overset{d}{=} (\Delta_1 \cdot s_1, \ldots, \Delta_p \cdot s_p).
\end{equation*}
\end{theorem}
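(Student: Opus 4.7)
The plan is to run the standard knockoffs ``coin-flip'' argument, but lifted to the random-feature-augmented design, and then to handle the extra randomness introduced by the random features and the subsampling by coupling them symmetrically across each pair $(X_j,\widetilde X_j)$. Concretely, I would fix an arbitrary $\Acal\subseteq\Scal^\perp$, argue that swapping $X_j\leftrightarrow\widetilde X_j$ for every $j\in\Acal$ preserves the joint law of all the random quantities feeding into the algorithm, and then show that under the induced action this swap exactly exchanges $\widehat\Pi_j$ with $\widehat\Pi_{j+p}$, hence sends $\Delta_j\mapsto-\Delta_j$ for $j\in\Acal$. Since $\Acal$ is arbitrary among subsets of $\Scal^\perp$, this yields the sign-flip identity with independent Rademacher signs on the null coordinates and the identity on the non-null ones.

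Step 1 would be to establish the swap invariance at the data level. The knockoff exchangeability \eqref{eqn:exchange} gives $(\bX,\widetilde\bX)\stackrel{d}{=}(\bX,\widetilde\bX)_{\mathrm{swap}(\Acal)}$, and $Y\independent \widetilde\bX\mid\bX$. Combined with Proposition \ref{prop:nullequivalence}, which implies $f_j\equiv 0$ for $j\in\Scal^\perp$, the conditional distribution of $Y$ given $(\bX,\widetilde\bX)$ depends only on the non-null components of $\bX$, so swapping indices in $\Acal\subseteq\Scal^\perp$ leaves that conditional law unchanged as well. This upgrades the pairwise exchangeability into joint invariance of $\{(\bx_i,\widetilde\bx_i,y_i)\}_{i=1}^n\stackrel{d}{=}\{(\bx_i,\widetilde\bx_i,y_i)\}_{i=1}^n\big|_{\mathrm{swap}(\Acal)}$.

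Step 2 is to lift this invariance through the random-feature mapping. Because the pairs $\{(\omega_{j,\nu},b_{j,\nu})\}_{\nu=1}^r$ for $j=1,\dots,2p$ are drawn i.i.d.\ from the same product law $p(\omega)\otimes\mathrm{Uniform}[0,2\pi]$, their joint distribution is exchangeable in $j$; in particular, we may couple the features so that the parameters of the $j$-th and $(j+p)$-th blocks get swapped along with $X_j\leftrightarrow\widetilde X_j$, leaving the law of $(\bPsi(X_1)^\top,\dots,\bPsi(\widetilde X_p)^\top)$ unchanged up to the same block permutation. The subsampling indices $I_\ell$ are drawn independently of everything else and are therefore unaffected. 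Then Step 3 exploits the block symmetry of the group-Lasso objective \eqref{eq: kernel_reg}: simultaneously permuting the design columns and the coefficient blocks $(c_j,c_{j+p})$ leaves both the quadratic loss and the penalty $\sum_j\|c_j\|_2$ unchanged, so the minimizer is equivariant, and the selected index set satisfies $\widehat\Scal(I_\ell)|_{\mathrm{swap}(\Acal)}=\mathrm{swap}(\Acal)(\widehat\Scal(I_\ell))$. Taking probabilities over $I_\ell$ and the features then gives $(\widehat\Pi_1,\dots,\widehat\Pi_{2p})\stackrel{d}{=}(\widehat\Pi_1,\dots,\widehat\Pi_{2p})_{\mathrm{swap}(\Acal)}$, and hence $\Delta_j\mapsto-\Delta_j$ for $j\in\Acal$ and $\Delta_j\mapsto\Delta_j$ otherwise. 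Letting $\Acal$ range over all subsets of $\Scal^\perp$ delivers the claim.

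The main obstacle is Step 3: a priori the group-Lasso problem need not have a unique minimizer, so block-swap invariance of the objective must be translated into equivariance of the selector $\widehat\Scal(I_\ell)$. I would handle this either by arguing that the random-feature design is almost surely in general position (so the group-Lasso solution is generically unique, as in Tibshirani-type uniqueness results), or by defining the selection rule in a swap-symmetric way that is insensitive to the choice of minimizer on the null coordinates. A secondary, more bookkeeping-level difficulty is carefully tracking that the coupling of random features used in Step 2 is consistent across the $L$ subsampling replications, so that the empirical selection frequencies $\widehat\Pi_j$ used in \eqref{eqn:importance} inherit the swap symmetry; this reduces to a conditional-independence argument given the observed data.
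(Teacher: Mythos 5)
Your proposal follows essentially the same two-part strategy as the paper's proof: (i) a flip-sign (equivariance) property of the importance score under the pairwise column swap of the augmented design, and (ii) exchangeability in distribution of the augmented design plus response under swapping null indices, combined at the end by choosing $\mathcal A_3 = \{j : s_j = -1\}$. Your Step~3 is the paper's Step~1, and your Steps~1--2 together supply the paper's Step~2.

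Two remarks on where you are slightly more explicit than the paper, and where you should be careful. First, you are right to flag the random-feature coupling: the paper's line ``$\{\bPsi^\trans_{2p}(\bX)\}_{\mathrm{swap}(\Acal_2)} \overset{d}{=} \bPsi^\trans_{2p}(\bX)$'' uses simultaneously the knockoff exchangeability of $(\bX,\widetilde{\bX})$, the fact that the $2p$ blocks of feature parameters $\{(\omega_{j,\nu},b_{j,\nu})\}$ are i.i.d.\ across $j$, and the independence of those parameters from the data. The matrix-level swap in the paper's Step~1 effectively swaps $(X_j,\Omega_j) \leftrightarrow (\widetilde X_j,\Omega_{j+p})$ jointly; your ``coupling'' phrasing captures this correctly, though the cleaner statement is simply that $(X_j,\widetilde X_j,\Omega_j,\Omega_{j+p}) \overset{d}{=} (\widetilde X_j, X_j, \Omega_{j+p}, \Omega_j)$ by independence plus exchangeability of each factor. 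Second, your concern about uniqueness of the group-Lasso minimizer is reasonable but not a genuine gap: with continuous random-feature designs the solution is almost surely unique (general position), and the selection rule $\{j : \widehat c_j(I)\neq 0\}$ is then a well-defined measurable function of $(\bSigma(I),\by(I))$, which is all the paper's Step~1 needs. Your proposed fixes would also work, but this detail is routine enough that the paper leaves it implicit.
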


Next, we show that our selection procedure successfully controls the false discovery under any sample size. The result holds regardless of the distribution or the number of predictors, and does not require any knowledge of the noise level. The false discovery here is measured by both the FDR, and the modified FDR, which is defined as, 
\begin{equation*}
\text{mFDR} = \E\left[\frac{|\widehat{\mathcal S}\cap\mathcal S^\perp|}{|\widehat{\mathcal S}|+1/q}\right].
\end{equation*}
The definition of mFDR follows the knockoffs literature \citep{BC15}, with 1 replaced by $1/q$ in the denominator compared to FDR. Meanwhile, it is close to FDR in the setting when there are a large number of variables selected, i.e., when $|\widehat{\mathcal S}|$ is large, and it is less conservative than FDR, in that mFDR is always under control if FDR is.

\begin{theorem}
\label{prop:mFDRcontrol}
For any $q\in[0,1]$ and any sample size $n$, the selected set $\widehat{\mathcal S}$ in (\ref{eqn:selectedRF}) based on the knockoff filter $T$ in \eqref{eqn:filterT} satisfies that $\text{mFDR} \leq q$. Meanwhile, the selected set $\widehat{\mathcal S}$ based on the knockoff filter $T_+$ in \eqref{eqn:knockoff+thres} satisfies that $\text{FDR} \leq q$. 
\end{theorem}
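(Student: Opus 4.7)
The plan is to reduce Theorem \ref{prop:mFDRcontrol} to the knockoff-filter calculus of \citet{BC15}, using Theorem \ref{prop:coinflip} as the only problem-specific input. Let $V(t) = \#\{j \in \mathcal{S}^\perp : \Delta_j \geq t\}$, $V'(t) = \#\{j \in \mathcal{S}^\perp : \Delta_j \leq -t\}$, $R(t) = \#\{j : \Delta_j \geq t\}$, and $R'(t) = \#\{j : \Delta_j \leq -t\}$, so that $|\widehat{\mathcal{S}}| = R(T)$, $|\widehat{\mathcal{S}} \cap \mathcal{S}^\perp| = V(T)$, and $V'(t) \leq R'(t)$.

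First, I would translate the filter definitions into pointwise algebraic inequalities. By definition of $T$ in \eqref{eqn:filterT}, $R'(T) \leq q\, R(T)$, hence $1 + V'(T) \leq 1 + R'(T) \leq 1 + q R(T) = q(R(T) + 1/q)$, and
\begin{equation*}
\text{mFDR} = \E\!\left[\frac{V(T)}{R(T) + 1/q}\right] = \E\!\left[\frac{V(T)}{1 + V'(T)} \cdot \frac{1 + V'(T)}{R(T) + 1/q}\right] \leq q \cdot \E\!\left[\frac{V(T)}{1 + V'(T)}\right].
\end{equation*}
For $T_+$ in \eqref{eqn:knockoff+thres}, the analogous step gives $1 + R'(T_+) \leq q R(T_+)$ and yields $\text{FDR} \leq q \cdot \E[V(T_+)/(1 + V'(T_+))]$. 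It then suffices to show $\E[V(\tau)/(1 + V'(\tau))] \leq 1$ at the relevant stopping time $\tau \in \{T, T_+\}$.

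Second, I would condition on $\mathcal{G} := \sigma\!\left(|\Delta_1|, \ldots, |\Delta_p|;\; \text{sign}(\Delta_j),\; j \in \mathcal{S}\right)$. By Theorem \ref{prop:coinflip}, the signs $\{\text{sign}(\Delta_j) : j \in \mathcal{S}^\perp\}$ are, conditional on $\mathcal{G}$, i.i.d.\ Rademacher. Ordering the distinct positive values of $\{|\Delta_j| : j \in \mathcal{S}^\perp\}$ as $\tau_1 > \tau_2 > \cdots > \tau_m$, set $\mathcal{F}_k := \mathcal{G} \vee \sigma\!\left(\text{sign}(\Delta_j) : j \in \mathcal{S}^\perp,\; |\Delta_j| \geq \tau_k\right)$. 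Then $V(\tau_k)$, $V'(\tau_k)$, $R(\tau_k)$ and $R'(\tau_k)$ are all $\mathcal{F}_k$-measurable; each increment $k \mapsto k+1$ corresponds to placing one newly revealed null in the $+$ pile or the $-$ pile with probability $1/2$, independently of $\mathcal{F}_k$; and $T$ and $T_+$ are stopping times for the decreasing-in-$t$ filtration $\{\mathcal{F}_k\}$.

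Third, I would invoke the standard knockoff supermartingale lemma (Lemma~A.1 of \citet{BC15}; see also \citet{BCS20}). It combines the elementary identity $\E[N_+ / (1 + N - N_+)] = 1 - 2^{-N} \leq 1$ for $N_+ \sim \text{Binomial}(N, 1/2)$ with a one-step reverse-martingale check to conclude $\E[V(\tau)/(1 + V'(\tau))] \leq 1$ at any stopping time of $\{\mathcal{F}_k\}$. Combined with the first step, this gives $\text{mFDR} \leq q$ under the knockoffs filter $T$ and $\text{FDR} \leq q$ under the knockoffs+ filter $T_+$. The main obstacle is the supermartingale verification, because the one-step conditional expectation of $V(\tau_k)/(1 + V'(\tau_k))$ is not monotone in $(V, V')$ individually — this is exactly where the Barber--Candès ratio lemma is needed, telescoping via the binomial identity above. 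Once Theorem \ref{prop:coinflip} is in hand, however, no further problem-specific structure from the nonparametric additive model or the random-feature construction enters: the remainder of the argument is entirely generic and amounts to a direct application of optional stopping.
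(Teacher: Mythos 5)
Your proof is correct and takes the same approach as the paper. The paper's proof of this theorem is a one-sentence deferral to Theorems 1 and 2 of \citet{BC15}; you have reconstructed that argument in full, correctly identifying Theorem~\ref{prop:coinflip} as the sole problem-specific input and reducing everything else to the standard knockoff ratio-supermartingale and optional-stopping steps.
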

\noindent
We remark that, Theorem \ref{prop:mFDRcontrol} achieves the valid FDR control with no restriction on the dimension $p$ relative to the sample size $n$. As such, the proposed method works for both settings of $p < n$ and $p>n$. In particular, the FDR control under $p > n$ is achieved by building upon the model-X knockoffs \citep{candes2018panning}, which treats the variables as random and utilizes the stochasticity of the random variables. This is different from the original knockoff solution \citep{BC15}, which treats the variables as fixed and relies on specific stochastic properties of the linear model, and thus excludes the setting of $p>n$ or nonlinear models.

%%%%%%%%%%%%%%%%%%%%%%%%%%%%%%%%%%%%%%%%%%%%%%%%%%%
\subsection{Power Analysis}
\label{sec:poweranalysis}

\noindent
Next, we show that our proposed kernel knockoffs selection procedure achieves a power that approaches one as the sample size tends to infinity. We first note that, the theoretical power analysis for the knockoff methods is largely missing in the current literature, with a few exceptions such as \citet{FDLL20} and \citet{weinstein2020power}. \citet{FDLL20} studied the power for linear regressions under the model-X knockoff framework. \citet{weinstein2020power} studied the power of knockoffs with thresholded Lasso for linear models. By contrast, we study the power for nonparametric models. We also remark that, as is common for all knockoffs selection methods, the power of our knockoffs-based method is usually no greater than that of the group Lasso-based selection. This is because the proposed knockoffs procedure is built on top of the group Lasso selection in (\ref{eq: kernel_reg}). In a sense, the knockoffs procedure further selects variables from the set of variables that are identified by group Lasso for the augmented predictors. Therefore, the key of our power analysis is to investigate how much power loss that the knockoffs procedure would induce. We introduce some regularity conditions. 

\begin{assumption}
\label{assump:complexity}
The number of nonzero component functions,  i.e., $|\cS|$, is bounded. 
\end{assumption}

\begin{assumption}
\label{network:dependency}
Suppose there exists a constant $C_{\min}>0$, such that the minimal eigenvalue of matrix $\E[n^{-1} \bSigma_{\cS}\trans \bSigma_{\cS}]$ satisfies that,  
\begin{equation*}
\Lambda_{\min}\left(\E\left[\frac{1}{n}\bSigma_{\cS}\trans \bSigma_{\cS}\right]\right)\ge \frac{1}{2}C_{\min},
\end{equation*}
where  the expectation is taken over the random features and $\bSigma_\cS \in\R^{n\times 2r|\cS|}$ is the design matrix with the $i$th row equal to $\left[ \bPsi(x_{i,j_1})\trans, \ldots,   \bPsi(x_{i,j_{|\cS|}})\trans, \bPsi(\widetilde{x}_{i,j_1})\trans, \ldots, \bPsi(\widetilde{x}_{i,j_{|\cS|}})\trans \right]$, $i=1,\ldots,n$, $\cS=\{j_1,\ldots,j_{|\cS|}\}$.
\end{assumption}

\begin{assumption}
\label{network:minregeffect}
Suppose $p<e^n$. Let $\eta_R\equiv c_\eta\left\{ n^{-\beta/(2\beta+1)}+[(\log p)/n]^{1/2} \right\}$ for some constant $c_\eta>0$. Suppose $\min_{j\in \cS}\|f_{j}(X_j)\|_{L_2(X_j)}\geq \kappa_n\eta_R$, for some slowly diverging sequence $\kappa_n\to\infty$, as $n\to\infty$, where the RKHS $\cH_1$ is embedded to a $\beta$th order Sobolev space with $\beta>1$.
\end{assumption}

\begin{assumption}
\label{network:incoherence}
Suppose there exists a constant $0 \leq \xi_{\bSigma} < 1$ such that,
\begin{equation*} 
\max_{j\not\in \cS}\left\|\{\bSigma_{j}(I)\}\trans \bSigma_{\cS}(I)[\bSigma_{\cS}(I)\trans \bSigma_{\cS}(I)]^{-1}\right\|_{2}\leq \xi_{\bSigma}, \quad \textrm{ and } \quad 
 \frac{\xi_{\bSigma}\sqrt{|\cS|}+1}{\tau}\eta_{R}+\xi_{\bSigma}\sqrt{|\cS|}<1. 
\end{equation*}
\end{assumption}

\noindent 
All these conditions are reasonable and are commonly imposed in the literature. Specifically, Assumption \ref{assump:complexity} concerns the overall complexity in that it upper bounds the total number of nonzero component functions. Similar conditions have been commonly adopted in sparse additive models over RKHS \citep[e.g.,][]{koltchinskii2010sparsity, raskutti2012minimax, Yuan2016, dai2021kernel}. Moreover, we carry out a numerical experiment in Section B.4 of the Appendix, and show empirically that our method still works reasonably well when the number of nonzero components $|\cS|$ increases along with the sample size. We speculate that it is possible to allow $|\cS|$ to diverge, but leave the full theoretical investigation as future research.  Assumption \ref{network:dependency} ensures the identifiability among the $|\cS|$ submatrices of $\bSigma_{\cS}$. The same condition has been used in \citet{Zhao2006, ravikumar2010high}.  Assumption \ref{network:minregeffect} imposes some regularity on the minimum regulatory effect. Similar conditions have been used in Lasso regressions \citep{ravikumar2010high,  raskutti2012minimax, FDLL20}. In addition, similar to the treatment of high-dimensional linear and additive models \citep{raskutti2011minimax, Yuan2016}, we assume that $p<e^n$ to ensure nontrivial probabilistic bounds. In other words, we allow the dimension $p$ to diverge at the exponential order of the sample size $n$. Assumption \ref{network:incoherence} reflects the intuition that the large number of irrelevant variables cannot exert an overly strong effect on the relevant variables. Besides, the second inequality characterizes the relationship between $\xi_{\bSigma}$, the sparse tuning parameter $\tau$, and the sparsity level $|\cS|$. This condition is again standard for Lasso regressions \citep{Zhao2006, ravikumar2010high}.
 
Next, we characterize the statistical power of the proposed kernel knockoffs procedure. For the true set $\cS$ and the selected set $\widehat{\cS}$, the power is defined as
\vspace{-0.01in}
\begin{equation*}
\text{Power}(\widehat{\mathcal S}) = \E\left[\frac{|\widehat{\mathcal S}\cap \mathcal S|}{|\mathcal S|\vee 1}\right].
\end{equation*}

\begin{theorem}
\label{thm:optimalrecovery}
Suppose Assumptions  \ref{eqn:perfect-relationship}--\ref{network:incoherence} hold, and the number of random features $r \geq c_rn^{2\beta/(2\beta+1)}$ for some $c_r>0$. Then, the selected set $\widehat{\cS}$ in (\ref{eqn:selectedRF}) satisfies that, 
$\text{Power}(\widehat{\cS}) \to 1$, as $n \to \infty$.
\end{theorem}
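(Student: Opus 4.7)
The plan is to establish $\text{Power}(\widehat{\cS}) \to 1$ in two stages: first show that on each subsample, the group Lasso in \eqref{eq: kernel_reg} selects the true support with probability tending to one, so that the subsampling-averaged selection frequencies satisfy $\widehat{\Pi}_j \to 1$ for $j \in \cS$ and $\widehat{\Pi}_{j+p} \to 0$ for every $j$; second, show that the knockoff threshold $T$ in \eqref{eqn:filterT} lies strictly below the typical value of $\Delta_j = \widehat{\Pi}_j - \widehat{\Pi}_{j+p}$ for $j \in \cS$, so $\widehat{\cS} \supseteq \cS$ with high probability.

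For the first stage, I would carry out a primal--dual witness construction for the group Lasso restricted to the augmented random-feature design. Assumption \ref{assump:complexity} bounds $|\cS|$, Assumption \ref{network:dependency} ensures that the restricted Gram matrix $n^{-1}\bSigma_\cS\trans \bSigma_\cS$ has minimum eigenvalue at least $C_{\min}/2$, and Assumption \ref{network:incoherence} supplies the group-wise irrepresentable condition that certifies dual feasibility, forcing the dual variables for inactive groups to have $\ell_2$ norm strictly less than one. The random-feature sieve error is controlled by the choice $r \geq c_r n^{2\beta/(2\beta+1)}$, which by the generalization bound of \citet{rudi2017generalization} makes the approximation attain the minimax nonparametric rate $n^{-\beta/(2\beta+1)}$, matching the threshold $\eta_R$ in Assumption \ref{network:minregeffect}. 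Combined with the signal lower bound $\min_{j\in\cS}\|f_j(X_j)\|_{L_2(X_j)} \geq \kappa_n \eta_R$ with $\kappa_n \to \infty$, this yields $\widehat{c}_j(I_\ell) \neq \mathbf{0}$ for $j \in \cS$, while the irrepresentable condition gives $\widehat{c}_{j+p}(I_\ell) = \mathbf{0}$ for all $j$ and $\widehat{c}_j(I_\ell)=\mathbf{0}$ for $j \notin \cS$. The extra randomness of the Fourier features is absorbed by functional empirical-process concentration inequalities (Section A.4 of the Appendix), which transfer the population-level conditions to the empirical random-feature design, and the subsampling with size $\lfloor n/2\rfloor$ only affects constants in the rates. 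Averaging over $\ell=1,\ldots,L$ gives $\widehat{\Pi}_j \to 1$ for $j \in \cS$ and $\widehat{\Pi}_j \to 0$ otherwise, hence $\Delta_j \to 1$ for $j \in \cS$ while $\Delta_j \to 0$ for $j \in \cS^\perp$.

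For the second stage, by Theorem \ref{prop:coinflip} the restriction of $(\Delta_j)$ to the null coordinates is sign-symmetric. Pick any $t$ slightly below $\min_{j\in\cS}\Delta_j$, which by stage one concentrates near $1$; the denominator $\#\{j:\Delta_j\geq t\}$ in \eqref{eqn:filterT} contains all $|\cS|$ non-null indices, while the numerator $\#\{j:\Delta_j\leq -t\}$ contains only null indices with $|\Delta_j|\geq t$, whose count tends to zero because both $\widehat{\Pi}_j$ and $\widehat{\Pi}_{j+p}$ tend to zero on nulls. Hence the knockoff ratio falls below $q$ at such $t$, so $T<\min_{j\in\cS}\Delta_j$ with probability approaching one, giving $\cS\subseteq\widehat{\cS}$ and $\text{Power}(\widehat{\cS})\to 1$.

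The main obstacle is the primal--dual witness analysis for the group Lasso when the design matrix is built from random Fourier features, which are not orthogonal (unlike the B-spline basis used in \citet{HHW10}) and whose empirical Gram matrix is distinct from the RKHS-induced population operator. Sharp functional concentration inequalities will be required to show that the eigenvalue, irrepresentable, and signal-strength conditions stated at the population level transfer to the empirical random-feature design with probability $1-o(1)$, uniformly over the support pattern. A secondary technical point is coupling the two layers of stochasticity (random features and subsampling) when bounding the selection frequencies; this can be handled by conditioning on the features, applying the single-subsample selection-consistency bound, and then integrating using the independence of the feature draws across replications.
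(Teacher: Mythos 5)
Your proposal is correct and follows essentially the same route as the paper: stage one reproduces the content of Lemma~\ref{lem:keyinfo} (primal–dual witness for the augmented group Lasso, with the random-feature sieve error controlled via \citet{rudi2017generalization} and the empirical-process bound of Lemma~\ref{lem:bdonradamacher1}/\ref{lem:conds3}), yielding $\sum_{j=1}^{2p}|\widehat{\Pi}_j-\mathbf{1}\{j\in\cS\}|=o_p(1)$; stage two then concludes $\cS\subseteq\widehat{\cS}$ from the knockoff filter. The only difference is cosmetic: in the threshold analysis the paper works through a two-case decomposition on the ordered $|\Delta_j|$'s to first bound $T=o_p(1)$ and then control $|\cS\cap\widehat{\cS}^c|$, whereas you argue directly that the knockoff ratio falls below $q$ at $t=\min_{j\in\cS}|\Delta_j|$; both rest on the same concentration of $\widehat{\Pi}$, and your version is a cleaner path to the identical conclusion.
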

\noindent
We again remark that, Theorem \ref{thm:optimalrecovery} holds for both settings of $p < n$ and $p > n$, or more specifically, $n < p < e^n$, which is implied by Assumption \ref{network:minregeffect}. The power property under $p > n$ is achieved by integrating Rademacher processes and the concentration inequalities for empirical processes \citep{G02, Yuan2016} with the deviation conditions for nonparametric regressions \citep{loh2012, dai2021kernel}.

Together, Theorems \ref{prop:mFDRcontrol} and \ref{thm:optimalrecovery} show that our proposed selection method is able to achieve both the finite-sample FDR control and the asymptotic power that approaches one.

%%%%%%%%%%%%%%%%%%%%%%%%%%%%%%%%%%%%%%%%%%%%%%%%%%%
\section{Numerical Studies}
\label{sec:numerical}
 
We carry out intensive simulations to examine the empirical performance of our proposed method under the varying signal strength, the predictor distribution, the nonparametric component function, the sample size and the number of predictors. We compare with several alternative solutions. We also illustrate our method with an analysis of brain imaging data for Alzheimer’s disease. We report additional simulation results in Section B of the Appendix.

%%%%%%%%%%%%%%%%%%%%%%%%%%%%%%%%%%%%%%%%%%%%%%%%%%%
\subsection{Alternative methods for comparison}

We abbreviate our proposed kernel knockoffs selection method as KKO. We solve the group Lasso penalized problem in \eqref{eq: kernel_reg} using the R package \texttt{grpreg}. We employ the Laplacian kernel with $r\in\Xi=\{2,3,4\}$, and tune the hyperparameters following Section \ref{sec:tuning}. We set the target FDR level at $q=0.2$ following \citet{FDLL20}. We also briefly comment that, in addition to the reproducing kernel approach, the spline basis expansion is another commonly used approach in the nonparametric additive modeling. But it involves a totally different set of methodological tools and theoretical analysis, and we leave it as future research.

We compare our method with three main competitors. The first competitor is the nonparametric selection method for sparse additive models (SPAM) of \citet{RLLW09}, which combines B-spline basis expansion with grouped Lasso. We set the number of B-spline expansions at $\lceil n^{1/5}\rceil$, i.e., the largest integer no greater than $n^{1/5}$, and tune the  sparsity penalty by generalized cross-validation. We implement the method using the R package \texttt{SAM}. We did not compare with the COSSO method of \citet{LZ06}, due to that the code is not available, and SPAM usually achieves a similar and sometimes more competitive performance than COSSO. The second competitor is the linear knockoffs (LKO) selection method, and we implement it using the R package \texttt{knockoffs}. The third competitor is the graphical nonlinear knockoffs (RANK) selection method of \citet{FDLL20}. We follow the same parameter setup as in \citet{FDLL20}, and implement their method based on the R package \texttt{gamsel}.

We also compare our method with that of \cite{DB16}. More specifically, \cite{DB16} adopted an ``expansion first" strategy, which first performs feature expansion $\widetilde{\bPsi}(X_j)$, $j=1,\ldots,p$, then constructs the knockoffs based on the expanded features. By contrast, we adopt the ``knockoffs first" approach, which constructs the knockoffs directly for the variables $\{X_j\}_{j=1}^p$, then performs the random feature expansion on both original variables and their knockoffs. There are two advantages of doing the knockoffs first. First, it ensures a better knockoffs construction and eventually a better statistical power. That is, to construct a good knockoff variable using either the second-order knockoffs or the deep knockoffs, it requires a reasonably slow eigenvalue decay of the covariance $\bSigma$, so that the original variables and their knockoffs are differentiable \citep{candes2018panning}. We consider a simulation example replicated 200 times, where the predictors follow a multivariate normal distribution with the zero mean and the identity covariance matrix, $n=500, p=5$, and we employ the Laplacian kernel with $r=3$. Figure \ref{fig: eigendecay}(a) shows the eigenvalue decay of the sample covariance matrix (blue line), and the random kernel expansion (red line). It is seen that the former decays more slowly than the latter, and therefore it is better to construct the knockoffs based on the original variables. Second, the ``expansion first" leads to larger correlations between the original variables and their knockoffs, compared to the ``knockoffs first", as shown in Figure \ref{fig: eigendecay}(b), which would in turn make the subsequent group Lasso selection harder. Finally, the ``expansion first'' is computationally more expensive. This is because the ``expansion first'' approach requires generating the knockoffs for $pr$-dimensional variables, whereas our ``knockoffs first" approach only requires generating the knockoffs for $p$-dimensional variables. 

\def\myfigheight{1.65in}

\begin{figure}[t!]
\centering
\includegraphics[width=3.0in,height=\myfigheight]{./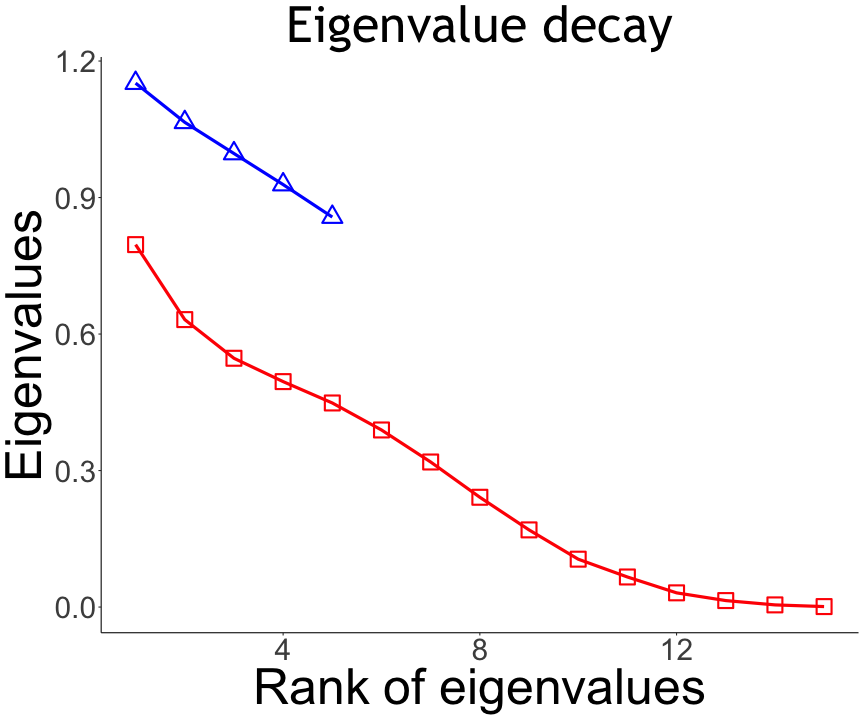}
\includegraphics[width=3.0in,height=\myfigheight]{./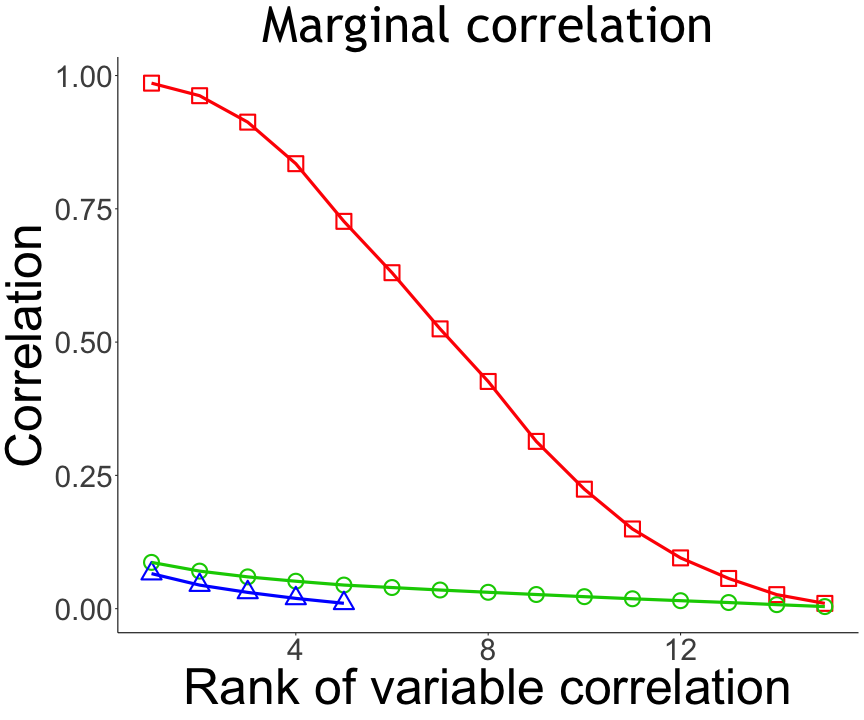}
\caption{Comparison between the ``expansion first" strategy and the ``knockoffs first" strategy. Left panel: the eigenvalue decay of the sample covariance matrix (blue line) and the random kernel expansion (red line). Right panel: the marginal sample correlations between the original variables and the knockoff variables (blue line), between the random kernel expansion and the ``expansion first'' knockoff variables (red line), and between the random kernel expansion and the ``knockoffs first'' knockoff variables (green line).}
\label{fig: eigendecay}
\vskip 0em
\end{figure}

%%%%%%%%%%%%%%%%%%%%%%%%%%%%%%%%%%%%%%%%%%%%%%%%%%%
\subsection{Varying signal strength, predictor design, and component functions}
\label{sec:varysamplesim}

We first study the performance with the varying signal strength and the predictor distribution. 
 
\begin{figure}[t!]
\centering
\begin{tabular}{cc}
\multicolumn{2}{c}{{\small Multivariate normal predictor distribution}} \\
\includegraphics[width=3.0in,height=\myfigheight]{./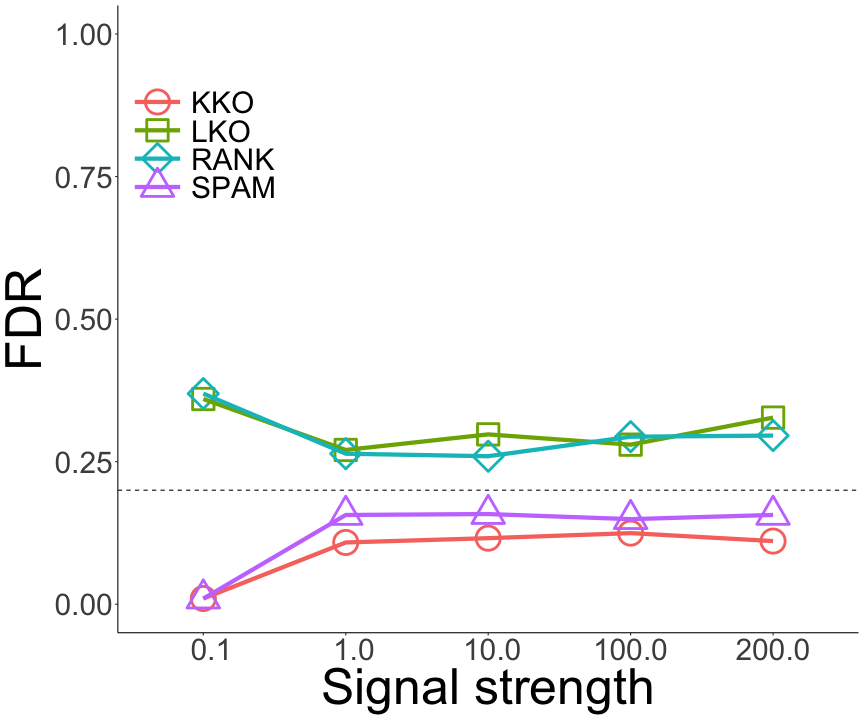} &
\includegraphics[width=3.0in,height=\myfigheight]{./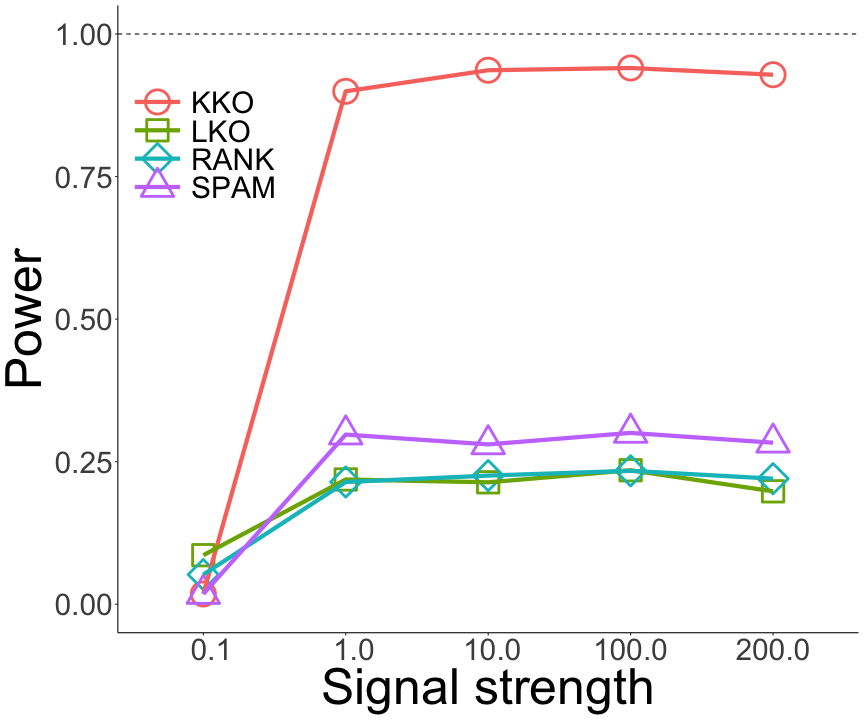} \vspace*{8pt} \\
\multicolumn{2}{c}{{\small Mixutre of multivariate normal predictor distribution}} \\
\includegraphics[width=3.0in,height=\myfigheight]{./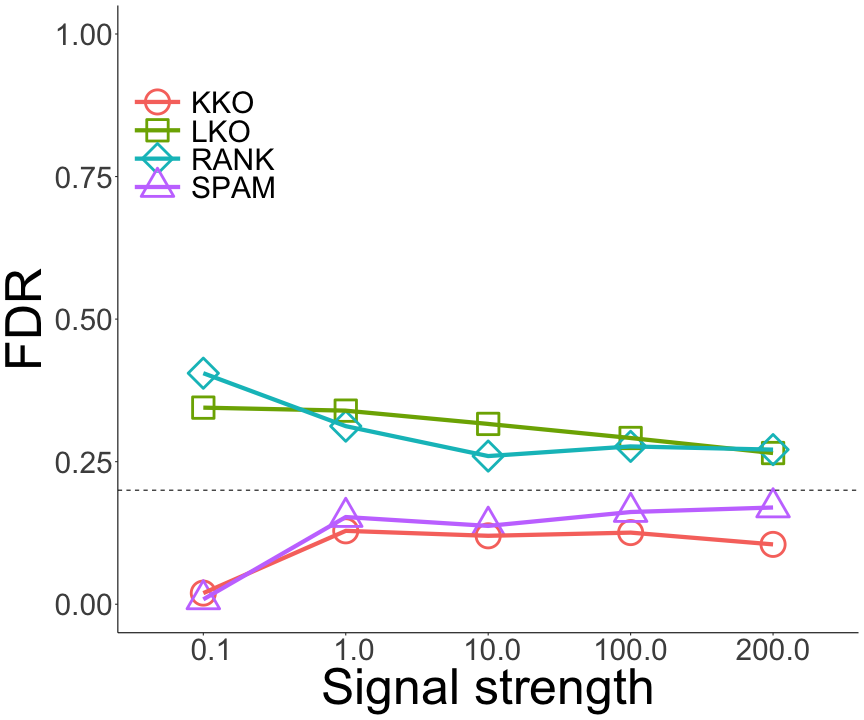} & 
\includegraphics[width=3.0in,height=\myfigheight]{./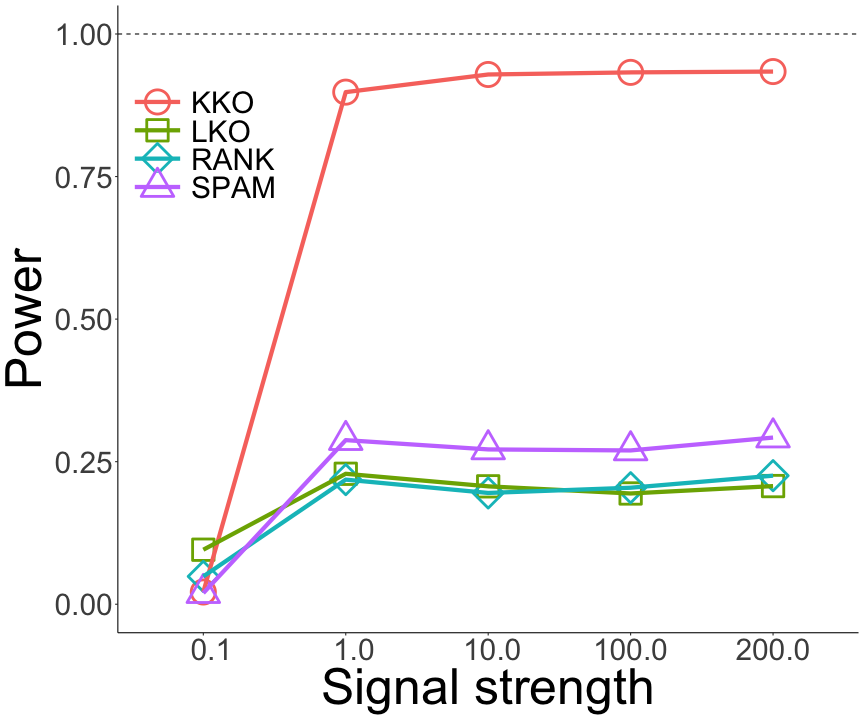} \vspace*{8pt} \\
\multicolumn{2}{c}{{\small Uniform predictor distribution}} \\
\includegraphics[width=3.0in,height=\myfigheight]{./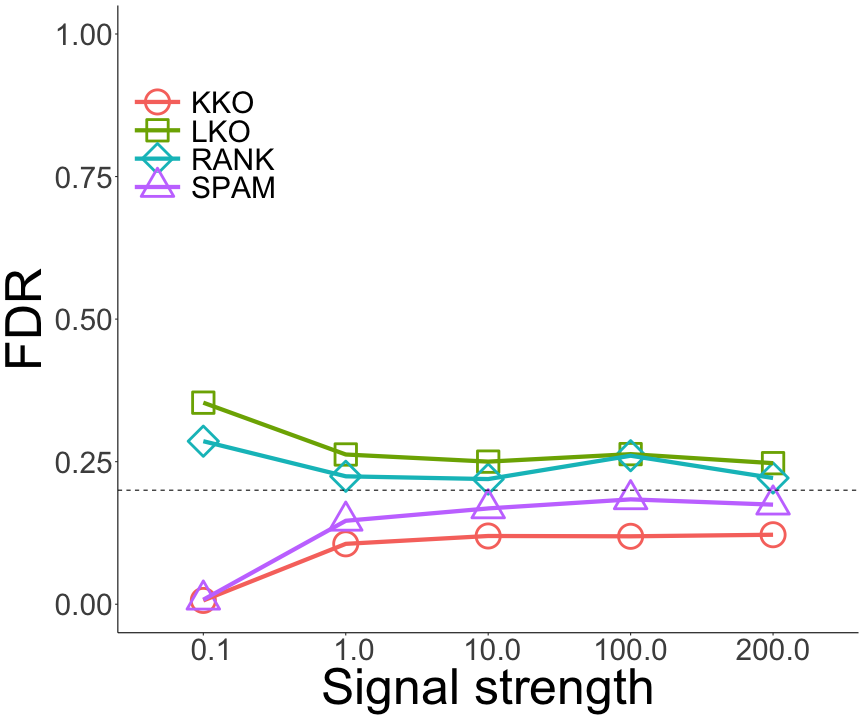} & 
\includegraphics[width=3.0in,height=\myfigheight]{./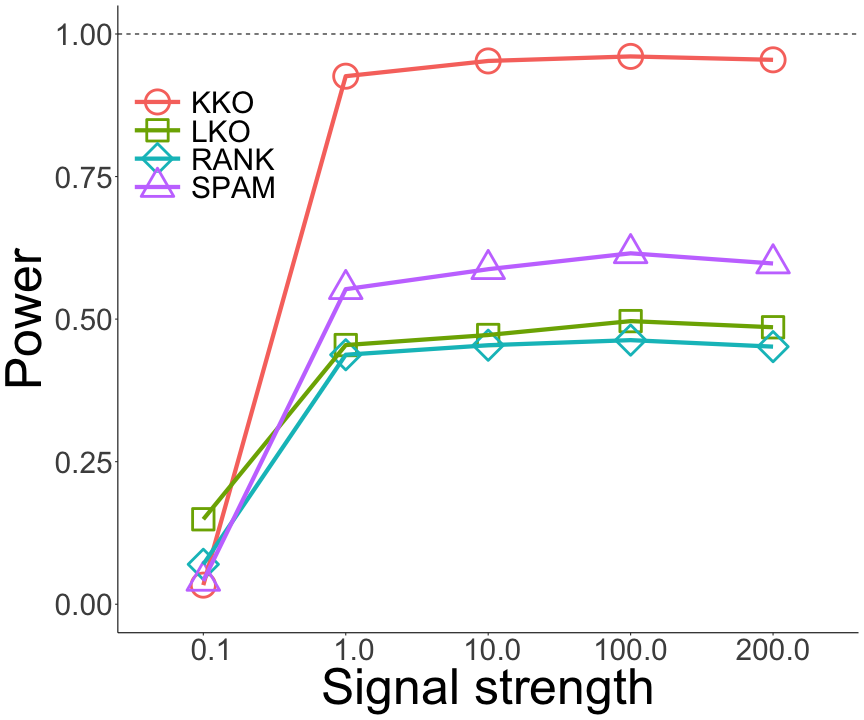} 
\end{tabular}
\caption{Empirical performance and comparison in terms of FDR and power with the varying signal strength and predictor distribution. Four methods are compared: the nonparametric selection method for sparse additive models (SPAM) of \citet{RLLW09}, the linear knockoffs (LKO) of \citet{BC15}, the graphical nonlinear knockoffs (RANK) of \citet{FDLL20}, and our proposed kernel knockoffs (KKO).}
\label{fig: robust_distribution_SNR}
\end{figure} 

We simulate the response, $Y = \sum_{j \in \cS}  \theta_j f_j (X_j ) + \epsilon$, where $\cS$ is the set of relevant predictors with $|\cS| = 10$, and $\epsilon$ is a standard normal error. We sample $\theta_j$ independently from a uniform distribution $(-\theta, \theta)$ for some positive constant $\theta$. The magnitude of $\theta$ reflects the strength of the signal, and we vary $\theta = \{0.1,1,10,100,200\}$. We simulate the predictors independently from three different distributions, a multivariate normal distribution with mean zero and covariance $\Sigma_{ij}=0.3^{|i-j|}$, a mixture normal distribution, with an equal probability from three multivariate normal distributions, all with mean zero, and different covariances where $\Sigma_{1,ij}=0.1^{|i-j|}$, $\Sigma_{2,ij}=0.3^{|i-j|}$, and $\Sigma_{3,ij}=0.5^{|i-j|}$, and a uniform $[-2, 2]$ distribution. We employ the second-order knockoffs when the predictor distribution is normal, and the deep knockoffs otherwise, to generate the knockoff variables. We first consider a trigonometric polynomial component function, and fix the number of predictors at $p=50$, and the sample size at $n=900$. We later consider other forms of component functions, and different $(p, n)$. 
\begin{equation} \label{eq: sinpoly}
f_j (x)= u_{j,1} \sin (c_{j,1} x) + u_{j,2}\cos (c_{j,2} x) + u_{j,3} \sin^2 (c_{j,3} x) +u_{j,4} \cos^2 (c_{j,4} x),
\end{equation}
where $u_{j,k}$ follows a uniform $(1,2)$ distribution, and $c_{j,k}$ follows a uniform $(1,10)$, for $k=1,2,3,4$. Figure \ref{fig: robust_distribution_SNR} reports the FDR and power over 200 data replications for the four methods with the varying signal strength $\theta$ and three different predictor distributions. It is seen that our method successfully controls the FDR blow the expected level, and at the same time achieves the best power. Besides, the performance is robust with respect to different predictor distributions. By comparison, the alternative methods are much more sensitive in terms of the FDR control, and the powers are consistently lower. Moreover, the linear knockoffs method often fails to control the FDR. 

\begin{figure}[t!]
\centering
\begin{tabular}{cc}
\multicolumn{2}{c}{{\small Sin-ratio component function}} \\
\includegraphics[width=3.0in,height=\myfigheight]{./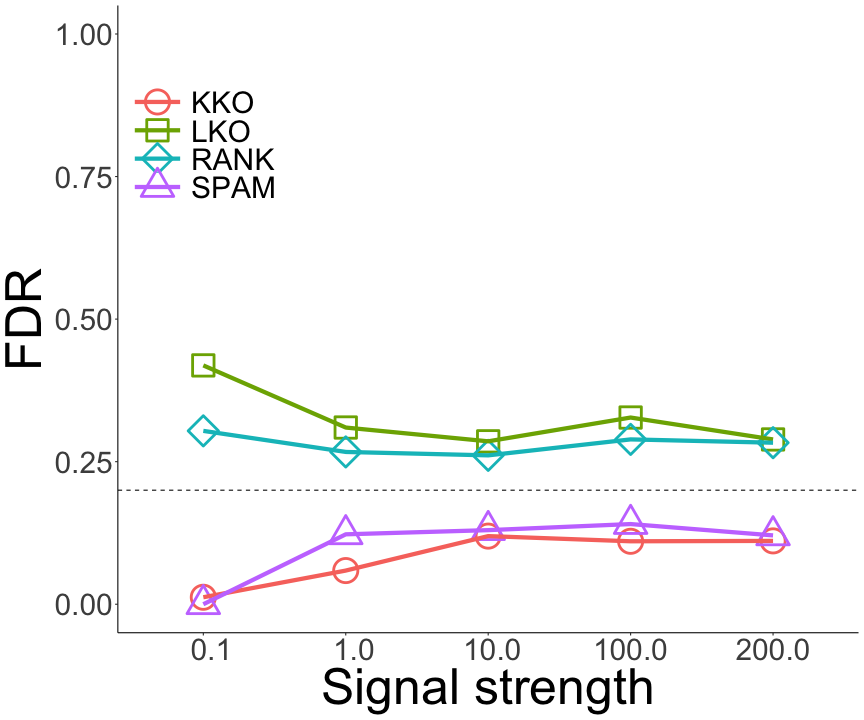} &
\includegraphics[width=3.0in,height=\myfigheight]{./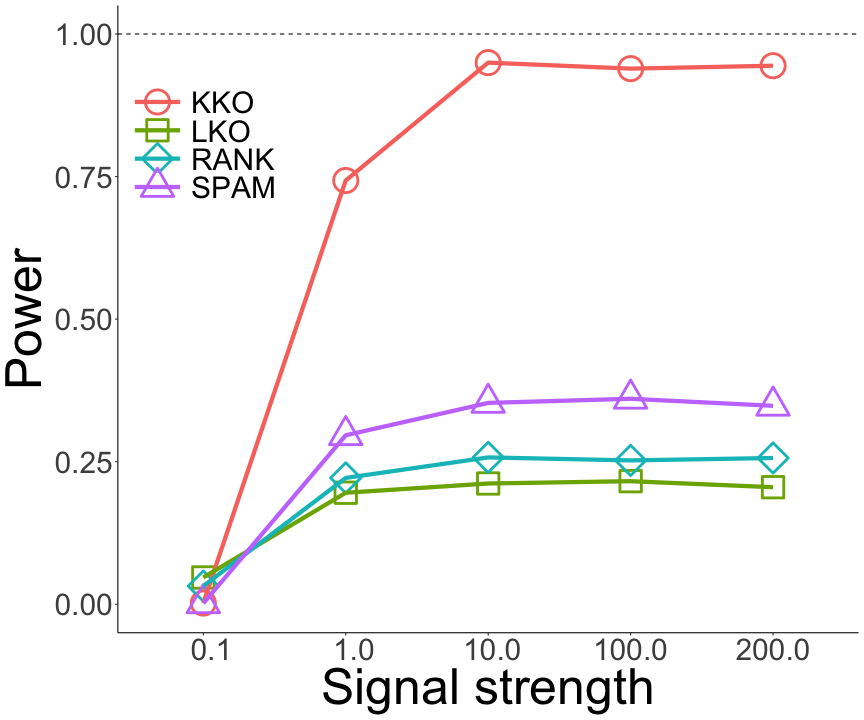} \vspace*{8pt} \\
\multicolumn{2}{c}{{\small Mixture of trigonometric polynomial and sin-raito component function}} \\
\includegraphics[width=3.0in,height=\myfigheight]{./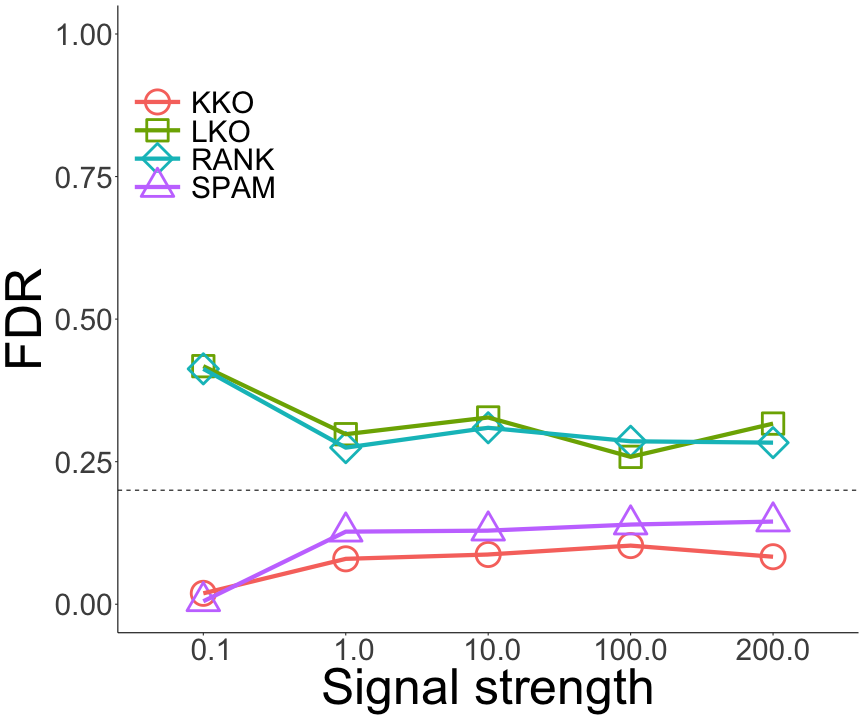} &
\includegraphics[width=3.0in,height=\myfigheight]{./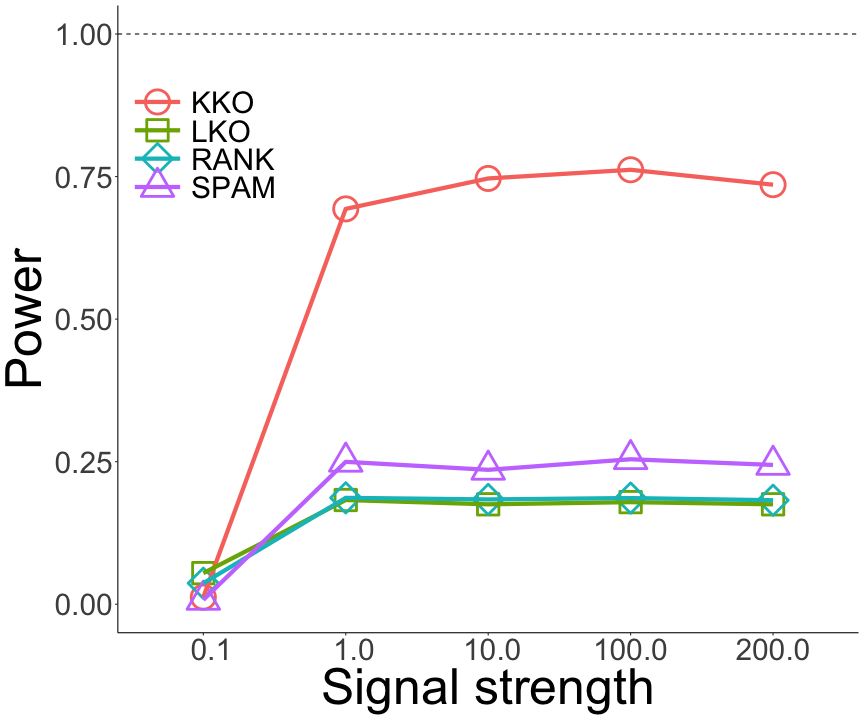} 
\end{tabular}
\caption{Empirical performance and comparison in terms of FDR and power with the varying signal strength and component function. The same four methods as in Figure \ref{fig: robust_distribution_SNR} are compared.}
\label{fig: robust_model}
\end{figure} 

Next, we consider more forms of component functions. The first is a sin-ratio function,   
\begin{equation} \label{eq: sinratio}
f_j (x) = \frac{\sin (c_{j,1} x)}{ 2- \sin (c_{j,2} x) },
\end{equation}
where $c_{j,k}$ follows a uniform $(1,10)$ distribution for $k=1,2$, and $|\cS| = 10$. We note that it is generally more difficult to estimate the sin-ratio function \eqref{eq: sinratio} compared to the trigonometric polynomial function \eqref{eq: sinpoly}. The second is a mixed additive model, where we sample the component function with an equal probability from \eqref{eq: sinpoly} or \eqref{eq: sinratio}. We fix the predictor distribution as the multivariate normal, $p=50$ and $n=900$. We continue to vary the signal strength $\theta = \{0.1,1,10,100,200\}$. Figure \ref{fig: robust_model} reports the FDR and power based on 200 data replications. It is seen again that our method achieves the best power while controlling the FDR under the nominal level for the new component functions.

%%%%%%%%%%%%%%%%%%%%%%%%%%%%%%%%%%%%%%%%%%%%%%%%%%%
\subsection{Varying sample size and dimension}
\label{sec:sim-np}

Next, we investigate the empirical performance with the varying $n$ and $p$. 

For the varying $n$, we consider the trigonometric polynomial function \eqref{eq: sinpoly} with $p = 50, |\cS| = 10, \theta = 100$, and the multivariate normal predictor distribution with mean zero and covariance $\Sigma_{ij}=0.3^{|i-j|}$. We vary the sample size $n = \{400, 900, 1500, 2500\}$. Figure \ref{fig: sam_effect} reports the FDR and power based on 200 data replications. It is seen that our method successfully control the FDR at all sample sizes, while its power quickly increases as $n$ increases, and dominates the powers of all the competitive methods considerably. Besides, both LKO and RANK have an inflated FDR especially when the sample size is small.

\begin{figure}[b!]
\centering
\includegraphics[width=3.0in,height=\myfigheight]{./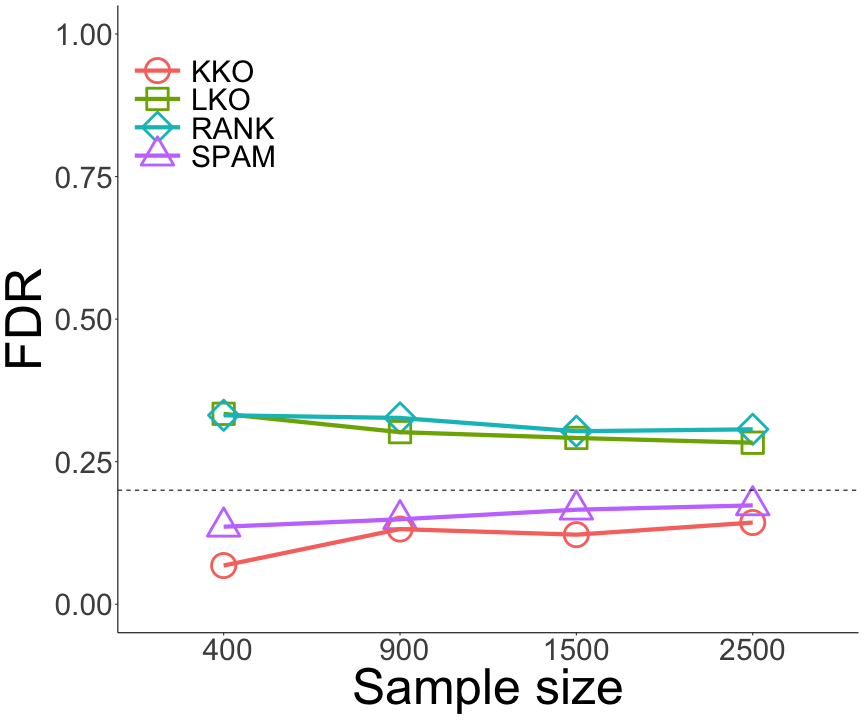}
\includegraphics[width=3.0in,height=\myfigheight]{./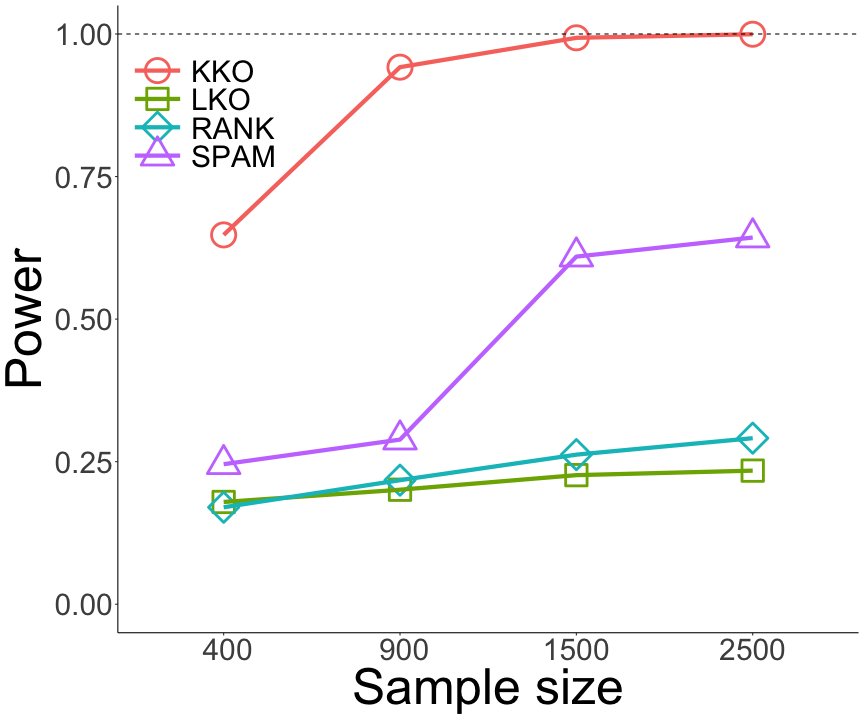}
\caption{Empirical performance and comparison in terms of FDR and power with the varying sample size $n$. The same four methods as in Figure \ref{fig: robust_distribution_SNR} are compared.}
\label{fig: sam_effect}
\end{figure} 

For the varying $p$, we consider the trigonometric polynomial function \eqref{eq: sinpoly} with $n = 900, |\cS| = 10, \theta = 100$, and the multivariate normal predictor distribution  with mean zero and covariance $\Sigma_{ij}=0.3^{|i-j|}$. We vary the number of predictors $p = \{30, 50, 150, 500, 1500, 2000\}$. As such, we have considered both cases that $p < n$ and $p > n$. Recall that the proposed method can handle both the low-dimensional and high-dimensional regimes, and the theoretical guarantees in Section \ref{sec:theory} are established for both $p < n$ and $p > n$. Moreover, for the $p > n$ case, we construct the knockoff variables for all the original variables; i.e., we construct the $p$-dimensional knockoffs $\widetilde{\bX}$ for the $p$-dimensional $\bX$. This follows a similar strategy as in \citet{FDLL20}, but is different from \citet{BC19}, where a precedent step of feature screening is carried out first, and the knockoff variables are constructed only for those selected variables. Our approach avoids to require the sure screening property; see also the discussion in \citet{FDLL20}.  Figure \ref{fig: robust_sparsity_ratio} reports the FDR and power based on 200 data replications. It is seen that our method again achieves the best performance in terms of FDR and power in both regimes. By contrast, LKO and RANK have a low power and inflated FDR. Although SPAM sometimes obtains a power similar to ours, its FDR is much inflated and is far above the nominal level.

\begin{figure}[t!]
\centering
\includegraphics[width=3.0in,height=\myfigheight]{./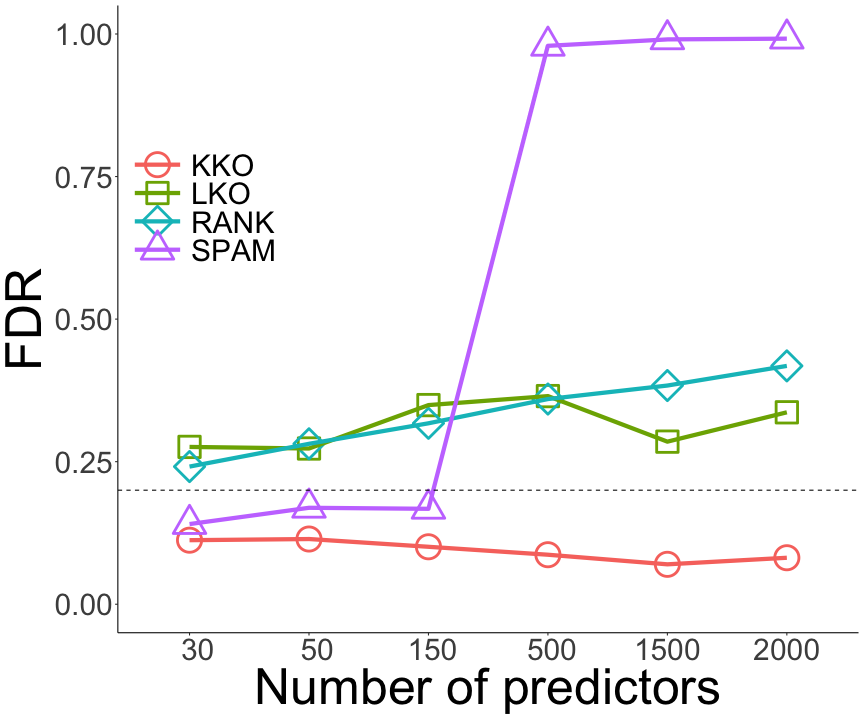}
\includegraphics[width=3.0in,height=\myfigheight]{./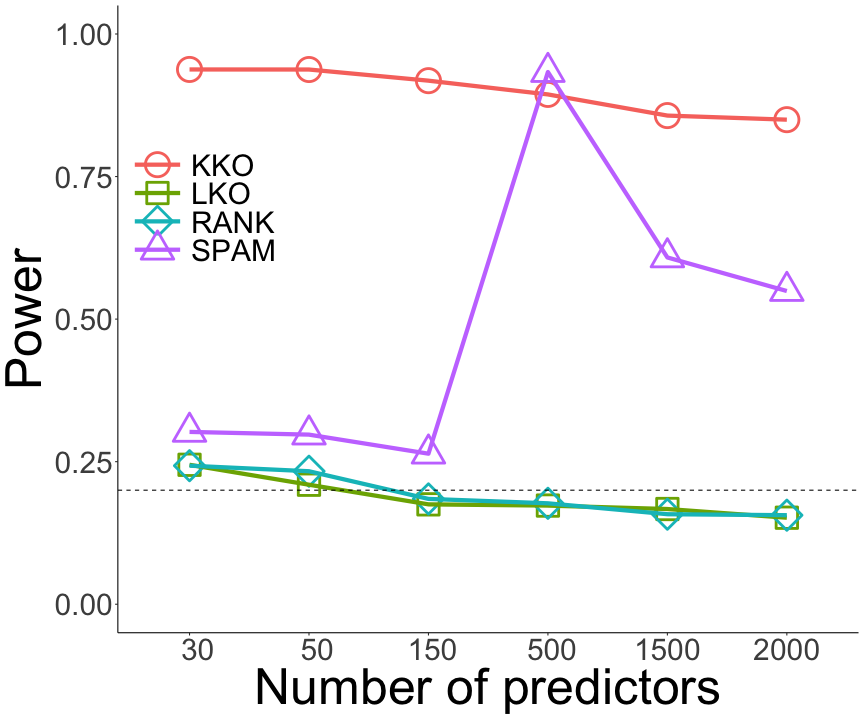}
\caption{Empirical performance and comparison in terms of FDR and power with the varying number of predictors $p$ in both regimes where $p < n$ and $p > n$ with $n=900$. The same four methods as in Figure \ref{fig: robust_distribution_SNR} are compared.}
\label{fig: robust_sparsity_ratio}
\end{figure}

%%%%%%%%%%%%%%%%%%%%%%%%%%%%%%%%%%%%%%%%%%%%%%%%%%%
\subsection{Brain imaging data analysis}
\label{sec:realdata}

We illustrate the proposed method with a brain imaging data analysis to study the Alzheimer's disease (AD). AD is an irreversible neurodegenerative disorder, and is characterized by progressive impairment of cognitive and memory functions, loss of bodily functions, and ultimately death. It is the leading form of dementia in elderly subjects, and is the sixth leading cause of death in the United States. Over 5.5 million Americans were affected by AD in 2018, and without any effective treatment or prevention, this number is projected to triple by 2050 \citep{AD2020}. Brain atrophy as reflected by brain grey matter cortical thickness is a well-known biomarker for AD. We study a dataset with $n = 697$ subjects, each of whom 
received an anatomical magnetic resonance imaging (MRI) scan that measures cortical thickness. The data is publicly available at \url{http://adni.loni.usc.edu/}. The MRI image has been preprocessed by the standard pipeline, and is summarized in the form of a vector of cortical thickness measurements for a set of parcellated brain regions-of-interest. There are $p=68$ regions in total. Brain parcellation is particularly useful to facilitate the interpretation, and has been frequently employed in brain imaging analysis \citep{Fornito2013, Kang2016}. In addition to the MRI image, for each subject, the data also records a composite cognitive score, which combines numerous tests that assess episodic memory, timed executive function, and global cognition \citep{Donohue2014}. Our goal is to study the association between the composite cognitive score and the vector of brain cortical thickness, and identify individual brain regions with strong associations. A linear model is inadequate to capture such an association, and we turn to the nonparametric additive model instead. We apply the proposed kernel knockoffs selection procedure. As the distribution of the predictors is not necessarily normal, we employ the deep knockoffs machine to generate the knockoff variables. We continue to set the target FDR level at $q=0.2$. 

\begin{table}[t!]
\centering
\begin{tabular}{ccccc} \toprule
l-middletemporal & l-superiorparietal & r-fusiform & r-inferiorparietal & l-entorhinal \\
l-fusiform & l-inferiorparietal & l-precuneus & l-superiortemporal & r-entorhinal \\ 
\bottomrule
\end{tabular}
\caption{Brain regions identified by the kernel knockoffs selection procedure. ``l-" stands for the left hemisphere, and ``r-" stands for the right hemisphere.}
\label{tab: ad-data}
\end{table}
 
Table \ref{tab: ad-data} reports the ten brain regions selected by our method. These findings agree with and support the current literature on AD research. Particularly, the middle temporal gyrus is located on the temporal lobe, and is associated with processes of recognition of known faces and accessing word meaning while reading. Middle temporal lobe atrophy is common in AD as well as its prodromal stage, mild cognitive impairment \citep{Visser2002}. The superior parietal lobe is involved with attention, visuospatial perception, and spatial orientation. Damage to the parietal lobe is common in AD, and leads to problems with performing gestures and skilled movements \citep{Pini2016}. The fusiform is linked with various neural pathways related to recognition. The inferior parietal lobe is involved in perception of emotions. The superior temporal gyrus is involved in auditory processing, and has also been implicated as a critical structure in social cognition. Numerous studies have found involvement of these brain regions in the development of AD \citep{Convit2000, Du2007, Pini2016}. The precuneus is associated with episodic memory, visuospatial processing, reflections upon self, and aspects of consciousness, and is found to be an AD-signature region \citep{Bakkour2013}. Finally, the entorhinal cortex functions as a hub in a widespread network for memory, navigation and the perception of time. It is found implicated in the early stages of AD, and is one of the most heavily damaged cortices in AD \citep{Hoesen1991}.

\section*{Acknowledgments}
The authors thank the Editor, the Associate Editor, and three referees for their constructive comments and insightful suggestions.  Li's research was partially supported by the NSF grant CIF-2102227, and the NIH grants R01AG061303, R01AG062542, and R01AG034570.

%%%%%%%%%%%%%%%%%%%%%%%%%%%%%%%%%%%%%%%%%%%%%%%%%%%
\baselineskip=19pt
\bibliographystyle{ims}
\bibliography{ref-kko}

\newpage
\appendix
%%%%%%%%%%%%%%%%%%%%%%%%%%%%%%%%%%%%%%%%%%%%%%%%%%%
\section{Proofs}
\label{sec:proofs}

%%%%%%%%%%%%%%%%%%%%%%%%%%%%%%%%%%%%%%%%%%%%%%%%%%%
\subsection{Proof of Proposition \ref{prop:nullequivalence}}
\begin{proof}
Recall that, in the nonparametric additive model (\ref{eq: additivemodel}), the conditional distribution of  $Y$ given $\bX\in\Xcal^p$ follows a normal distribution, i.e., 
\begin{equation*}
p_{Y|\bX}(y|x_1,\ldots,x_p) = \frac{1}{\sigma\sqrt{2\pi}}\exp\left\{ -\frac{1}{2\sigma^2}\left[ y-\sum_{j=1}^pf_j(x_j) \right]^2 \right\}.
\end{equation*}
Let $\bx_{-j}=\{x_1,\ldots,x_p\}\backslash\{x_{j}\}$.
Note that 
\begin{equation*}
\begin{aligned}
    p_{Y,X_j|\bX_{-j}}(y,x_j|\bx_{-j})  =   p_{Y|X_j,\bX_{-j}}(y|x_j,\bx_{-j}) p_{X_j|\bX_{-j}}(x_j|\bx_{-j}).
\end{aligned}
\end{equation*}
Now if $f_j=  0$ for $j\in\{1,\ldots,p\}$, then
\begin{equation*}
\begin{aligned}
    p_{Y|X_j,\bX_{-j}}(y|x_j,\bx_{-j})  & =   p_{Y|\bX_{-j}}(y|\bx_{-j}).
    \end{aligned}
\end{equation*}
Henceforth,
\begin{equation}
\label{eqn:factorizing}
\begin{aligned}
p_{Y,X_j|\bX_{-j}}(y,x_j|\bx_{-j})  =   p_{Y|\bX_{-j}}(y|\bx_{-j}) p_{X_j|\bX_{-j}}(x_j|\bx_{-j}).
\end{aligned}
\end{equation}
That is, the conditional distribution $p_{Y,X_j|\bX_{-j}}(y,x_j|\bx_{-j})$ factorizes. Consequently, 
$Y \independent X_j | \bX_{-j}$ and thus, $j\in\cS^\perp$.

On the other hand, suppose that $j\in\cS^\perp$. By definition, $Y \independent X_j | \bX_{-j}$. 
Then the likelihood $p_{Y,X_j|\bX_{-j}}(y,x_j|\bx_{-j})$ need to factorize and satisfy (\ref{eqn:factorizing}). 
Note that $p_{Y,X_j|\bX_{-j}}(y,x_j|\bx_{-j})$ has an interaction term,
\begin{equation}
\label{eqn:interaction}
\frac{1}{\sigma\sqrt{2\pi}}\exp\left(\frac{1}{\sigma^2}f_j(x_j)\left(y-\sum_{k=1;k\neq j}^pf_k(x_k)\right)\right).
\end{equation} 
Since $p_{Y,X_j|\bX_{-j}}(y,x_j|\bx_{-j})$  satisfies (\ref{eqn:factorizing}), (\ref{eqn:interaction}) needs to be a constant that is independent of $\bx$ and $y$. Under Assumption \ref{eqn:perfect-relationship}, this is true only if $f_j=0$. This completes the proof of Proposition \ref{prop:nullequivalence}. 
\end{proof}

%%%%%%%%%%%%%%%%%%%%%%%%%%%%%%%%%%%%%%%%%%%%%%%%%%%
\subsection{Proof of Theorem \ref{prop:coinflip}}
\label{sec:profofthmcoin}
\begin{proof}
Recall that the selected set $\widehat{\mathcal S}(I)$ is obtained by (\ref{eq: kernel_reg}). Then  $\widehat{\mathcal S}(I)$ is a function of 
\begin{equation}
\label{eqn:defofSigmaI}
\bSigma(I)\equiv \begin{bmatrix}
\bPsi(x_{i_1,1})\trans & \cdots & \bPsi(x_{i_1,p})\trans & \bPsi(\widetilde{x}_{i_1,1})\trans & \cdots & \bPsi(\widetilde{x}_{i_1,p})\trans \\
\vdots &  & \vdots & \vdots &  & \vdots\\
\bPsi(x_{i_{|I|},1})\trans & \cdots & \bPsi(x_{i_{|I|},p})\trans  & \bPsi(\widetilde{x}_{i_{|I|},1})\trans & \cdots & \bPsi(\widetilde{x}_{i_{|I|},p})\trans 
\end{bmatrix}\in\R^{|I|\times 2pr},
\end{equation}
where $I=(i_1,\ldots,i_{|I|})$, and the feature vectors $\bPsi(x_{i,j}) = \left[ \psi_1(x_{i,j}),\ldots,\psi_r(x_{i,j}) \right]\trans \in \R^r$, $\bPsi(\widetilde{x}_{i,j})$ $= \left[ \psi_1(\widetilde{x}_{i,j}),\ldots,\psi_r(\widetilde{x}_{i,j}) \right]\trans \in \R^r$ with $i=i_1,\ldots,i_{|I|}$, and $j=1,\ldots,p$. Moreover, $\widehat{\mathcal S}(I)$  is a function of the response vector $\by(I)=(y_1,\ldots,y_{|I|})\trans\in\R^{|I|}$. Since the importance score $\Delta_j = \widehat{\Pi}_j- \widehat{\Pi}_{j+p}$ with $\widehat{\Pi}_j = \P\{j\in\widehat{\mathcal S}(I)\}$, $\Delta_j$ is a function of $\bSigma(I)$ and $\by(I)$. That is,
\begin{equation*}
\Delta_j =   \Delta_j\left(\bSigma(I),\by(I)\right).
\end{equation*}
Next, we complete the proof of Theorem \ref{prop:coinflip} in three steps.

\paragraph{Step 1.} We show the \textit{flip sign property} that, for any $\Acal_1\subseteq\{1,\ldots,p\}$, the importance score satisfies that, 
\begin{equation}
\label{eqn:flipsign}
     \Delta_j\left(\{\bSigma(I)\}_{\text{swap}(\Acal_1)},\by(I)\right) = 
\begin{cases}
\Delta_j\left(\bSigma(I),\by(I)\right), &j\not\in \Acal_1,\\
-\Delta_j\left(\bSigma(I),\by(I)\right), &j\in \Acal_1,
\end{cases}
\end{equation}
where $\text{swap}(j)$ is an operator that swaps the submatrix 
\begin{equation*}
\begin{bmatrix}
\bPsi(x_{i_1,j})\trans  \\
\vdots \\
\bPsi(x_{i_{|I|},j})\trans 
\end{bmatrix}\in\R^{|I|\times r}\quad \text{ with }\quad 
\begin{bmatrix}
\bPsi(\widetilde{x}_{i_1,j})\trans  \\
\vdots \\
\bPsi(\widetilde{x}_{i_{|I|},j})\trans 
\end{bmatrix}\in\R^{|I|\times r}
\end{equation*}

Note that after the operation $\text{swap}(\Acal_1)$, the selection probability $\widehat{\Pi}_j^{\text{new}}$ obtained by (\ref{eqn:calofpi}) satisfies
\begin{equation*}
\widehat{\Pi}^{\text{new}}_j = 
\begin{cases}
\widehat{\Pi}_j\quad&j\not\in \Acal_1,\\
\widehat{\Pi}_{j+p} \text{ or }\widehat{\Pi}_{j-p}\quad&j\in \Acal_1,
\end{cases}
\end{equation*}
Since $\Delta_j = \widehat{\Pi}_j- \widehat{\Pi}_{j+p}$ is an asymmetric function, we have (\ref{eqn:flipsign}) holds.

\paragraph{Step 2.} We show that the \textit{exchangeability} holds, in that
\begin{equation}
\label{eqn:exchangeability}
\left(\{\bSigma(I)\}_{\text{swap}(\Acal_2)},\by(I)\right) \overset{d}{=} \left(\bSigma(I),\by(I)\right),
\end{equation}
where $\Acal_2\subseteq\cS^\perp$ is any set of null variables. Without loss of generality, assume that $\Acal_2=\{1,2,\ldots, |\cS^{\perp}|\}$. Since the rows of $(\bSigma(I),\by(I))\in\R^{|I|\times(2pr+1)}$ are independent, it suffices to show that
\begin{equation*}
   \left(\{\bPsi_{2p}(\bX)\trans\}_{\text{swap}(\Acal_2)},Y\right) \overset{d}{=} \left(\bPsi_{2p}(\bX)\trans,Y\right),
\end{equation*}
where $\bPsi_{2p}(\bX)\trans$ is a row of the matrix $\bSigma(I)$, i.e., 
\begin{equation*}
\bPsi_{2p}(\bX)\trans \equiv \left[\bPsi(X_{1})\trans,\ldots, \bPsi(X_{p})\trans,\bPsi(\widetilde{X}_{1})\trans,\ldots,\bPsi(\widetilde{X}_{p})\trans \right] \in \R^{1\times 2pr}.
\end{equation*}
Furthermore, since $\{\bPsi\trans_{2p}(\bX)\}_{\text{swap}(\Acal_2)}\overset{d}{=}\bPsi\trans_{2p}(\bX)$, we only need to establish that
\begin{equation}
\label{eqn:indepYpsi2p}
Y|\{\bPsi\trans_{2p}(\bX)\}_{\text{swap}(\Acal_2)} \overset{d}{=} Y|\bPsi\trans_{2p}(\bX).
\end{equation}

Let $\bPsi_p(\widetilde{\bX}) = \left[\bPsi(\tilde{X}_1)\trans,\ldots,\bPsi(\tilde{X}_p)\trans \right]\trans \in \R^{ pr}$ and $\bPsi_p(\bX) = \left[\bPsi(X_1)\trans,\ldots,\bPsi(X_p)\trans \right]\trans \in \R^{ pr}$. Here for each $j=1,\ldots,p$, $\bPsi(X_j) = \left[ \psi_1(X_j),\ldots,\psi_r(X_j) \right]\trans \in \R^r$, where $\psi_k(X_j)\in\Hcal_1$, with $k=1,\ldots,r$. Since $Y \independent \widetilde{\bX}|\bX$, by Proposition \ref{prop:nullequivalence}, we have that, 
\begin{equation}
\label{eqn:indepxtildex}
Y\independent \bPsi_p(\widetilde{\bX})| \bPsi_p(\bX),
\end{equation} 
The conditional distribution  of $Y$ satisfies, 
\begin{equation*}
\begin{aligned}
p_{Y|\{\bPsi_{2p}(\bX)\}_{\text{swap}(\Acal_2)\trans}}\left(y|\bPsi_{2p}(\bx)\trans\right) & =   p_{Y|\bPsi_{2p}(\bX)\trans}\left(y|\{\bPsi_{2p}(\bx)\trans\}_{\text{swap}(\Acal_2)}\right)\\
& = p_{Y|\bPsi_{p}(\bX)\trans}\left(y|\bPsi_{p}(\bx')\trans\right),
\end{aligned}
\end{equation*} 
where $\bx'=(x_1',\ldots,x_p')\in\R^{p}$ is a vector defined as $x_j'=\widetilde{x}_{j}$ if $j\in \Acal_2$ and $x_{j}'=x_{j}$ otherwise. The second equality above comes from (\ref{eqn:indepxtildex}), and the assumption that $\Acal_2$ is a set of null variables. Since $X_1 \in \Acal_2$, $Y \indep X_1$ given the rest of variables $X_2$ to $X_p$. Then, by Proposition \ref{prop:nullequivalence}, we have,
\begin{equation*}
Y \independent \bPsi(X_1)| [\bPsi(X_2)\trans,\ldots,\bPsi(X_p)\trans ].
\end{equation*} 
Then
\begin{equation*}
\begin{aligned}
p_{Y|\bPsi\trans_{p}(\bX)}\left(y|[\bPsi(\tilde{x}_1)\trans, \bPsi(x'_2)\trans,\ldots,\bPsi(x'_p)\trans]\right) 
= \; & p_{Y|[\bPsi(X_2)\trans,\ldots,\bPsi(X_p)\trans]}\left(y|[\bPsi(x'_2)\trans,\ldots,\bPsi(x'_p)\trans]\right) \\
= \; & p_{Y|\bPsi\trans_{p}(\bX)}\left(y|[\bPsi(x_1)\trans, \bPsi(x'_2)\trans,\ldots,\bPsi(x'_p)\trans]\right).
\end{aligned}
\end{equation*}
This shows that
\begin{equation*}
    Y|\{\bPsi\trans_{2p}(\bX)\}_{\text{swap}(\Acal_2)} \overset{d}{=} Y|\{\bPsi\trans_{2p}(\bX)\}_{\text{swap}(\Acal_2\backslash\{1\})}.
\end{equation*}
By repeating the argument with $\text{swap}(\Acal_2\backslash\{1,2\})$, $\text{swap}(\Acal_2\backslash\{1,2,3\})$, and so on, until $\Acal_2$ is empty, completes the proof of (\ref{eqn:indepYpsi2p}).

\begin{remark}
For the kernel learning based on the original variables in Section \ref{sec:background}, without using the random feature mapping, the representer theorem \citep{wahba1990spline} suggests that the estimator satisfies
\begin{equation*}
\widetilde{f}(\bX) = \sum_{i=1}^n\alpha_iK_p(\bX,\bx_i) + \sum_{i=1}^n\alpha_{i+n}K_p(\bX,\widetilde{\bx}_i).
\end{equation*}
Then the selected set $\widehat{\mathcal S}(I)$ by (\ref{eq: kernel_reg}) is a function of 
\begin{equation*}
\check{\bSigma}(I)\equiv \begin{bmatrix}
K_p(\bx_{i_1},\bx_{i_1}) & \cdots & K_p(\bx_{i_1},\bx_{i_{|I|}}) & K_p(\bx_{i_1},\widetilde{\bx}_{i_1}) & \cdots & K_p(\bx_{i_1},\widetilde{\bx}_{i_{|I|}})\\
\vdots &  & \vdots & \vdots &  & \vdots\\
K_p(\bx_{i_{|I|}},\bx_{i_1}) & \cdots & K_p(\bx_{i_{|I|}},\bx_{i_{|I|}}) & K_p(\bx_{i_{|I|}},\widetilde{\bx}_{i_1}) & \cdots & K_p(\bx_{i_{|I|}},\widetilde{\bx}_{i_{|I|}})
\end{bmatrix}\in\R^{|I|\times 2|I|},
\end{equation*}
where $I=(i_1,\ldots,i_{|I|})$.
For a general kernel function $K_p$, both the response $Y$ and the kernel function $K_p(\bX,\widetilde{\bX})$ are dependent on $\bX$. Therefore, 
\begin{equation*}
Y \not\perp \! \!\! \perp K_p(\bX,\widetilde{\bX})| \bX,
\end{equation*} 
which violates the conditional independence in (\ref{eqn:indepxtildex}). Consequently, the exchangeability result (\ref{eqn:exchangeability}) may not hold without using the random feature mapping. 

We tackle this problem explicitly by adopting the feature expansion for the kernel $K_p$. The feature expansion  ensures the estimator to follow a finite linear combination of basis functions as in (\ref{eqn:rfrepresentation}). As we show in the proof above, the conditional independence (\ref{eqn:indepxtildex}) holds.
\end{remark}

\paragraph{Step 3.} We next combine the results in the previous two steps to complete the proof. Let a sequence of independent random variables  $\bs=(s_1,\ldots,s_p)$ satisfy $s_j=\pm1$ with probability $1/2$ if $j\in\cS^\perp$, and $s_j=1$ otherwise. Let $\Acal_3=\{j:s_j=-1\}$. Then $\Acal_3$ only consists of some null variables. Define the swapped importance score as,
\begin{equation*}
\Delta_{\text{swap}(\Acal_3)} \equiv  \Delta\left(\{\bSigma(I)\}_{\text{swap}(\Acal_3)},\by(I)\right)\in\R^p.
\end{equation*}
By the flip sign property in Step 1,
\begin{equation*}
\Delta_{\text{swap}(\Acal_3)}=\bs\odot \Delta\left(\bSigma(I),\by(I)\right),
\end{equation*}
where $\bs\odot \Delta\equiv (s_1\Delta_1,\ldots,s_p\Delta_p)$. Moreover, by the exchangeability in Step 2, 
\begin{equation*}
\Delta_{\text{swap}(\Acal_3)}\overset{d}{=}\Delta\left(\bSigma(I),\by(I)\right).
\end{equation*}
 Therefore, we have that, 
\begin{equation}
\label{eqn:detlaodot}
\Delta\left(\bSigma(I),\by(I)\right)\overset{d}{=} \bs\odot\Delta\left(\bSigma(I),\by(I)\right).
\end{equation}
This completes the proof of Theorem \ref{prop:coinflip}.
\end{proof}

%%%%%%%%%%%%%%%%%%%%%%%%%%%%%%%%%%%%%%%%%%%%%%%%%%%
\subsection{Proof of Theorem \ref{prop:mFDRcontrol}}

\begin{proof}
The proof follows that of Theorems 1 and 2 in \citet{BC15}.
\end{proof}

%%%%%%%%%%%%%%%%%%%%%%%%%%%%%%%%%%%%%%%%%%%%%%%%%%%
\subsection{Proof of Theorem \ref{thm:optimalrecovery}}
\label{sec:pfofthmrecovery}

We first present the main proof in Section \ref{sec:mainproof}, then give three auxiliary lemmas that are useful for the proof of this theorem in Section \ref{sec:auxiliarypower}.

%%%%%%%%%%%%%%%%%%%%%%%%%%%%%%%%%%%%%%%%%%%%%%%%%%%
\subsubsection{Main proof}
\label{sec:mainproof}

\begin{proof} 
We begin with some notation. For any given set $\cA\subset\{1,\ldots,p\}$ and a matrix $\bG\in\R^{2p\times 2p}$, let $\bG_{\cA,\cA}\in\R^{2|\cA|\times 2|\cA|}$ denote the submatrix formed by the columns and rows in the set $\{j:j\in\cA\text{ or }j-p\in\cA\}$, and $\bG_\cA$ the submatrix by the columns in the set $\cA$. Similarly, for a vector  $\ba\in\R^{2p}$, let $\ba_{\cA}\in\R^{2|\cA|}$ denote the subvector formed by components in the set $\{j:j\in\cA\text{ or }j-p\in\cA\}$. Define
\begin{equation*}
\widetilde{\bG}(I) = \frac{1}{|I|}\bSigma(I)\trans\bSigma(I)\in\R^{2pr\times 2pr}\quad \text{and}\quad \widetilde{\brho}(I) = \frac{1}{|I|}\bSigma(I)\trans\bz_I\in\R^{2p}
\end{equation*}
where $\bz_I \in \R^{|I|}$ has the $i$th element equal to $(y_i - \sum_{j\in I}y_j/|I|)$, and the matrix $\bSigma(I)\in\R^{|I|\times 2pr}$ is as defined in (\ref{eqn:defofSigmaI}). Set $\widehat{\bc}(I) = \left[ \widehat{c}_1(I)\trans,\ldots,\widehat{c}_{2p}(I)\trans \right]\trans$, where $\widehat{c}_j(I)=\mathbf{0}\in\R^r$ for $j\not\in\cS$, and $\widehat{\bc}_{\cS}(I)$ is the minimizer of the partial penalized likelihood,
\begin{equation*}
\widehat{\bc}_{\cS}(I) = \underset{\bc\in\R^{2r}}{\arg\min}\left\{\bc\trans\widetilde{\bG}_{\cS,\cS}(I)\bc-\widetilde{\brho}_\cS(I)\trans\bc+\tau\sum_{j\in\cS}\|c_j(I)\|_2\right\}.
\end{equation*}
By Lemma \ref{lem:keyinfo}  in Section \ref{sec:auxiliarypower},  $\widehat{\cS}(I) = \{j:\widehat{c}_j(I) \neq \mathbf{0}\}$. Recall the definition of $\widehat{\Pi}_j$ in (\ref{eqn:calofpi}) that $\widehat{\Pi}_j= \P\{j\in\widehat{\mathcal S}(I)\}$, $j=1,\ldots,2p$. Let $\Pi_j^* = \mathbf{1}\{j\in\cS\}$. Then  Lemma \ref{lem:keyinfo} implies that,
\begin{equation}
\label{eqn:bdonl1norm}
\sup_{\cS}\sum_{j=1}^{2p}|\widehat{\Pi}_j-\Pi_j^*| = o_p(1),
\end{equation}
Let $\Delta_j^{\cS}$ be the importance score based on the above $\widehat{\bc}(I)$.  Let $|\Delta_{(1)}^\cS|\geq\cdots\geq |\Delta_{(p)}^{\cS}|$ be the ordered importance scores by their magnitudes. Denote by $j^*$ the index such that $|\Delta_{(j^*)}^{\cS}|=T$, where $T$ is the knockoff threshold defined in (\ref{eqn:filterT}). By the definition of $T$, we have $-T< \Delta_{(j^*+1)}^{\cS}\leq 0$. We consider two scenarios, $-T<\Delta_{(j^*+1)}^{\cS}<0$, and $\Delta_{(j^*+1)}^{\cS}=0$, respectively.

\paragraph{Step 1.} For the scenario when $-T<\Delta_{(j^*+1)}^\cS<0$, we have, by the definition of $T$, 
\begin{equation}
\label{eqn:bd+2q}
\frac{|\{j:\Delta_j^{\cS}\leq -T\}+2|}{|\{j:\Delta_j^\cS\geq T\}|}>q,
\end{equation}
which is true, because otherwise, $|\Delta_{(j^*+1)}^\cS|$ would be the new smaller threshold for knockoffs. We next bound $T$. By (\ref{eqn:bd+2q}) and Lemma \ref{lem:keyinfo}, with probability approaching one, we have,
\begin{equation*}
|\{j:\Delta_j^\cS\leq -T\}|>q|\{j:\Delta_j^\cS\geq T\}|-2\geq qc_q|\cS|-2,
\end{equation*}
for some $c_q\geq 1$.
Moreover, when $\Delta_j^\cS\leq -T$, we have $\widehat{\Pi}_j-\widehat{\Pi}_{j+p}\leq -T$, and thus $\widehat{\Pi}_{j+p}\geq T$. By (\ref{eqn:bdonl1norm}), we have that, 
\begin{equation}
\label{eqn:bdingtdeltaj}
\begin{aligned}
o_p(1)  = \sum_{j=1}^{2p}|\widehat{\Pi}_j-\Pi_j^*|  \geq \sum_{j\in\{j:\Delta_j^\cS\leq -T\}}|\widehat{\Pi}_{j+p}| \geq T|\{j:\Delta_j^\cS\leq -T\}|.
\end{aligned}
\end{equation}
Henceforth, 
\begin{equation}
\label{eqn:boundonT}
T\leq \frac{\sum_{j=1}^{2p}|\widehat{\Pi}_j-\Pi_j^*|  }{qc_q|\cS|-2}=o_p(1),\quad \text{as}\quad n\to\infty.
\end{equation}

We next consider the type II error. By (\ref{eqn:bdonl1norm}), we have that, 
\begin{equation*}
\begin{aligned}
o_p(1)&= \sum_{j=1}^{2p}|\widehat{\Pi}_j-\Pi_j^*| =\sum_{j=1}^p\left[|\widehat{\Pi}_j-\Pi_{j}^*| + \widehat{\Pi}_{j+p}\right] \\
& \geq \sum_{j\in\cS\cap(\widehat{\cS})^c}\left[|\widehat{\Pi}_j-\Pi_{j}^*| + \widehat{\Pi}_{j+p}\right] 
\geq \sum_{j\in\cS\cap(\widehat{\cS})^c}\left[|\widehat{\Pi}_j-\Pi_{j}^*| + \widehat{\Pi}_{j}-T\right],
\end{aligned}
\end{equation*}
where the last inequality is due to the fact that $\widehat{\Pi}_{j+p}\geq \widehat{\Pi}_{j}-T$ when $j\in(\widehat{\cS})^c$. By the triangle inequality and that $\Pi_{j}^*=1$ for $j\in\cS$, we have that,
\begin{equation*}
o_p(1)\geq \sum_{j\in \cS\cap (\widehat{\cS})^c} (\Pi_{j}^*-T)= (1-T)\left|\cS \cap (\widehat{\cS})^c\right|.
\end{equation*}
By (\ref{eqn:boundonT}), we obtain that, 
\begin{equation*}
\begin{aligned}
\frac{|\widehat{\cS}\cap\cS|}{|\cS|\vee 1} &=  1-\frac{|(\widehat{\cS})^c\cap\cS|}{|\cS|\vee 1} \geq 1-\frac{1}{(1-T)|\cS|}o_p(1)= 1-o_p(1).
\end{aligned}
\end{equation*}

\paragraph{Step 2.} For the scenario when $\Delta_{(j^*+1)}^\cS=0$, we have that, 
\begin{equation*}
\widehat{\cS} = \{j:\Delta_j^\cS>0\},
\end{equation*}  
and  $\{j:\Delta_j^\cS<-T\} = \{j:\Delta_j^\cS<0\}$. 
Let $\widetilde{r}_n = \sum_{j=1}^{2p}|\widehat{\Pi}_j-\Pi_j^*| $, where both $\widehat{\Pi}_j$ and $\Pi_j^*$ depend on $n$. By Lemma \ref{lem:keyinfo}, $\widetilde{r}_n\to 0$ as $n\to\infty$. Define the sequence $\{\widetilde{q}_n\}$ with $\widetilde{q}_n = \sqrt{\widetilde{r}_n}$, $n\geq 1$. Then $\widetilde{q}_n\to 0$ as $n\to\infty$.

If $|\{j:\Delta_j^\cS<0\}|>\widetilde{q}_n|\cS|$, then similar to (\ref{eqn:bdingtdeltaj}), we have that, 
\begin{equation*}
\begin{aligned}
\sum_{j=1}^{2p}|\widehat{\Pi}_j-\Pi_j^*|  \geq \sum_{j\in\{j:\Delta_j^\cS< -T\}}|\widehat{\Pi}_{j+p}| \geq T|\{j:\Delta_j^\cS< -T\}|.
\end{aligned}
\end{equation*}
Henceforth, 
\begin{equation*}
T\leq \frac{\sum_{j=1}^{2p}|\widehat{\Pi}_j-\Pi_j^*|  }{|\{j:\Delta_j^\cS< -T\}|}= \frac{\sum_{j=1}^{2p}|\widehat{\Pi}_j-\Pi_j^*|  }{|\{j:\Delta_j^\cS< 0\}|}<  \frac{\widetilde{r}_n  }{\widetilde{q}_n|\cS|} = \frac{\sqrt{\widetilde{r}_n}  }{|\cS|} =o_p(1),\quad \text{as}\quad n\to\infty.
\end{equation*}
Therefore it reduces to Step 1, and the arguments therein follow. 

If $|\{j:\Delta_j^\cS<0\}|\leq \widetilde{q}_n|\cS|$, then, by noting $\widehat{\cS} = \text{supp}(\Delta^\cS)\backslash\{j:\Delta_j^\cS<0\}$, we have that, 
\begin{equation}
\label{eqn:intersecthatss0}
\begin{aligned}
\left|\widehat{\cS}\cap \cS\right|  = \left|\text{supp}(\Delta^\cS)\cap\cS\right| - \left|\{j:\Delta_j^\cS<0\}\cap\cS\right|
\geq  \left|\text{supp}(\Delta^\cS)\cap\cS\right|-\widetilde{q}_n|\cS|.
\end{aligned}
\end{equation}
Define $\cS_1=\{1\leq j\leq p:\widehat{c}_j(I)=\mathbf{0}\}$. Note that, 
\begin{equation}
\label{eqn:inclusions1}
\{1,\ldots,p\}\backslash \cS_1\subset \text{supp}(\Delta^\cS),
\end{equation}
By (\ref{eqn:bdonl1norm}), we have that, 
\begin{equation*}
\begin{aligned}
o_p(1) = \sum_{j=1}^{2p}|\widehat{\Pi}_j-\Pi_j^*| \geq \sum_{j\in\cS_1\cap\cS}|\widehat{\Pi}_j-\Pi_j^*| 
= \sum_{j\in\cS_1\cap\cS}\Pi^*_{j}\geq |\cS_1\cap\cS|\min_{j\in\cS}\Pi_{j}^*.
\end{aligned}
\end{equation*}
Note that $\Pi_j^*=1$ for $j\in\cS$. Then the inequality above implies that,
\begin{equation*}
\left|\cS_1\cap\cS\right| = o_p(1).
\end{equation*}
Henceforth, $\left|\left(\{1,\ldots,p\}\backslash\cS_1\right)\cap\cS\right|=|\cS|[1-o_p(1)]$. Combining this result with (\ref{eqn:inclusions1}) yields that, 
\begin{equation}
\label{eqn:suppWs}
\left|\text{supp}(\Delta^\cS)\cap\cS\right|\geq \left|\left(\{1,\ldots,p\}\backslash \cS_1\right)\cap\cS\right|=|\cS|[1-o_p(1)].
\end{equation}
By (\ref{eqn:intersecthatss0}) and (\ref{eqn:suppWs}), it holds that
\begin{equation*}
\frac{|\widehat{\cS}\cap\cS|}{|\cS|\vee 1}\geq 1-o_p(1).
\end{equation*}

\paragraph{Step 3.} Combining the two scenarios, we obtain that, with probability approaching one, 
\begin{equation*}
\frac{|\widehat{\cS}\cap\cS|}{|\cS|\vee 1}\geq 1-o_p(1),
\end{equation*}
which further implies that,
\begin{equation*}
\begin{aligned}
\text{Power}(\widehat{\cS}) = \E\left[\frac{|\widehat{\cS}\cap\cS|}{|\cS|\vee 1}\right]
\geq 1-o(1).
\end{aligned}
\end{equation*}
This completes the proof of Theorem \ref{thm:optimalrecovery}.
\end{proof}

%%%%%%%%%%%%%%%%%%%%%%%%%%%%%%%%%%%%%%%%%%%%%%%%%%%
\subsubsection{Auxiliary lemmas for Theorem \ref{thm:optimalrecovery}}
\label{sec:auxiliarypower}

%%%%%%%%%%%%%%%%%%%%%%%%%%%%%%%%%%%%%%%%%%%%%%%%%%%
\noindent 
For any $g\in\Hcal$, define the norm, $\|g(x(t))\|_n =\sqrt{(1/n) \sum_{i=1}^ng^2(x(t_i))}$. The next lemma gives a useful concentration bound for functional empirical processes.
\begin{lemma}
\label{lem:bdonradamacher1}
Suppose that $g=\sum_{j=1}^pg_j\in\Hcal_p$, and the errors $\{\epsilonbf_{i}\}_{i=1}^n$ are i.i.d.\ normal. Then there exists some constant $C>0$, such that, for any $c_1>0$ and $c_2>1$, with probability at least $1-2p^{-c_1}$,
\begin{equation*}
\begin{aligned}
 \frac{1}{n}\sum_{i=1}^n\epsilon_{i}g(x_i)
\leq \; & C \Bigg\{ c_2^{-\frac{4\beta}{2\beta-1}}\|g\|_{L_2}^2+ \left( c_2^{-\frac{4\beta}{2\beta-1}}+c_2^{\frac{4\beta}{4\beta+1}} \right)  n^{-\frac{2\beta}{2\beta+1}} \\
& \quad\quad + \left( c_2^{-\frac{4\beta}{2\beta-1}} + c_1 + 1 \right) \frac{\log p}{n} + \sqrt{(c_1+1) \frac{\log p}{n} } \|g\|_{L_2} + n^{-1/2}e^{-p} \Bigg\}.
\end{aligned}
\end{equation*}
\end{lemma}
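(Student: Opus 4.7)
The plan is to control the Gaussian empirical process $W(g) = n^{-1}\sum_{i=1}^n \epsilon_i g(x_i)$ uniformly over $g = \sum_{j=1}^p g_j \in \mathcal{H}_p$ by combining (i) metric-entropy bounds tailored to the $\beta$th-order Sobolev embedding of $\mathcal{H}_1$, (ii) chaining / Dudley's entropy integral for sub-Gaussian processes, and (iii) a peeling (slicing) argument on the $L_2$-norm scale, with $c_2$ serving as the peeling tuning parameter. Since the errors are i.i.d.\ Gaussian, conditional on $\{x_i\}$ the process $W$ is itself Gaussian with variance $\sigma^2 \|g\|_n^2/n$, which is the cleanest setting in which to deploy Gaussian concentration; the role of $c_1$ is merely to set the failure probability through a $\sqrt{\log p}$ tail inflation.

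First, I would reduce to the univariate RKHS case by the additive structure: write $W(g)=\sum_{j=1}^p W_j(g_j)$, take a union bound over $j\in\{1,\dots,p\}$ (this produces the $\log p$ factors), and for each coordinate use that $B_1(\mathcal{H}_1)$ embedded in a $\beta$th-order Sobolev space satisfies the classical entropy bound $\log N(u,B_1,\|\cdot\|_{L_2}) \lesssim u^{-1/\beta}$. Applying Dudley's entropy integral to a Gaussian process restricted to $\{g_j : \|g_j\|_{L_2}\le \delta,\; \|g_j\|_{\mathcal{H}_1}\le 1\}$ yields
\begin{equation*}
\mathbb{E}\sup W_j \;\lesssim\; n^{-1/2}\int_0^{\delta}\sqrt{\log N(u,B_1,\|\cdot\|_{L_2})}\,du \;\lesssim\; n^{-1/2}\delta^{1-1/(2\beta)},
\end{equation*}
whose fixed point is the minimax rate $\delta^\star \asymp n^{-\beta/(2\beta+1)}$. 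Standard Gaussian concentration (or the Borell--TIS inequality) then promotes this bound to a high-probability statement with a tail of order $\sqrt{(c_1+1)\log p/n}\,\|g\|_{L_2}$, accounting for the union bound across the $p$ coordinates and generating the $\sqrt{(c_1+1)(\log p)/n}\,\|g\|_{L_2}$ and $(c_1+1)(\log p)/n$ terms. The residual $n^{-1/2}e^{-p}$ term I would absorb from bounding the tail of the entropy integral beyond a truncation radius chosen at scale $\sim e^{-p}$.

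The main obstacle, and the step that has to be done carefully, is the peeling argument that introduces $c_2$ and yields the two exponents $c_2^{-4\beta/(2\beta-1)}$ and $c_2^{4\beta/(4\beta+1)}$. The idea is to slice the parameter space by the $L_2$-radius $\|g\|_{L_2}\in[2^k\delta^\star,2^{k+1}\delta^\star]$ and apply the local bound on each shell. On a shell of radius $\delta$, the Dudley bound gives a supremum of order $n^{-1/2}\delta^{1-1/(2\beta)}$, and using Young's inequality $ab\le c_2^{-\gamma}a^2+c_2^{\gamma'}b^2$ with the sharp exponents determined by $\beta$ converts this into a bound of the form $c_2^{-4\beta/(2\beta-1)}\|g\|_{L_2}^2 + c_2^{4\beta/(4\beta+1)}n^{-2\beta/(2\beta+1)}$; the exponents are exactly those produced by optimizing the exchange between the local radius $\delta$ and the fixed-point rate $\delta^\star$. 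A final union bound over the $\lesssim \log n$ dyadic shells, together with the coordinate-wise union bound over $j=1,\dots,p$, delivers the claimed bound with failure probability at most $2p^{-c_1}$. The delicate part is bookkeeping the constants so that the same $c_2^{-4\beta/(2\beta-1)}$ factor appears on both the $\|g\|_{L_2}^2$ term and on the minimax-rate term (reflecting the two ways the shell contribution splits), while keeping the $c_1$-dependence linear in the $\log p/n$ terms.
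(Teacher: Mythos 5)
Your proposal takes a genuinely different route from the paper. The paper does not run a chaining/Dudley argument from scratch: it cites a ready-made concentration inequality (Lemma 2.2 of Yuan (2016) and Corollary 8.3 of van de Geer (2000)) that, with probability $1-p^{-c_1}$, directly bounds $n^{-1}\sum_i\epsilon_i g(x_i)$ by
$n^{-1/2}\sum_j\|g_j\|_n^{1-1/(2\beta)}\|g_j\|_K^{1/(2\beta)}$ plus a $\sqrt{\log p}\cdot\sum_j\|g_j\|_n$ term and an $e^{-p}\sum_j\|g_j\|_K$ term. The parameter $c_2$ then enters once, through a single application of weighted Young's inequality to the interpolation term $\|g_j\|_n^{1-1/(2\beta)}\|g_j\|_K^{1/(2\beta)}$; there is no peeling over $L_2$-shells in the paper's proof. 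You would likely arrive at the same structure if you carried the chaining through, since those cited inequalities are themselves proved by entropy/chaining, but some steps in your plan are currently gaps rather than details:

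(1) You never convert from the empirical norm $\|g\|_n$ to the population norm $\|g\|_{L_2}$. As you note yourself, conditional on the design the process is Gaussian with variance proportional to $\|g\|_n^2/n$, so chaining gives a bound in $\|g\|_n$. The target bound is in $\|g\|_{L_2}$. This gap is bridged in the paper by Theorem 4 of Koltchinskii and Yuan (2010), which gives, uniformly over the class and with high probability, $\sum_j\|g_j\|_n^2\lesssim\sum_j\|g_j\|_{L_2}^2+\{n^{-2\beta/(2\beta+1)}+(\log p)/n\}\sum_j\|g_j\|_K^2$ (and a similar first-power version for $\Delta_2$). Without a step of this kind, the constants in your bound float in $\|g\|_n$ rather than $\|g\|_{L_2}$.

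(2) You never invoke the bounded-sparsity assumption (Assumption~\ref{assump:complexity}). Every one of your summations over $j=1,\dots,p$ — the Dudley bounds, the Young-inequality residuals, the $e^{-p}$ term — produces something proportional to $\sum_j\|g_j\|_{\Hcal_1}^{\alpha}$ for various $\alpha>0$. Those sums are not automatically $O(1)$; they are controlled in the paper by the crude bound $\sum_j\|g_j\|_K^{\alpha}\le C_2'\sum_j\|g_j\|_K^0\le C_2$, which is exactly where Assumption~\ref{assump:complexity} is used. A union bound over $j$ alone does not substitute for this: the union bound controls each $W_j$ uniformly, but the final quantity is a sum of $p$ contributions, not a max. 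Finally, your explanation of the $n^{-1/2}e^{-p}$ term (a truncation radius at scale $e^{-p}$ in the entropy integral) is not how that term arises; it is a built-in remainder from the cited concentration inequality, tied again to $\sum_j\|g_j\|_K$ and thus again needing the sparsity bound.
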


\noindent
\begin{proof}
We first note that, the $\nu$th eigenvalue of the reproducing kernel of RKHS $\Hcal_1$ is of order $\nu^{-2\beta}$, for $\nu\geq 1$ \citep[see, e.g.,][]{bach2017equivalence}. Since $\{\epsilon_{i}\}_{i=1}^{n}$ are i.i.d.\ normal, by Lemma 2.2 of \citet{Yuan2016} and
Corollary 8.3 of \citet{Vandegeer2000}, we have that, for any $c_1>0$, with probability at least $1-p^{-c_1}$,
\begin{equation}
\label{eqn:empmean}
\begin{aligned}
 \frac{1}{n}\sum_{i=1}^n\epsilon_{i}g(x_i) 
\leq \; & 2C_1n^{-1/2}\sum_{j=1}^p \|g_{j}(x)\|_{n}^{1-\frac{1}{2\beta}}  \|g_{j}(x)\|_{K}^{\frac{1}{2\beta}} \\
& + 2C_1n^{-1/2}\sqrt{(c_1+1)\log p}\sum_{j=1}^p  \|g_{j}(x)\|_{n}  + 2C_1n^{-1/2}e^{-p}\sum_{j=1}^p\|g_{j}(x)\|_{K} \\ 
\equiv \; & 2C_1 (\Delta_1 + \Delta_2+ \Delta_3), 
\end{aligned}
\end{equation}
for some constant $C_1$. Next, we bound the three terms $\Delta_1, \Delta_2, \Delta_3$ in \eqref{eqn:empmean}, respectively. 

For $\Delta_1$, by the Young's inequality, for any $c_2>1$, we have, 
\begin{equation*}
\begin{aligned}
\Delta_1 \leq \; & c_2^{-\frac{4\beta}{2\beta-1}}\sum_{j=1}^p \|g_{j}(x)\|_{n}^{2} + c_2^{\frac{4\beta}{4\beta+1}}n^{-\frac{2\beta}{2\beta+1}}\sum_{j=1}^p  \|g_{j}\|_{K}^{\frac{2}{2\beta+1}}.
\end{aligned}
\end{equation*}
Note that 
\begin{equation*}
\sum_{j=1}^p  \|g_{j}\|_{K}^{\frac{2}{2\beta+1}} \leq C'_2 \sum_{j=1}^p \left( \|g_{j}\|_{K}\right)^{0}  \leq C_2, 
\end{equation*}
for some constants $C_2',C_2$, where the last inequality is due to Assumption \ref{assump:complexity} that the number of nonzero functional components of $\{g_1,\ldots,g_p\}$ is bounded. Henceforth,
\begin{equation}
\label{eqn:firststepbd}
\begin{aligned}
 n^{-1/2}\sum_{j=1}^p\|g_{j}(x)\|_{n}^{1-\frac{1}{2\beta}} \|g_{j}\|_{K}^{\frac{1}{2\beta}}
\leq \;  c_2^{-\frac{4\beta}{2\beta-1}}\sum_{j=1}^p\|g_{j}(x(t))\|_{n}^{2} + c_2^{\frac{4\beta}{4\beta+1}}n^{-\frac{2\beta}{2\beta2+1}}C_2.
\end{aligned}
\end{equation}
By Theorem 4 of \citet{koltchinskii2010sparsity}, there exists some constant $C_3>0$, such that, with probability at least $1-p^{-c_1}$,
\begin{equation*}
\begin{aligned}
\sum_{j=1}^p \|g_j(x)\|_n^2
\leq 2C_3^2\sum_{j=1}^p  \|g_{j}(x)\|_{L_2}^2 
+ 2C_3^2\left\{ n^{-\frac{2\beta}{2\beta+1}}+\frac{(c_1+1)\log p}{n} \right\} \sum_{j=1}^p  \|g_{j}\|_{K}^2.
\end{aligned}
\end{equation*}
Note that there exists some constant $c_3>1$, such that 
$\sum_{j=1}^p  \|g_{j}\|_{L_2}^2 \leq c_3  \|g\|_{L_2}^2$,
 and 
$\sum_{j=1}^p \|g_{j}\|_{K}^2 \leq \sum_{j=1}^p \|g_{j}\|_{K}^0\leq C_2$. Then, we have
\begin{equation*}
\begin{aligned}
\sum_{j=1}^p  \|g_{j}(x)\|_n^2 \leq 2C_3^2c_3 \|g(x)\|_{L_2}^2 + 2C_2C_3^2\left\{n^{-\frac{2\beta}{2\beta+1}} + \frac{(c_1+1)\log p}{n} \right\}.
\end{aligned}
\end{equation*}
Inserting into (\ref{eqn:firststepbd}) yields that
\begin{equation}
\label{eqn:firststepbdfinal}
\begin{aligned}
\Delta_1 \leq  2C_3^2c_3c_2^{-\frac{4\beta}{2\beta-1}}\|g\|_{L_2}^2  + 2C_2C_3^2c_2^{-\frac{4\beta}{2\beta-1}} \left\{ n^{-\frac{2\beta}{2\beta+1}}+\frac{(c_1+1)\log p}{n} \right\} +  C_2c_2^{\frac{4\beta}{4\beta+1}}n^{-\frac{2\beta}{2\beta2+1}}.
\end{aligned}
\end{equation}

For $\Delta_2$, by Theorem 4 of \citet{koltchinskii2010sparsity} again, there exists a constant $C_4>0$, such that
\begin{equation*}
\begin{aligned}
 \sum_{j=1}^p\|g_{j}(x)\|_{n} 
\leq \; & C_4\sum_{j=1}^p\|g_{j}(x)\|_{L_2}+C_4\left\{ n^{-\frac{\beta}{2\beta+1}}+\sqrt{\frac{(c_1+1)\log p}{n}} \right\}\sum_{j=1}^p\|g_{j}\|_{K}\\
\leq \; & C_4\sum_{j=1}^p\|g_{j}(x)\|_{L_2}+C_2C_4\left\{ n^{-\frac{\beta}{2\beta+1}}+\sqrt{\frac{(c_1+1)\log p}{n}}\right\}.
\end{aligned}
\end{equation*}
Define the set $\mathcal Q_1 \equiv \left\{ j=1,\ldots,p: \|g_{j}(x)\|_{L_2}>\sqrt{n^{-1}\log p} \right\}$. By the Cauchy-Schwartz inequality, we have, 
\begin{equation*}
\begin{aligned}
\sum_{j\in\mathcal Q_1}\|g_{j}(x)\|_{L_2} \leq | \mathcal Q_1 |^{1/2} \ \left(\sum_{j\in\mathcal Q_1}\|g_{j}(x(t))\|^2_{L_2}\right)^{1/2} 
\leq \sum_{j=1}^p\|g_{j}\|_\Hcal^0\cdot\left(\sum_{j=1}^p\|g_{j}\|^2_{L_2}\right)^{1/2}\leq C_2c_4\|g(x)\|_{L_2}, 
\end{aligned}
\end{equation*}
where $c_4>1$ satisfies that $\sum_{j=1}^p\|g_{j}(x)\|^2_{L_2}\leq c^2_4\|g(x)\|^2_{L_2}$. Next, define the set $\mathcal Q_2 \equiv \{j=1,\ldots,p: \|g_{j}(x)\|_{L_2}\leq\sqrt{n^{-1}\log p}\}$. By definition,
\begin{equation*}
\begin{aligned}
\sum_{j\in\mathcal Q_2}\|g_{j}(x(t))\|_{L_2}   \leq \sum_{j\in\mathcal Q_2}\|g_{j}(x)\|^0_{L_2}\sqrt{\frac{\log p}{n}} \leq C_2 \sqrt{\frac{\log p}{n}}.
\end{aligned}
\end{equation*}
Combining $\mathcal Q_1$ and $\mathcal Q_2$ gives,
\begin{equation*}
\begin{aligned}
\sum_{j=1}^p\|g_{j}(x)\|_{L_2} \leq \sum_{j\in\mathcal Q_1}\|g_{j}(x)\|_{L_2} + \sum_{j\in\mathcal Q_2}\|g_{j}(x)\|_{L_2}\leq C_2c_4\|g(x)\|_{L_2}+ C_2  \sqrt{\frac{\log p}{n}}.
\end{aligned}
\end{equation*}
Henceforth, we can bound $\Delta_2$ as,
\begin{equation*}
\begin{aligned}
\Delta_2 \leq C_2C_4c_4\sqrt{\frac{\log p}{n}}\|g(x)\|_{L_2} + C_2C_4n^{-\frac{\beta}{2\beta+1}}\sqrt{\frac{\log p}{n}} + 2C_2C_4\sqrt{(c_1+1)}\frac{\log p}{n}.
\end{aligned}
\end{equation*}

For $\Delta_3$, we have that, 
\begin{equation*}
\Delta_3 \leq n^{-1/2}e^{-p}\sum_{j=1}^p\|g_{j}\|^0_{K} \leq C_2n^{-1/2}e^{-p}.
\end{equation*}

Combining the bounds for $\Delta_1, \Delta_2$ and $\Delta_3$, and applying the Cauchy-Schwarz inequality completes the proof of Lemma \ref{lem:bdonradamacher1}. 
\end{proof}
\bigskip

%%%%%%%%%%%%%%%%%%%%%%%%%%%%%%%%%%%%%%%%%%%%%%%%%%%
The next lemma shows that the selection probabilities of variables by the kernel knockoffs procedure asymptotically recover the true selection probabilities.

\begin{lemma}
\label{lem:keyinfo}
Recall the definition of $\widehat{\Pi}_j$ in (\ref{eqn:calofpi}) that $\widehat{\Pi}_j= \P\{j\in\widehat{\mathcal S}(I)\}$ for $j=1,\ldots,2p$. Let $\Pi_j^* = \mathbf{1}\{j\in\cS\}$. Suppose Assumptions  \ref{eqn:perfect-relationship} to \ref{network:incoherence} hold. Then, for any true set $\cS$,
\begin{equation*}
\begin{aligned}
\sum_{j=1}^{2p}|\widehat{\Pi}_j-\Pi_j^*|=o_p(1), \quad \text{as}\quad n\to\infty.
\end{aligned}
\end{equation*}
\end{lemma}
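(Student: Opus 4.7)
The plan is to establish exact support recovery of the group Lasso problem \eqref{eq: kernel_reg}, namely $\widehat{\cS}(I) = \cS$ with high probability, via a primal-dual witness argument adapted to the random Fourier feature design on the subsample $I$. Once support recovery is in hand, the identity
\begin{equation*}
\sum_{j=1}^{2p} |\widehat{\Pi}_j - \Pi_j^*| = \E\bigl[|\cS \triangle \widehat{\cS}(I)| \,\big|\, \text{data}\bigr]
\end{equation*}
together with the a priori bound $|\widehat{\cS}(I)| \leq \lfloor n/2 \rfloor$ (group Lasso selects at most as many groups as observations) and Assumption \ref{assump:complexity} yields the target $o_p(1)$ bound, provided the failure probability of support recovery decays faster than $1/n$, which the concentration tools will supply.

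For the primal-dual construction, I would define the oracle solution $\widehat{\bc}^{\text{or}}(I)$ as the minimizer of \eqref{eq: kernel_reg} restricted to the blocks indexed by $\cS$, and verify two claims. First, every active block must be nonzero: a KKT expansion gives
\begin{equation*}
\widehat{\bc}^{\text{or}}_\cS(I) - \bc^*_\cS = \widetilde{\bG}_{\cS,\cS}(I)^{-1}\bigl[\widetilde{\brho}_\cS(I) - \widetilde{\bG}_{\cS,\cS}(I)\bc^*_\cS - \tau \bz_\cS\bigr],
\end{equation*}
where $\bc^*_\cS$ is the best random-feature approximation of $\{f_j\}_{j\in\cS}$ and $\bz_\cS$ is a subgradient of the group norm. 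Assumption \ref{network:dependency} combined with a matrix Bernstein bound (the Fourier features are uniformly bounded by $\sqrt{2/r}$) transfers the population spectrum to $\widetilde{\bG}_{\cS,\cS}(I)$. The residual decomposes into a noise-feature inner product, controlled block-wise at rate $\eta_R$ by Lemma \ref{lem:bdonradamacher1}, plus a random-feature approximation bias whose control at rate $n^{-\beta/(2\beta+1)}$ follows from $r \geq c_r n^{2\beta/(2\beta+1)}$ and the rates in \citet{rudi2017generalization}. By Assumption \ref{network:minregeffect} the per-block deviation $\eta_R$ is only a $1/\kappa_n$ fraction of $\min_{j\in\cS}\|f_j\|_{L_2}$, so each active block stays bounded away from zero.

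Second, every inactive index $j \in \{1,\ldots,2p\}\setminus\cS$ must satisfy the strict subgradient condition $\|\widetilde{\bG}_{j,\cS}(I)\widehat{\bc}^{\text{or}}_\cS(I) - \widetilde{\brho}_j(I)\|_2 < \tau$. Substituting the expression for $\widehat{\bc}^{\text{or}}_\cS(I)$ and invoking the first inequality of Assumption \ref{network:incoherence} bounds the quantity of interest by $\tau \xi_{\bSigma} \sqrt{|\cS|} + (\xi_{\bSigma}\sqrt{|\cS|}+1)\eta_R$ up to lower-order concentration terms; the second inequality of Assumption \ref{network:incoherence} then ensures this is strictly less than $\tau$. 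This argument applies uniformly to null originals and to knockoff coordinates, because by Proposition \ref{prop:nullequivalence} and the conditional independence $Y \independent \widetilde{\bX} \mid \bX$ the functional component associated with any inactive block is identically zero, so knockoff blocks enter $\widetilde{\bG}$ and $\widetilde{\brho}$ symmetrically with genuine nulls.

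The main obstacle is the triple stochastic source entering $\widetilde{\bG}(I)$ and $\widetilde{\brho}(I)$, namely the training data, the subsample $I$, and the Fourier frequencies, coupled with the possibility $p$ as large as $e^n$ permitted by Assumption \ref{network:minregeffect}. I would handle this with a two-stage argument: condition on the random features, apply Lemma \ref{lem:bdonradamacher1} (whose $\log p$ dependence is compatible with $p < e^n$) uniformly over subsamples via Gaussian tail bounds, then average out the features using the bias-variance tradeoff that is tight at $r \geq c_r n^{2\beta/(2\beta+1)}$. The delicate step is that the union bound over the inactive indices must push the dual-feasibility failure probability down to $o(1/n)$, so that it survives multiplication by the crude bound $|\widehat{\cS}(I)| \leq n/2$ used in the very first display; the sub-Gaussian noise tails together with the uniformly bounded Fourier features deliver the required exponential decay.
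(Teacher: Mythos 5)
Your proposal is correct and follows essentially the same route as the paper: a primal-dual witness argument for the restricted group Lasso on the subsample, with $\eta_R$ bounded by the concentration and deviation lemmas (Lemmas \ref{lem:bdonradamacher1} and \ref{lem:conds3}), mutual incoherence for strict dual feasibility, and Assumption \ref{network:minregeffect} to keep active blocks away from zero. The one thing you make explicit that the paper glosses over is the passage from high-probability exact recovery of $\cS$ to the $o_p(1)$ bound on $\sum_j|\widehat{\Pi}_j-\Pi_j^*|$ via the identity $\sum_j|\widehat{\Pi}_j-\Pi_j^*|=\E_{I,\text{features}}\bigl[|\cS\triangle\widehat{\cS}(I)|\bigr]$ and the requirement that the failure probability beat the size of $|\cS\triangle\widehat{\cS}(I)|$; the paper simply asserts $\widehat{\Pi}_j=\mathbf{1}\{j\in\cS\}$ at the end, so your accounting is a genuine, if minor, tightening rather than a different method.
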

\begin{proof}
We use the primal-dual witness method  to prove this lemma. Our result extends the previous techniques in \citet{ravikumar2010high} for the random feature basis and knockoff variables.

Recall that, by the representer theorem \citep{wahba1990spline} and the random feature mapping, the selection problem becomes \eqref{eq: kernel_reg}, i.e., 
\begin{equation}
\label{eqn:zjlasso}
\begin{aligned}
\widehat{\bc}_{2p}(I) = \underset{\underset{j=1,\ldots,2p}{c_{j} \in \mathbb R^{r}}}{\arg\min}~\left[\bz_I-\bSigma(I)\bc\right]\trans\left[\bz_I-\bSigma(I)\bc\right]+ |I|\tau \sum_{j=1}^{2p} \|c_{j}\|_2,
\end{aligned}
\end{equation}
where the ``response" is $\bz_I\in\R^{|I|}$ with the $i$th element equal to $(y_i - \sum_{j\in I}y_j/|I|)$, and the ``predictor" matrix is $\bSigma(I)\in\R^{|I|\times 2pr}$ as defined in (\ref{eqn:defofSigmaI}). Rewrite the matrix $\bSigma(I)$ as
\begin{equation*}
\bSigma(I) = 
\begin{bmatrix}
\Sigma_1(I) & \cdots &\Sigma_p(I) & \Sigma_{p+1}(I) & \cdots & \Sigma_{2p}(I) 
\end{bmatrix},
\end{equation*}
where $\Sigma_j(I)\in\R^{|I|\times r}$, for $j=1,\ldots,2p$. The vector $\bc(I)$ solves (\ref{eqn:zjlasso}) if it satisfies the Karush-Kuhn-Tucker (KKT) condition, 
\begin{equation}
\label{eqn:kkttheta}
\bSigma_j(I)\trans\left[\bSigma(I)\bc(I)-\bz_I\right] + \frac{1}{2}|I|\tau g_j(I) = 0, \quad j=1,\ldots,2p,
\end{equation}
where  for each $j=1,\ldots,2p$, $g_j(I)\in\R^r$ is defined by, 
\begin{equation}
\label{eqn:kkttg}
g_j(I)  = c_j(I)/\|c_j(I)\|_2 \;\; \text{if }c_j\neq 0, \quad\textrm{ and }\quad \|g_j(I)\|_2 \leq 1 \;\; \text{otherwise}.
\end{equation}

In order to apply the primal-dual witness method, we next construct an oracle primal-dual pair $(\widehat{\bc}(I),\widehat{\bg}(I))$ satisfying the KKT conditions (\ref{eqn:kkttheta}) and (\ref{eqn:kkttg}). Specifically,
\begin{itemize}
\item[(a)] We set $\widehat{\bc}(I) = \left[ \widehat{c}_1(I)\trans,\ldots,\widehat{c}_{2p}(I)\trans \right]\trans$ where  $\widehat{c}_{j}(I)=\mathbf{0}\in\R^r$ for $j\not\in \cS$.
\item[(b)] Let $\widehat{\bc}_{\cS}(I)$ be the minimizer of the partial penalized likelihood, 
\begin{equation}
\label{eqn:consttheta}
\left[\bz_I - \bSigma_{\cS}(I)\bc_{\cS}(I)\right]\trans \left[\bz_I - \bSigma_{\cS}(I)\bc_{\cS}(I)\right] + |I|\tau\sum_{j\in\cS}\|c_j(I)\|_2.
\end{equation}
\item[(c)] We obtain $\widehat{\bg}_{\cS^\perp}(I)$ from (\ref{eqn:kkttheta}) by substituting in the values of $\widehat{\bc}(I)$ and $\widehat{\bg}_{\cS}(I)$, where $\cS^\perp$ is the complement of $\cS$ in $\{1,\ldots,p\}$.
\end{itemize}

Next, we verify the support recovery consistency. Denote $\bc^*(I) = \left( {c^*_1}\trans,\ldots,{c_{2p}^*}\trans \right)\trans$, where $c_j^*(I)=\mathbf{0}\in\R^r$ for $j\not\in\cS$, and $\bc^*_\cS(I)$ is defined by
\begin{equation*}
\bc_\cS^*(I) \equiv \argmin_{\bc_\cS\in\R^{|\cS|r}} \left\|\E[\bz_I] - \bSigma_{\cS}(I)\bc_{\cS}\right\|^2_2.
\end{equation*}

Note that the subgradient condition for the partial penalized likelihood (\ref{eqn:consttheta}) is 
\begin{equation*}
\begin{aligned}
2\{\bSigma_{\cS}(I)\}\trans[\bSigma_{\cS}(I)\widehat{\bc}_{\cS}(I)-\bz_I]+\tau|I|\widehat{\bg}_{\cS}(I) = \mathbf{0},
\end{aligned}
\end{equation*}  
which implies that
\begin{equation*}
\begin{aligned}
2\{\bSigma_{\cS}(I)\}\trans[\bSigma_{\cS}(I)\widehat{\bc}_{\cS}(I)-\bSigma_{\cS}(I)\bc^*_{\cS}(I)]+2\{\bSigma_{\cS}(I)\}\trans[\bSigma_{\cS}(I)\bc^*_{\cS}(I)-\bz_I]+\tau|I|\widehat{\bg}_{\cS}(I) = \mathbf{0}.
\end{aligned}
\end{equation*}
Define $\bR_{\cS}(I) \equiv 2|I|^{-1}\{\bSigma_{\cS}(I)\}\trans[\bSigma_{\cS}(I)\bc^*_{\cS}(I)-\bz_I]$.  Then,
\begin{equation}
\label{eqn:diftheta}
\widehat{\bc}_{\cS}(I)-\bc^*_{\cS}(I) = -|I|\left[2\{\bSigma_{\cS}(I)\}\trans\bSigma_{\cS}(I)\right]^{-1}(\bR_{\cS}(I)+\tau\widehat{\bg}_{\cS}(I) ).
\end{equation}
For each $j\in\cS$, denote the corresponding submatrix of $\bSigma_{\cS}(I)$ by $\bSigma_{j}(I)$. Then for $j \in \cS$,
\begin{equation}
\label{eqn:defofrk}
R_{j}(I) \equiv 2|I|^{-1}\{\bSigma_{j}(I)\}\trans [\bSigma_{\cS}(I)\bc^*_{\cS}(I) - \bz_I].
\end{equation}
By Lemma \ref{lem:conds3}, we have $\|R_{j}(I)\|_{2}\leq\eta_{R}$. Then,
\begin{equation}
\label{eqn:rs0}
\|\bR_{\cS}(I)\|_{2}\leq \eta_{R}\sqrt{|\cS|}.
\end{equation}
By Assumption \ref{network:dependency} and Claim 1 of \citet{RR07}, we have $\Lambda_{\min}\left(|I|^{-1}\bSigma_{\cS}(I)\trans \bSigma_{\cS}(I)\right)\ge C_{\min}/2$ for a large $n$. Henceforth,
\begin{equation*}
 \Lambda_{\max}\left\{|I|\left(2\bSigma_{\cS}(I)\trans \bSigma_{\cS}(I)\right)^{-1}\right\}\leq \frac{1}{C_{\min}}.
\end{equation*}
Note that for any $j\in \cS$, $\|\widehat{g}_{j}(I)\|_{2}\leq 1$, which implies that, 
\begin{equation}
\label{eqn:gs0hat}
\|\widehat{\bg}_{\cS}(I)\|_{2} \leq \sqrt{|\cS|}.
\end{equation}
Let $f^*_{\min} = \min_{j\in \cS}\|f_j(X_j)\|_{L_2(X_j)}$. We have that, 
\begin{equation*}
\begin{aligned}
\max_{j\in \cS}\|\Sigma_j(I)\widehat{c}_{j}(I)-\Sigma_j(I)c^*_{j}(I)\|_{L_2(X_j)} & \leq \|\Sigma_{\cS}(I)\widehat{\bc}_{\cS}(I)-\Sigma_{\cS}(I)\bc^*_{\cS}(I)\|_{L_2(X_j)} \\
& \leq c\left(\frac{\eta_R\sqrt{|\cS|}}{C_{\min}}+\frac{\tau\sqrt{|\cS|}}{C_{\min}}\right)\leq \frac{2}{3}f^*_{\min}.
\end{aligned}
\end{equation*}
where the second inequality is due to the fact that the random feature $\| \bPsi(x)\|_2$ is bounded, and the last inequality is due to Assumption \ref{network:minregeffect}, and $\tau=C_\tau\{[(\log p)/n]^{1/2}+n^{-\beta/(2\beta+1)}\} = O(\eta_R)$. 

Combining this result with Proposition \ref{prop:nullequivalence} implies that the oracle estimator $\widehat{\bc}(I)$ recovers the support $\cS$ exactly.  

Next, we verify the strict dual feasibility, i.e., 
\begin{equation*}
\max_{j\not\in \cS}|\widehat{g}_{j}(I)|<1,
\end{equation*}
which in turn implies that the oracle estimator $\widehat{c}_j$ satisfies the KKT conditions (\ref{eqn:kkttheta}) and (\ref{eqn:kkttg}).

For any $j\not\in \cS$, by (\ref{eqn:kkttheta}), we have, 
\begin{equation*}
\{\bSigma_{j}(I)\}\trans[\bSigma_{\cS}(I)\widehat{\bc}_{\cS}(I)-\bz_I]+\frac{1}{2}|I|\tau\widehat{g}_{j}(I) = 0,
\end{equation*}
which further implies that, 
\begin{equation*}
\begin{aligned}
2\{\bSigma_{j}(I)\}\trans[\bSigma_{\cS}(I)\widehat{\bc}_{\cS}(I)-\bSigma_{\cS}(I)\bc_{\cS}^*(I)]+2\{\bSigma_{j}(I)\}\trans[\bSigma_{\cS}(I)\bc^*_{\cS}(I)-\bz_I]+|I|\tau\widehat{g}_{j}(I) = 0.
\end{aligned}
\end{equation*}
By (\ref{eqn:diftheta}) and (\ref{eqn:defofrk}), we have, 
\begin{equation*}
\tau\widehat{g}_{j}(I) = \{\bSigma_{j}(I)\}\trans \bSigma_{\cS}(I)[\bSigma_{\cS}(I)\trans \bSigma_{\cS}(I)]^{-1}[\bR_{\cS}(I)+\tau\widehat{\bg}_{\cS}(I)]-R_{j}(I).
\end{equation*}
By Assumption \ref{network:incoherence}, we have that, 
\begin{equation*}
\max_{j\not\in \cS}\left\|\{\bSigma_{j}(I)\}\trans \bSigma_{\cS}(I)[\bSigma_{\cS}(I)\trans \bSigma_{\cS}(I)]^{-1}\right\|_{2}\leq \xi_{\bSigma}.
\end{equation*}
Then by (\ref{eqn:rs0}) and (\ref{eqn:gs0hat}), we have that
\begin{equation*}
|\widehat{g}_{j}(I)|\leq \frac{\xi_{\bSigma}\sqrt{|\cS|}+1}{\tau}\eta_{R}+\xi_{\bSigma}\sqrt{|\cS|},\quad j\not\in \cS.
\end{equation*}
By Assumption \ref{network:minregeffect} that
\begin{equation*}
 \frac{\xi_{\bSigma}\sqrt{|\cS|}+1}{\tau}\eta_{R}+\xi_{\bSigma}\sqrt{|\cS|}<1,
\end{equation*}
we obtain that, 
\begin{equation*}
|\widehat{g}_{j}(I)|<1, \quad\text{for any } j\not\in\cS.
\end{equation*} 

Finally, combining the above results, we have that, if $j\in\cS$, then $\widehat{\Pi}_j = 1$, and if  $j\not\in\cS$, then $\widehat{\Pi}_j = 0$. This completes the proof of Lemma \ref{lem:keyinfo}.
\end{proof}
\bigskip

%%%%%%%%%%%%%%%%%%%%%%%%%%%%%%%%%%%%%%%%%%%%%%%%%%%
The next lemma gives a bound similar to the deviation condition studied in \citet{loh2012, dai2021kernel}. The difference is that, the matrix $\bSigma(I)$ in (\ref{eqn:defofSigmaI}) involves the random feature mapping.

\begin{lemma}
\label{lem:conds3}
For $j=1,\ldots,p$, and subsampling $I\subset\{1,\ldots,n\}$ with $|I| =\lfloor n/2 \rfloor$, for $R_j(I)$ as defined in (\ref{eqn:defofrk}), we have that, 
\begin{equation*}
\|R_{j}(I)\|_{2}\leq \eta_{R}, \;\; \textrm{ where } \; 
\eta_{R} = O_p\left(n^{-\frac{\beta}{2\beta+1}} + \left(\frac{\log p}{n}\right)^{1/2} + r^{-1/2}\right).
\end{equation*}
\end{lemma}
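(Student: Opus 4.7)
The plan is to split $R_j(I)$ into a random-feature approximation error and a stochastic noise term, bound each, and then take a supremum over the $r$-dimensional output direction and a union bound over $j=1,\ldots,p$. Let $f^*(\bx_i)=\sum_{k\in\cS}f_k(x_{i,k})$ and $\bar f^*(I)=|I|^{-1}\sum_{i\in I}f^*(\bx_i)$; then the centered response satisfies $z_{I,i}=[f^*(\bx_i)-\bar f^*(I)]+[\epsilon_i-\bar\epsilon(I)]$, and
\begin{equation*}
R_j(I)=\frac{2}{|I|}\bSigma_j(I)^{\top}\bigl[\bSigma_{\cS}(I)\bc^*_{\cS}(I)-(f^*(\bx_{\cdot})-\bar f^*(I)\mathbf{1})\bigr]-\frac{2}{|I|}\bSigma_j(I)^{\top}(\boldsymbol{\epsilon}_I-\bar\epsilon(I)\mathbf{1})\;\equiv\;T_1+T_2.
\end{equation*}

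For $T_1$, the projection defining $\bc^*_{\cS}(I)$ is the $L_2$-best linear combination of the $2r|\cS|$ random features used to fit the centered $f^*$. By the random feature approximation bounds of \citet{RR07} and \citet{rudi2017generalization}, the empirical residual vector $\bSigma_{\cS}(I)\bc^*_{\cS}(I)-(f^*(\bx_{\cdot})-\bar f^*(I)\mathbf{1})$ has $\ell^2$-norm of order $\sqrt{|I|}\,r^{-1/2}$ with high probability. Since the entries of $\bSigma_j(I)$ are bounded (because $|\psi_\nu(x)|\leq\sqrt{2/r}$ makes $\|\bPsi(x_{i,j})\|_2$ of constant order), Cauchy--Schwarz gives $\|T_1\|_2=O_p(r^{-1/2})$, uniformly in $j$.

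For $T_2$, I would pass to the variational formulation
\begin{equation*}
\|T_2\|_2=\sup_{a\in\R^r,\,\|a\|_2=1}\frac{2}{|I|}\sum_{i\in I}\bigl[\bPsi(x_{i,j})^{\top}a\bigr]\bigl(\epsilon_i-\bar\epsilon(I)\bigr),
\end{equation*}
and for each fixed $a$ recognize $g(\bx)=\bPsi(x_j)^{\top}a\in\Hcal_1$, so Lemma \ref{lem:bdonradamacher1} applies. The centering $\bar\epsilon(I)$ only adds a lower-order $O_p(n^{-1/2})$ correction. Setting $c_2=1$ in that lemma gives a scalar bound of order $n^{-\beta/(2\beta+1)}+\sqrt{(1+c_1)(\log p)/n}$. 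To promote the pointwise bound to a supremum over the unit sphere in $\R^r$ I would deploy a standard $\tfrac{1}{2}$-net argument of cardinality $\exp(Cr)$, replacing $c_1\log p$ by $c_1\log p+Cr$ in the failure probability, and then take a union bound over $j=1,\ldots,p$. Under the scaling $r\asymp n^{2\beta/(2\beta+1)}$ assumed in Theorem \ref{thm:optimalrecovery}, the extra $\sqrt{r/n}$ contribution from the net is absorbed into the $n^{-\beta/(2\beta+1)}$ and $r^{-1/2}$ terms.

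The main obstacle is precisely this transition from the scalar empirical-process bound of Lemma \ref{lem:bdonradamacher1} to a bound on the $\R^r$-valued vector $R_j(I)$: one must verify that the covering-number penalty for the sphere in $\R^r$ interacts cleanly with the multi-term bound (in particular with the $c_2^{\pm 4\beta/(2\beta\pm 1)}$ coefficients), and that the resulting rate still collapses to the minimax order $n^{-\beta/(2\beta+1)}+\sqrt{(\log p)/n}+r^{-1/2}$ once $r$ is chosen as in Theorem \ref{thm:optimalrecovery}. Combining the two bounds and applying a union bound over the $p$ candidate indices then yields $\|R_j(I)\|_2\leq\eta_R$ simultaneously for all $j$, completing the proof.
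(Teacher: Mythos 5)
Your decomposition of $R_j(I)$ into an approximation term $T_1$ and a noise term $T_2$ is the same one the paper uses (the paper splits around $\E[\bz_I]$, which equals $f^*(\bx_\cdot)-\bar f^*(I)\mathbf{1}$ conditionally on the design, so the two splits coincide), and your treatment of $T_1$ via boundedness of the features, Cauchy--Schwarz, and the \citet{RR07}/\citet{rudi2017generalization} random-feature bounds is exactly what the paper does to get $O_p(r^{-1/2})$. You are also right to flag that Lemma \ref{lem:bdonradamacher1} is a scalar empirical-process bound while $T_2\in\R^r$: the paper simply cites the lemma and writes $\Delta_2^2=O_p(n^{-2\beta/(2\beta+1)}+(\log p)/n)$ in one line, which does gloss over the vector nature of the quantity.

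However, the remedy you propose does not close this gap. An $\tfrac12$-net of the unit sphere in $\R^r$ has cardinality $\exp(Cr)$, so the union-bound penalty it introduces is of order $\sqrt{r/n}$. Under the scaling $r\asymp n^{2\beta/(2\beta+1)}$ this is $n^{-1/(2(2\beta+1))}$, which is \emph{strictly larger} than the target rate $n^{-\beta/(2\beta+1)}=n^{-2\beta/(2(2\beta+1))}$ whenever $\beta>1/2$; since the paper assumes $\beta>1$, the net contribution is not absorbed but dominates, so the argument as written would deliver a worse bound than $\eta_R$. The covering argument is in fact unnecessary here: because $\|\bPsi(x)\|_2\le\sqrt{2}$ uniformly (the $\sqrt{2/r}$ normalization in \eqref{eqn:fouriermc} already tames the growth in $r$), one has $\|\bSigma_j(I)\|_F^2\le 2|I|$, and conditionally on the design $T_2$ is a Gaussian vector with $\E\|T_2\|_2^2\le 8\sigma^2/|I|$. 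Gaussian concentration plus a union bound over $j=1,\ldots,p$ then gives $\|T_2\|_2=O_p(\sqrt{(\log p)/n})$ directly, which is even sharper than the paper's stated $\Delta_2$ bound and slots cleanly into $\eta_R$. I would replace the net argument with this direct conditional-Gaussian computation.
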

\noindent
\begin{proof}
We first note that, 
\begin{equation}
\label{eqn:Gdecomp}
\begin{aligned}
&|I|^{-1}\left\||\{\bSigma_{j}(I)\}\trans \bz_I- \{\bSigma_{j}(I)\}\trans \bSigma_{\cS}(I)\bc^*_{\cS}(I) \right\|_{2} \\
\leq \; & |I|^{-1}\left\|\{\bSigma_{j}(I)\}\trans \E[\bz_I] -   \{\bSigma_{j}(I)\}\trans \bSigma_{\cS}(I)\bc^*_{\cS}(I)\right\|_{2} + |I|^{-1}\left\|\{\bSigma_{j}(I)\}\trans (\bz_I-\E[\bz_I])\right\|_{2} \\\equiv \; & \Delta_1 + \Delta_2.
\end{aligned}
\end{equation}
We next bound the two terms $\Delta_1, \Delta_2$ in (\refeq{eqn:Gdecomp}), respectively.

For $\Delta_1$, by the Cauchy-Schwarz inequality, we have that, for some constant $C_1$, 
\begin{equation*}
\begin{aligned}
\Delta^2_1 \leq \frac{\left\| \bSigma_{j}(I) \right\|_{2}^2}{|I|} \cdot\frac{\left\| \E[\bz_I]- \bSigma_{\cS}(I)\bc^*_{\cS}(I)\right\|_{2}^2}{|I|}  \leq C_1\frac{\left\| \E[\bz_I]-\bSigma_{\cS}(I)\bc^*_{\cS}(I)\right\|_{2}^2}{|I|} 
= O_p\left(r^{-1}\right),
\end{aligned}
\end{equation*}
due to the boundedness of the random features (\ref{eqn:fouriermc}), and Theorem 5 of \citet{rudi2017generalization}.

For $\Delta_2$, by Lemma \ref{lem:bdonradamacher1}, we have that, 
\begin{equation*}
\begin{aligned}
\Delta^2_2 = O_p\left( n^{-\frac{2\beta}{2\beta+1}} + \frac{\log p}{n} \right).
\end{aligned}
\end{equation*}

Combining the bounds for $\Delta_1$ and $\Delta_2$, we obtain that, 
\begin{equation*}
\left\||\{\bSigma_{j}(I)\}\trans \bz_I- \{\bSigma_{j}(I)\}\trans \bSigma_{\cS}(I)\bc_{\cS} \right\|_{2} = O_p\left( n^{-\frac{\beta}{2\beta+1}} + \left(\frac{\log p}{n}\right)^{1/2} + r^{-1/2} \right),
\end{equation*}
which completes the proof of Lemma \ref{lem:conds3}.
\end{proof}

%%%%%%%%%%%%%%%%%%%%%%%%%%%%%%%%%%%%%%%%%%%%%%%%%%%
\section{Additional numerical results}
\label{sec:add-numerical}

We report some additional numerical results, including a sensitivity analysis to the choice of the kernel function and the tuning parameters, the performance under the varying correlation level and the sparsity level, and a parallelization experiment.

%%%%%%%%%%%%%%%%%%%%%%%%%%%%%%%%%%%%%%%%%%%%%%%%%%%
\subsection{Sensitivity analysis to the choice of kernel}
\label{sec:sensivity-kernel}

We first carry out a sensitivity analysis to investigate the robustness of the kernel knockoffs to the choice of kernel function. We consider three commonly used kernel functions: the Laplacian kernel, the Gaussian kernel, and the Cauchy kernel \citep{scholkopf2002learning, RR07, buazuavan2012fourier}, 
\begin{eqnarray*}
\textrm{Laplacian kernel}: & & K(X, X') = c_1 e^{-|X-X'|/b_1}, \\
\textrm{Gaussian kernel}: & & K(X, X') = c_2 e^{-b_2^2|X-X'|^2/2}, \\
\textrm{Cauchy kernel}:    & & K(X, X') = c_3 (1+ b_3^2|X-X'|^2)^{-1},
\end{eqnarray*}
where $c_1,c_2, c_3$ are the normalization constants, and $b_1,b_2, b_3$ are the scaling parameters. 

We consider the simulation example in Section \ref{sec:varysamplesim}, with two forms of the component function, the trigonometric polynomial function \eqref{eq: sinpoly}, and the sin-ratio function \eqref{eq: sinratio}. We set the sample size $n=900$, the dimension $p = 50$, the sparsity level $|\mathcal S| = 10$, the signal strength $\theta = 100$, and simulate the predictors from a multivariate normal distribution with mean zero and covariance $\Sigma_{ij}=0.3^{|i-j|}$. We tune $r$ and $\tau$ following the criteria in Section \ref{sec:tuning}, and set $L = 300$ and $n_s = \lfloor n/2 \rfloor$. We set the target FDR level at $q=0.2$. Figure \ref{fig: kernel} reports the FDR and power based on 200 data replications. It is seen that the performance of the kernel knockoffs method is relatively robust to different kernel choices. We choose the Laplacian kernel for all other simulations. 

\begin{figure}[t!]
\centering
\includegraphics[width=3.0in,height=\myfigheight]{./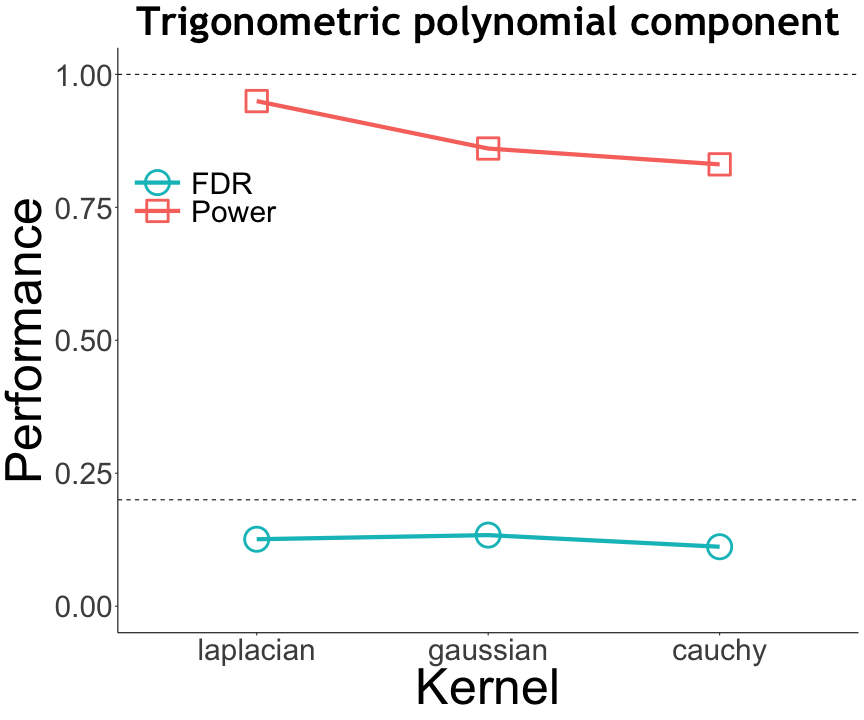}
\includegraphics[width=3.0in,height=\myfigheight]{./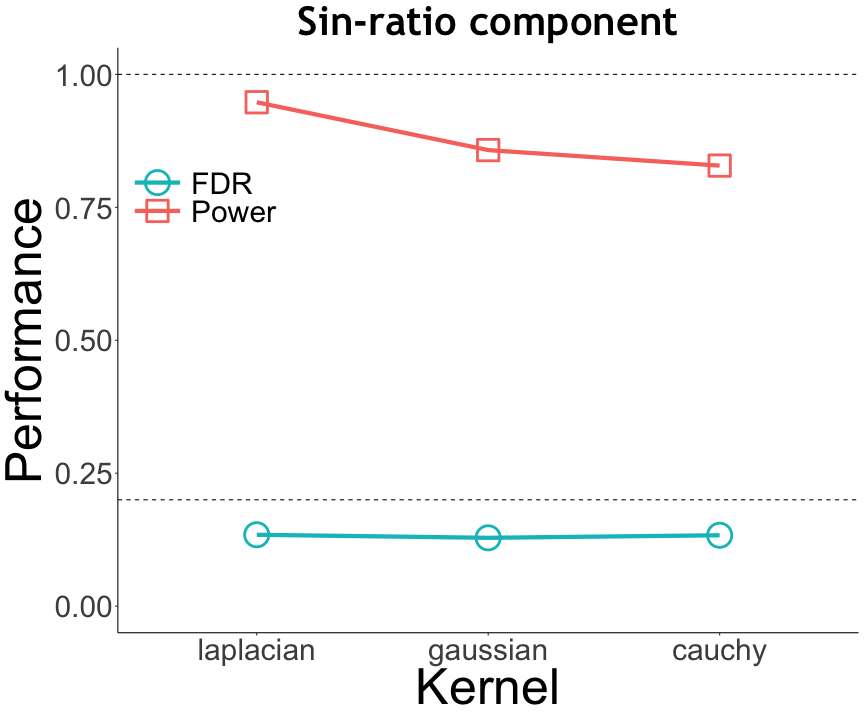}
\caption{Sensitivity analysis to the choice of kernel.}
\label{fig: kernel}
\end{figure}

%%%%%%%%%%%%%%%%%%%%%%%%%%%%%%%%%%%%%%%%%%%%%%%%%%%
\subsection{Sensitivity analysis to the tuning parameters}
\label{sec:sensivity-tuning}

We next carry out a sensitivity analysis of the kernel knockoffs method with respect to the tuning parameters. Specifically, we tune the number of random features $r$ and the regularization parameter $\tau$ following the criteria in Section \ref{sec:tuning}, and we study here the sensitivity to the choice of the number of subsampling replications $L$, and the subsampling sample size $n_s$. 
  
\begin{figure}[t!]
\centering
\begin{tabular}{cc}
{\small $L$} & 
{\small $n_s$} \\
\includegraphics[width=3.0in,height=\myfigheight]{./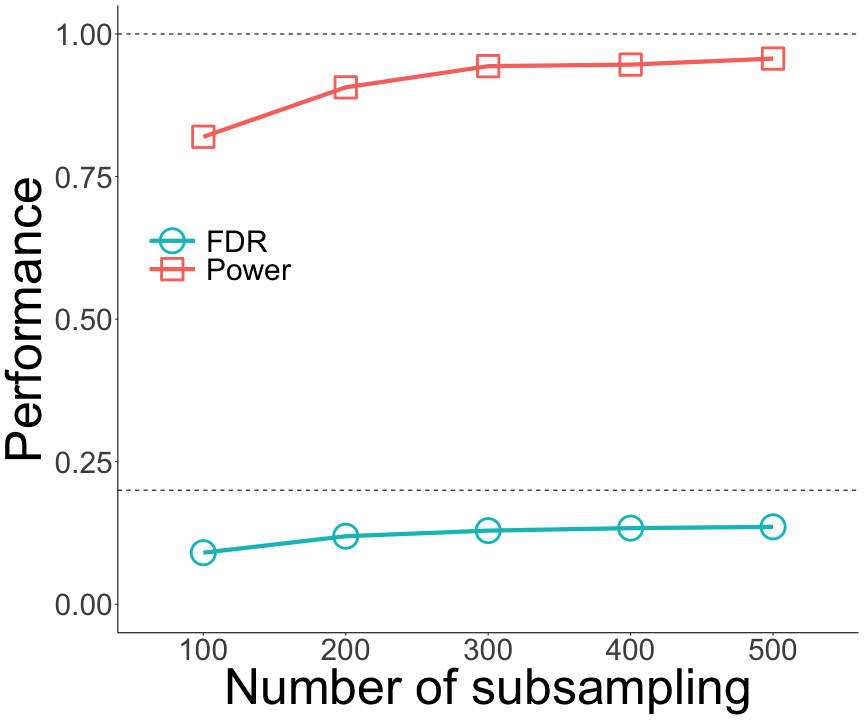} &
\includegraphics[width=3.0in,height=\myfigheight]{./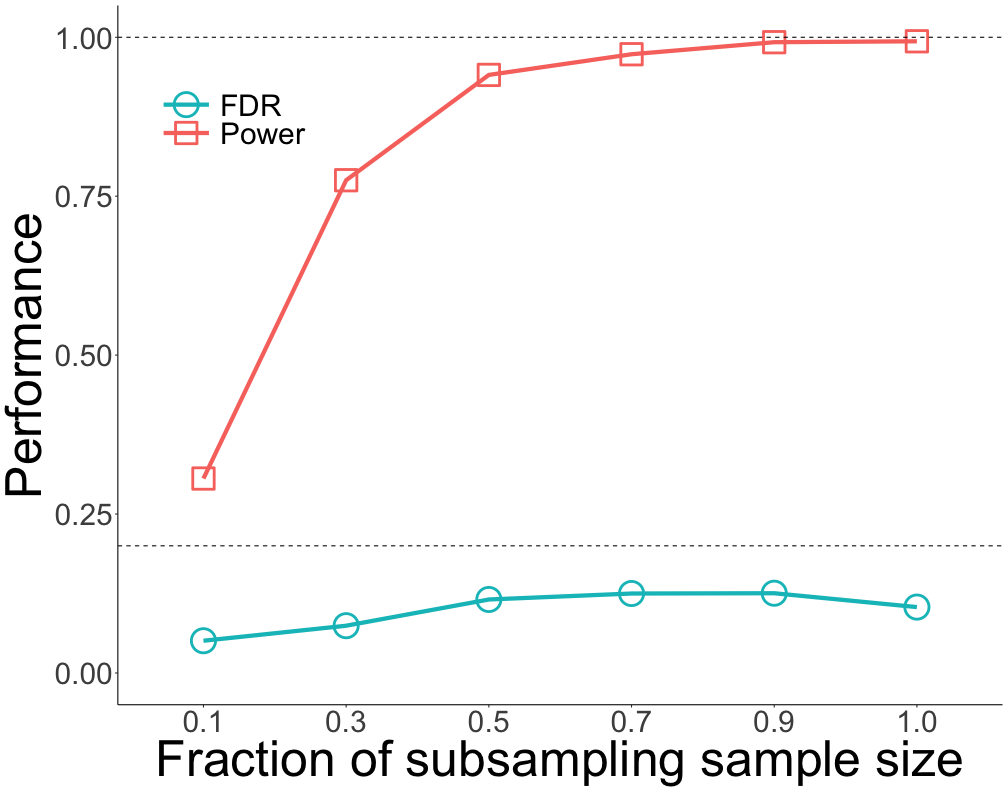} 
\end{tabular}
\caption{Sensitivity analysis to the choice of the tuning parameters.}
\label{fig: robust_tune}
\end{figure} 

For $L$, we adopt the same simulation setup as in Section \ref{sec:sensivity-kernel} with the trigonometric polynomial function \eqref{eq: sinpoly}. We vary $L = \{100,200,300,400,500\}$, and set $n_s = \lfloor n/2 \rfloor$. Figure \ref{fig: robust_tune} (left panel) reports the FDR and power with the varying $L$ based on 200 data replications. It is seen that the method performs slightly better with a larger $L$, but the performance is relatively robust. Since a larger $L$ implies a heavier computation, we set $L = 300$ for all other simulations. 

For $n_s$, we adopt the same simulation setup as in Section \ref{sec:sensivity-kernel} with the trigonometric polynomial function \eqref{eq: sinpoly}. We vary the fraction $n_s / n = \{0.1, 0.3, 0.5, 0.7, 0.9, 1\}$ and set $L = 300$. Figure \ref{fig: robust_tune} (right panel) reports the FDR and power with the varying fraction based on 200 data replications. It is seen that the FDR is consistently below the nominal level, and the power becomes stable once $n_s$ is no smaller than $\lfloor n/2 \rfloor$. We set $n_s = \lfloor n/2 \rfloor$ for all other simulations.

%%%%%%%%%%%%%%%%%%%%%%%%%%%%%%%%%%%%%%%%%%%%%%%%%%%
\subsection{Predictor correlation}
\label{sec:correlation}

In the simulations so far, we have considered to simulate the predictors from a multivariate normal distribution with mean zero and covariance $\Sigma_{ij}=\rho^{|i-j|}$, where we fix $\rho=0.3$. Next, we consider more correlation values for $\rho$. Specifically, we adopt the same simulation setup as in Section \ref{sec:sensivity-kernel} with the trigonometric polynomial function \eqref{eq: sinpoly}. We vary $\rho=\{0.3,0.4,0.5,0.6,0.7,0.8\}$. Figure \ref{fig: dependence} reports the FDR and power based on 200 data replications. It is seen that our method maintains the FDR control and a good power across different values of the correlation $\rho$. 

\begin{figure}[b!]
\centering
\includegraphics[width=3.0in,height=\myfigheight]{./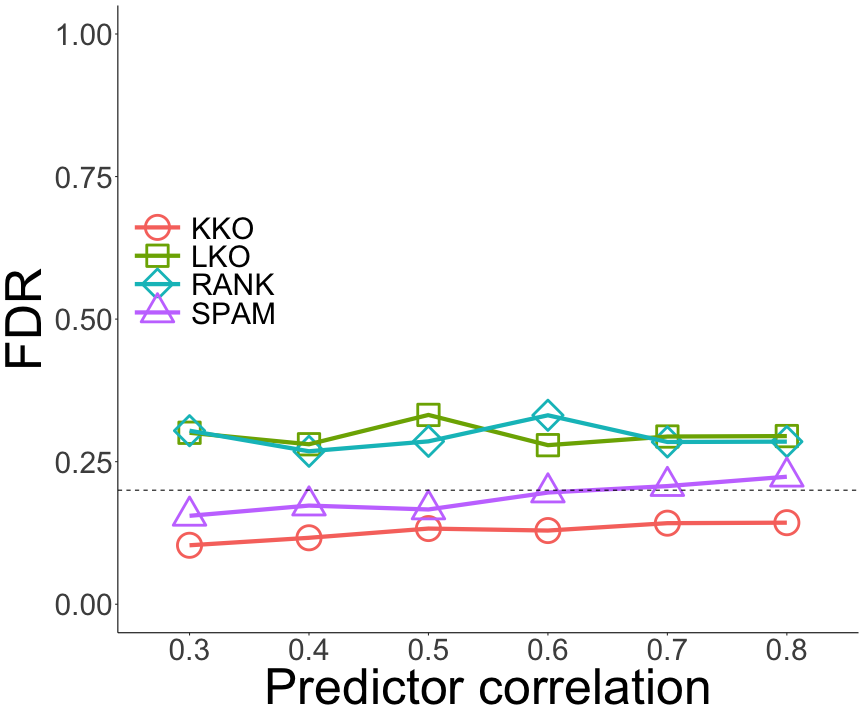}
\includegraphics[width=3.0in,height=\myfigheight]{./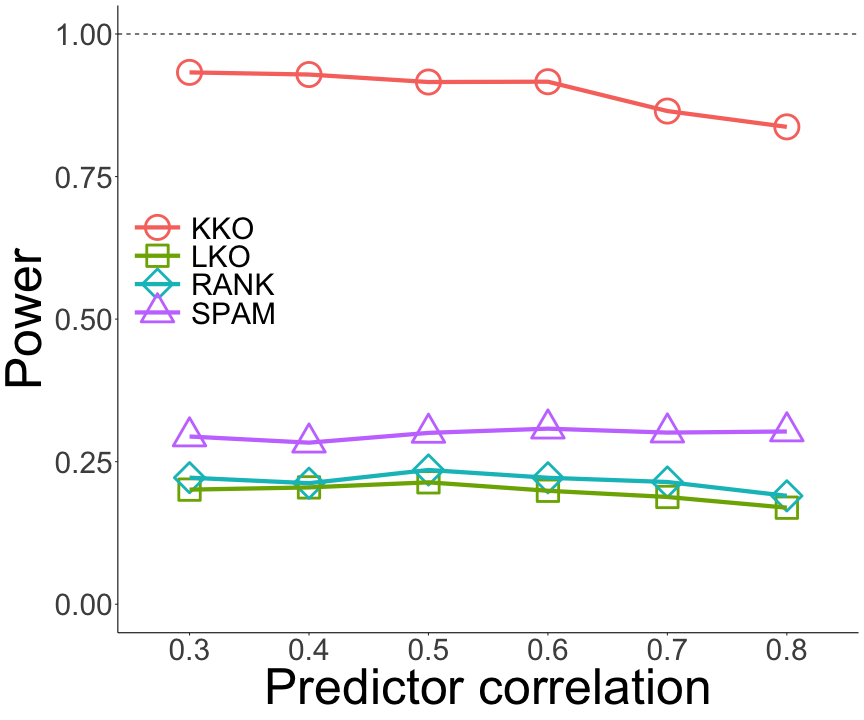}
\caption{Empirical performance and comparison in terms of FDR and power with the varying predictor correlation $\rho$. The same four methods as in Figure \ref{fig: robust_distribution_SNR} are compared.}
\label{fig: dependence}
\end{figure}

%%%%%%%%%%%%%%%%%%%%%%%%%%%%%%%%%%%%%%%%%%%%%%%%%%%
\subsection{Sparsity level}
\label{sec:sparisty-level}

In the power analysis in Section \ref{sec:poweranalysis}, we require that the number of nonzero component functions $|\cS|$ to be bounded in Assumption \ref{assump:complexity}. Next, we carry out a numerical experiment and show empirically that our method still works reasonably well when the number of nonzero components $| \cS |$ increases along with the sample size $n$.  

Specifically, we adopt the same simulation setup as in Section \ref{sec:sensivity-kernel} with the trigonometric polynomial function \eqref{eq: sinpoly}. We vary $(|\cS|, n) = \{ (10, 900), (20, 1800), (30, 2700), (40, 3600) \}$. Figure \ref{fig: robust_sparsity} reports the FDR and power based on 200 data replications. It is seen that our method maintains a reasonably good performance, and still outperforms all the alternative solutions when $|\cS|$ increases along with $n$. 

\begin{figure}[t!]
\centering
\includegraphics[width=3.0in,height=\myfigheight]{./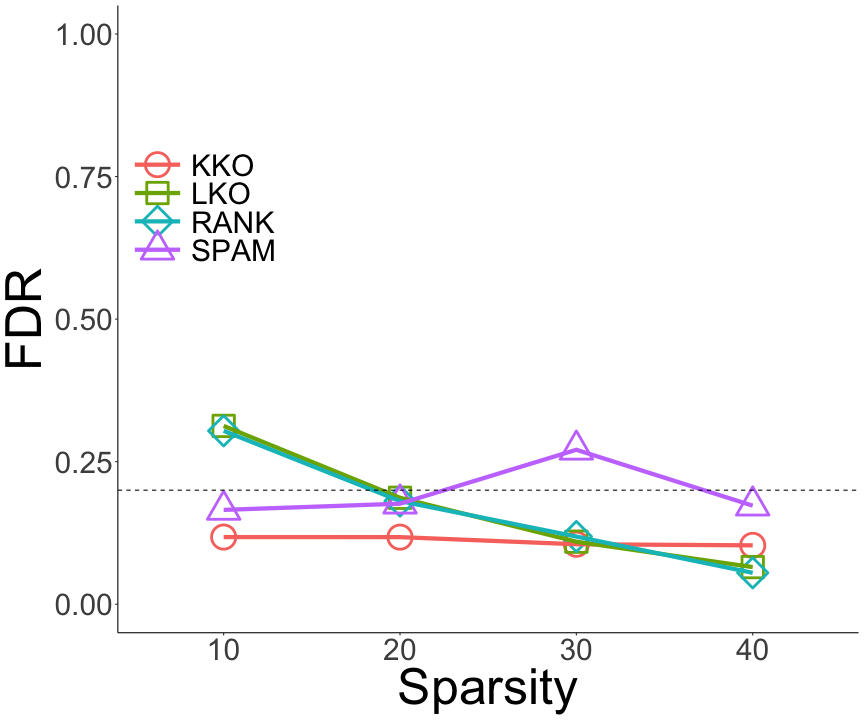}
\includegraphics[width=3.0in,height=\myfigheight]{./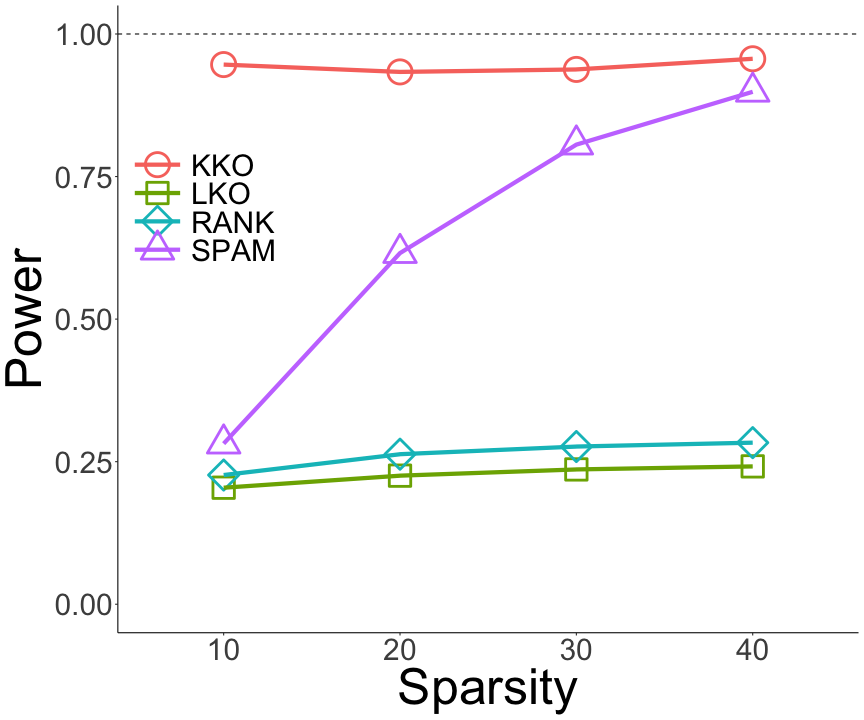}
\caption{Empirical performance and comparison in terms of FDR and power with the varying number of nonzero component functions $|\cS|$. The same four methods as in Figure \ref{fig: robust_distribution_SNR} are compared.}
\label{fig: robust_sparsity}
\end{figure}

%%%%%%%%%%%%%%%%%%%%%%%%%%%%%%%%%%%%%%%%%%%%%%%%%%%
\subsection{Computation parallelization}
\label{sec:parallelization}

We note that the computation of our method can be easily parallelized. This is because it requires no information sharing across different subsamples, and thus Steps 2 to 4 in Algorithm \ref{alg: kernel_KO} can be parallelized. We next carry out a numerical experiment to further investigate the computational acceleration by parallelization.

\begin{figure}[t!]
\centering
\includegraphics[width=3.0in,height=\myfigheight]{./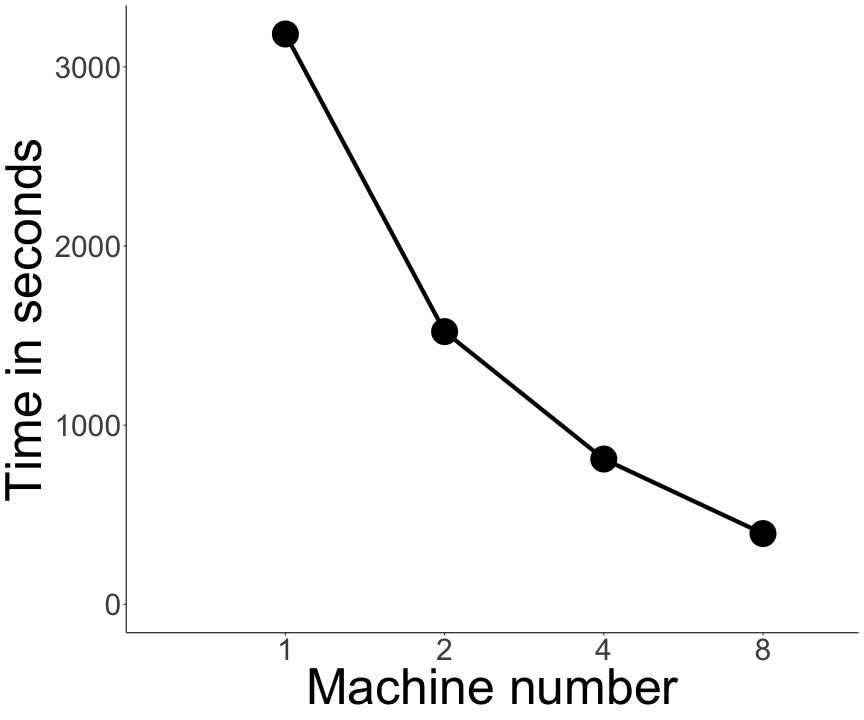}
\caption{Average computation time, in seconds, of Algorithm \ref{alg: kernel_KO} with the varying number of parallel machines. The x-axis is rescaled by base-2 logarithm.}
\label{fig: parallel_effect}
\end{figure} 

Specifically, we adopt the same simulation setup as in Section \ref{sec:sensivity-kernel} with the trigonometric polynomial function \eqref{eq: sinpoly}. We parallelize the $L$ subsamples to a total of $M$ machines using the R package \texttt{foreach}. Each machine independently runs Steps 2 to 4 of Algorithm  \ref{alg: kernel_KO} on the allocated subsamples, then the results are aggregated from all machines to perform Steps 5 to 6 of Algorithm \ref{alg: kernel_KO}. We use the \texttt{c4.4xlarge} instance in Amazon Web Services (AWS). We vary the number of machines $M = \{1,2,4,8\}$. Figure \ref{fig: parallel_effect} reports the average computation time, in seconds, of running the entire Algorithm \ref{alg: kernel_KO} for one data replication, while the results are averaged over 200 data replications. It is seen that  the computation time reduces significantly as the number of machines increases.

%%%%%%%%%%%%%%%%%%%%%%%%%%%%%%%%%%%%%%%%%%%%%%%%%%%
\subsection{Alternative importance scores}
\label{sec:importance-stat}

Finally, we investigate some alternative importance scores studied in \citet{candes2018panning}, and compare them with our importance score in the kernel knockoffs method that is built on the selection probability of the variables and their knockoffs across multiple subsamples.  

Specifically, let $\delta_j$ and $\tilde{\delta}_j$ denote some importance measure of variable $X_j$ and the knockoff variable $\tilde{X}_j$, respectively. \citet[Section 4.1]{candes2018panning} considered three importance scores: 
\begin{eqnarray*}
\textrm{Coefficient difference (CD)}: & & |\delta_j| - |\tilde{\delta}_j |, \\
\textrm{Log-coefficient difference (log-CD)}:    & & \log(|\delta_j|) -\log(|\tilde{\delta}_j|), \\
\textrm{Signed max (SM)}: & & \textrm{sign} (|\delta_j| - |\tilde{\delta}_j | ) \max \left\{ |\delta_j|, |\tilde{\delta}_j | \right\},
\end{eqnarray*}
We adopt the same simulation example as in Section \ref{sec:varysamplesim} with the trigonometric polynomial function \eqref{eq: sinpoly} and the varying signal strength $\theta$. Following \citet{candes2018panning}, we choose the Euclidean norm of the estimated component function as $\delta_j$ for the coefficient difference score, whereas we choose the penalty value at which the variable enters the model as $\delta_j$ for the log-coefficient difference score and the signed max score. Moreover, since \citet{candes2018panning} did not use subsampling for stability, all these scores are computed based on a single random feature expansion. Figure \ref{fig: modelx} reports the FDR and power of the selection results based on these three importance scores as well as our kernel knockoff score for $200$ data replications. It is clearly seen that our importance score performs much better than those alternative scores. In particular, the coefficient difference score and the log-coefficient difference score can not control the FDR below the nominal level. The signed max score can control the FDR, but its power is much lower than the power of our kernel knockoffs method. 

\begin{figure}[t!]
\centering
\includegraphics[width=3.0in,height=\myfigheight]{./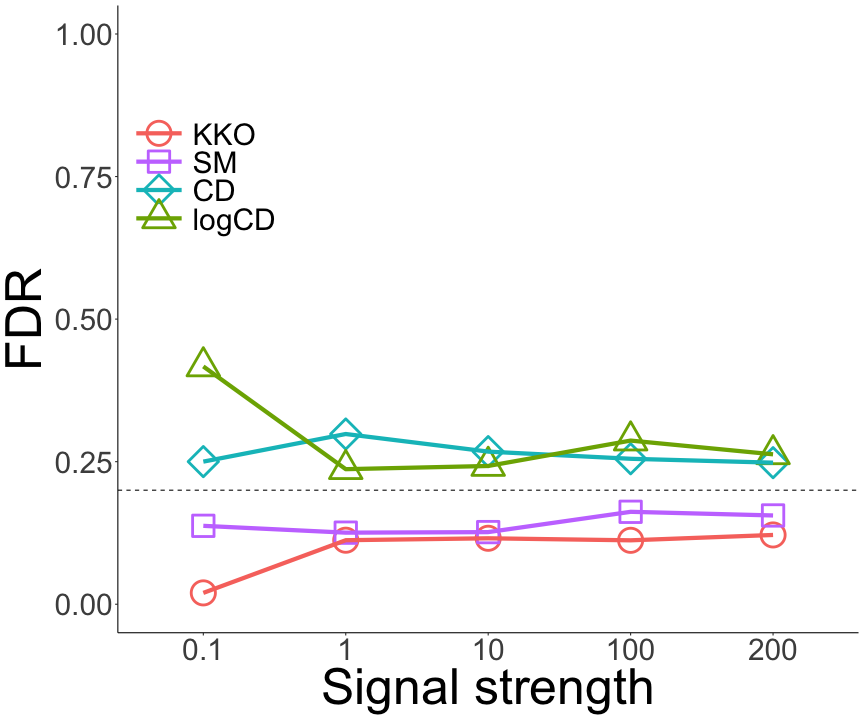}
\includegraphics[width=3.0in,height=\myfigheight]{./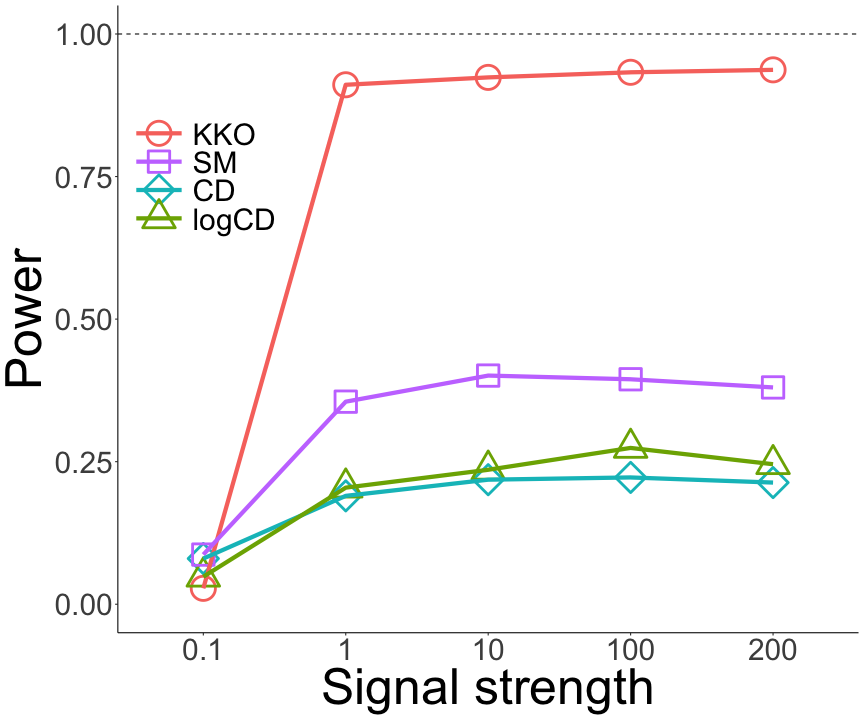}
\caption{Empirical performance and comparison in terms of FDR and power for different importance score functions: our kernel knockoff score (KKO), the coefficient difference score (CD), the log-coefficient difference score (log-CD), and the signed max score (SM).}
\label{fig: modelx}
\end{figure}

\end{document}